\algnewcommand\algorithmicparameter{\textbf{Parameter:}}
\algnewcommand\Parameter{\item[\algorithmicparameter]}
\algnewcommand\algorithmicinput{\textbf{Input:}}
\algnewcommand\Input{\item[\algorithmicinput]}
\algnewcommand\algorithmicoutput{\textbf{Output:}}
\algnewcommand\Output{\item[\algorithmicoutput]}
\algnewcommand\algorithmicoracle{\textbf{Oracle Access:}}
\algnewcommand\Oracle{\item[\algorithmicoracle]}
\algnewcommand{\algrule}{\par\vskip.5\baselineskip\hrule height .4pt\par\vskip.5\baselineskip}
\algnewcommand{\iIf}[1]{\State\algorithmicif\ #1\ \algorithmicthen}
\algnewcommand{\iEndIf}{\unskip\ \algorithmicend\ \algorithmicif}
\algnewcommand{\iWhile}[1]{\State\algorithmicwhile\ #1\ \algorithmicdo}
\algnewcommand{\iEndWhile}{\unskip\ \algorithmicend\ \algorithmicwhile}
\algnewcommand{\iFor}[1]{\State\algorithmicfor\ #1\ \algorithmicdo}
\algnewcommand{\iEndFor}{\unskip\ \algorithmicend\ \algorithmicfor}
\let\from\colon
\let\with\colon
\let\epsilon\varepsilon
\let\phi\varphi
\DeclarePairedDelimiter{\set}{\lbrace}{\rbrace}
\DeclarePairedDelimiterX{\bagelem}[2]{\langle}{\rangle}{#1,#2}
\DeclareMathOperator{\ar}{ar}
\DeclareMathOperator{\lc}{lc}
\DeclarePairedDelimiter{\sem}{\llbracket}{\rrbracket}
\newcommand*{\?}{\:\cdot\:}
\DeclareMathOperator{\adom}{adom}
\newcommand*{\NN}{\mathbb N}
\newcommand*{\RR}{\mathbb R}
\newcommand*{\EXPECTATION}{\mathsf{EXPECTATION}}
\newcommand*{\VARIANCE}{\mathsf{VARIANCE}}
\newcommand*{\PQE}{\mathsf{PQE}}
\newcommand*{\PQEset}{\PQE\sp{\mathsf{set}}}
\newcommand*{\sharpSUBSETSUM}{\sharp\mathsf{SUBSETSUM}}
\newcommand*{\@complexitystyle}[1]{\mathsf{#1}}
\newcommand*{\sharpP}{\sharp\@complexitystyle{P}}
\newcommand*{\FP}{\@complexitystyle{FP}}
\newcommand*{\PTIME}{\@complexitystyle{PTIME}}
\DeclareMathOperator*{\E}{E}
\DeclareMathOperator*{\Var}{Var}
\DeclareMathOperator*{\Cov}{Cov}
\newcommand*{\rep}{\mathrm{Rep}}
\newcommand*{\tabs}{\mathbf{T}}
\newcommand*{\true}{\mathsf{true}}
\newcommand*{\false}{\mathsf{false}}
\newcommand*{\DD}{\mathbb{D}}
\newcommand*{\D}{\ensuremath{\mathcal{D}}}
\DeclarePairedDelimiter{\cod}{\langle}{\rangle}
\DeclarePairedDelimiter{\size}{\lvert}{\rvert}
\DeclareMathOperator{\sg}{at}
\newcommand*{\spacestyle}[1]{\mathbb{#1}}
\newcommand*{\Univ}{\mathrm{dom}}
\newcommand*{\Vars}{\spacestyle{V}}
\renewcommand*{\#}{\ensuremath{\sharp}}
\newcommand{\tup}[1]{\boldsymbol{\mathbf{#1}}}
\newcommand*{\inflate}{\mathsf{inflate}}
\newcommand*{\solveComponent}{\mathsf{solveComponent}}
\DeclarePairedDelimiterX{\dbraces}[1]{\lbrace}{\rbrace}{%
	\nbrace{\lbrace}{#1}\delimsize\lbrace\mathopen{}%
	#1%
	\mathclose{}\delimsize\rbrace\nbrace{\rbrace}{#1}%
}
\newcommand{\dummydelim}[2]{$\left#1\vphantom{#2}\right.$}
\newcommand{\nbrace}[2]{\sbox0{\dummydelim{#1}{#2}}\hspace{\the\dimexpr -0.85\wd0 + 2pt\relax}}
\def\bag{\@ifstar\@bag\@@bag}
\def\@bag#1{\dbraces{\smash{#1}}}
\def\@@bag#1{\dbraces*{#1}}
\newcommand{\Bags}[3][\@nil]{%
	\def\tmp{#1}%
	\ifx\tmp\@@nil%
		\dparens*{\smash{\begin{smallmatrix}#2\\#3\end{smallmatrix}}}%
	\else%
		\dparens*{\begin{smallmatrix}#2\\#3\end{smallmatrix}}%
	\fi%
}
\newcommand{\paramproblem}[4]{%
    \begin{center}
        \vspace{\baselineskip}%
        \begin{tabular}{p{.275\linewidth} p{.6\linewidth} }
            \multicolumn{2}{l}{\textbf{Problem} \enspace #1}      \\\toprule
            \textsc{Parameter:} &   #2  \\
            \textsc{Input:}     &   #3  \\
            \textsc{Output:}    &   #4  \\\bottomrule
        \end{tabular}%
        \vspace{\baselineskip}
    \end{center}
}
\definecolor{myblue}{HTML}{3D6288}
\definecolor{myred}{HTML}{9F1D2B}
\definecolor{mygreen}{HTML}{5B892D}
\newtheorem{theorem}{Theorem}[section]
\newtheorem{lemma}[theorem]{Lemma}
\newtheorem{proposition}[theorem]{Proposition}
\newtheorem{corollary}[theorem]{Corollary}
\theoremstyle{definition}
\theoremstyle{remark}
\newtheorem{example}[theorem]{Example}
\newtheorem{remark}[theorem]{Remark}
\newenvironment{sketch}{\begin{proof}[Proof Sketch]}{\end{proof}}
\theoremstyle{definition}
\newtheorem{definition2}[theorem]{Definition}
\DeclareMathOperator{\Bern}{Bernoulli}
\DeclareMathOperator{\Binom}{Binomial}
\DeclareMathOperator{\Geom}{Geometric}
\newcommand*{\strBern} {\mathtt{Bernoulli}}
\newcommand*{\strBinom}{\mathtt{Binomial}}
\newcommand*{\strGeom} {\mathtt{Geometric}}
\newcommand*{\strPoiss}{\mathtt{Poisson}}
\newcommand*{\probs}{\mathbf P}
\DeclareMathOperator{\zeroPr}{\mathsf{zeroPr}}
\DeclareMathOperator{\Count}{Count}
\DeclareMathOperator{\Sum}{Sum} \title{{\LARGE{}Probabilistic Query Evaluation with Bag Semantics}}
\author[1]{Martin Grohe}
\author[2]{Peter Lindner}
\author[1]{Christoph Standke}
\affil[1]{\textit{\{grohe,standke\}@informatik.rwth-aachen.de},\protect\\RWTH Aachen University, Aachen, Germany}
\affil[2]{\textit{peter.lindner@epfl.ch},\protect\\École Polytechnique Fédérale de Lausanne (EPFL), Lausanne, Switzerland}
\date{\large{}\today}
\begin{document}

\allowdisplaybreaks

\maketitle %

\begin{abstract}
        We study the complexity of evaluating queries on probabilistic databases under bag semantics.
    We focus on self-join free conjunctive queries, and probabilistic databases where occurrences of different facts are independent, which is the natural generalization of tuple-independent probabilistic databases to the bag semantics setting. For set semantics, the data complexity of this problem is well understood, even for the more general class of unions of conjunctive queries: it is either in polynomial time, or $\sharpP$-hard, depending on the query (Dalvi \&{} Suciu, JACM 2012).

    A reasonably general model of bag probabilistic databases may have unbounded multiplicities. In this case, the probabilistic database is no longer finite, and a careful treatment of representation mechanisms is required.
    Moreover, the answer to a Boolean query is a probability distribution over %
    (possibly \emph{all})
    non-negative integers, rather than a probability distribution over $\set{ \true, \false }$. Therefore, we discuss two flavors of probabilistic query evaluation: computing expectations of answer tuple multiplicities, and computing the probability that a tuple is contained in the answer at most $k$ times for some parameter $k$. Subject to mild technical assumptions on the representation systems, it turns out that expectations are easy to compute, even for unions of conjunctive queries. For query answer probabilities, we obtain a dichotomy between solvability in polynomial time and $\sharpP$-hardness for self-join free conjunctive queries.
\end{abstract}

\section{Introduction}\label{sec:intro}

Probabilistic databases (PDBs) provide a framework for managing uncertain data. In database theory, they have been intensely studied since the late 1990s \cite{Suciu+2011,VandenBroeckSuciu2017}.
Most efforts have been directed towards tuple-independent relational databases \emph{under a set semantics}. 
Many relational database systems, however, use a bag semantics, where identical tuples may appear several times in the same relation. Despite receiving little attention so far, bag semantics are also a natural setting for probabilistic databases. For example, they naturally enter the picture when aggregation is performed, or when statistics are computed (e.g., by random sampling, say, without replacement). Either case might involve computing projections without duplicate elimination first. Even when starting from a tuple-independent probabilistic database with set semantics, this typically gives rise to (proper) bags. %
Even in the traditional setting of a PDB where only finitely many facts appear with non-zero probability, under a bag semantics we have to consider infinite probability spaces 
\cite{GroheLindner2022,GroheLindner2022a},
simply because there is no a priori bound on the number of times a fact may appear in a bag. In general, while the complexity landscape of query answering is well understood for simple models of PDBs under set semantics, the picture for \emph{bag semantics} is still 
mostly
unexplored.

Formally, probabilistic databases are probability distributions over conventional database instances. In a database instance, the answer to a Boolean query under \emph{set semantics} is either $\true$ ($1$) or $\false$ ($0$). In a probabilistic database, the answer to such a query becomes a $\set{0,1}$-valued random variable. The problem of interest is \emph{probabilistic query evaluation}, that is, computing the probability that a Boolean query returns $\true$, when given a probabilistic database. The restriction to \emph{Boolean} queries comes with no loss of generality: to compute the probability of any tuple in the result of a non-Boolean query, all we have to do is replace the free variables of the query according to the target tuple, and solve the problem for the resulting Boolean query \cite{Suciu+2011}.

Under a bag semantics, a Boolean query is still just a query without free variables, but the answer to Boolean query can be any non-negative integer, which can be interpreted as the multiplicity of the empty tuple in the query answer, or more intuitively as the number of different ways in which the query is satisfied. In probabilistic query evaluation, we then get $\NN$-valued answer random variables. Still, the reduction from the non-Boolean to the Boolean case works as described above. \emph{Therefore, without loss of generality, we only discuss Boolean queries in this paper.}

As most of the database theory literature, we study the \emph{data complexity} of query evaluation \cite{Vardi1982}, that is, the complexity of the problem, when the query $Q$ is fixed, and the PDB is the input.
The standard model for complexity theoretic investigations is that of \emph{tuple-independent} PDBs, where the distinct facts constitute independent events. Probabilistic query evaluation is well-understood for the class of \emph{unions of conjunctive queries} (UCQs) on PDBs that are tuple-independent (see the related works section below). 
Most prior work, however, considers the problem under plain set semantics or in finite and restricted settings.
Here, on the contrary, we discuss the probabilistic query evaluation under \emph{bag semantics}.

For tuple-independent (set) PDBs, a variety of representation systems have been proposed (cf.\  \cite{GreenTannen2006,Suciu+2011}), although for complexity theoretic discussions, it is usually assumed that the input is just given as a table of facts, together with their marginal probabilities \cite{VandenBroeckSuciu2017}. In the bag version of tuple-independent PDBs \cite{GroheLindner2022a}, different facts are still independent. Yet, the individual facts (or, rather, their multiplicities) are $\NN$-valued, instead of Boolean, random variables. As this, in general, rules out the naive representation through a list of facts, multiplicities, and probabilities, it is necessary to first define suitable representation systems before the complexity of computational problems can be discussed.

Once we have settled on a suitable class of representations, we investigate the problem of probabilistic query evaluation again, subject to representation system $\rep$. Under bag semantics, there are now \emph{two} natural computational problems regarding query evaluation: $\EXPECTATION_{\rep}(Q)$, which is computing the expected outcome, and $\PQE_{\rep}(Q,k)$ which is computing the probability that the outcome is at most $k$. Notably, these two problems coincide for set semantics, because the expected value of a $\{0,1\}$-valued random variables coincides with the probability that the outcome is $1$. Under a bag semantics, however, the two versions exhibit quite different properties 
(cf.~\cite{ReSuciu2009,fink2012aggregation}).

Recall that using a set semantics, unions of conjunctive queries can either be answered in polynomial time, or are $\sharpP$-hard \cite{DalviSuciu2012}. Interestingly, computing expectations using a bag semantics is extraordinarily easy in comparison: with only mild assumptions on the representation, the expectation of any UCQ can be computed in polynomial time. 
Furthermore, the variance of the random variable can also be computed in polynomial time, which via Chebyshev's inequality gives us a way to estimate the probability that the query answer is close to its expectation.
These results contrast the usual landscape of computational problems in uncertain data management, which are rarely solvable efficiently. 

The computation of probabilities of concrete answer multiplicities, however, appears to be less accessible, and in fact, in its properties is more similar to the set semantics version of probabilistic query evaluation. Our main result states that for Boolean conjunctive queries without self-joins, we have a dichotomy between polynomial time and $\sharpP$-hardness of the query. This holds whenever efficient access to fact probabilities is guaranteed by the representation system and is independent of $k$. Although the proof builds upon ideas and notions introduced for the set semantics dichotomy \cite{DalviSuciu2004,DalviSuciu2007a,DalviSuciu2012}, we are confronted with a number of completely new and intricate technical challenges due to the change of semantics. 
On the one hand, the bag semantics turns disjunctions and existential quantification into sums. This facilitates the computation of expected values, because it allows us to exploit linearity.
On the other hand, the new semantics 
(and the potential presence of infinite multiplicity distributions)
keep us from directly applying some of the central ideas from \cite{DalviSuciu2012} when analyzing $\PQE_{\rep}(Q,k)$, thus necessitating novel techniques.
The bag semantics dichotomy for answer count probabilities is, hence, far from being a simple corollary from the set semantics dichotomy. From the technical perspective, the most interesting result is the transfer of hardness from $\PQE_{\rep}(Q,0)$ to $\PQE_{\rep}(Q,k)$. In essence, we need to find a way to compute the probability that $Q$ has $0$ answers, with only having access to the probability that $Q$ has at most $k$ answers for any single fixed $k$. This reduction uses new non-trivial techniques: by manipulating the input table, we can construct multiple instances of the $\PQE_{\rep}(Q,k)$ problem. We then transform the solutions to these problems, which are obtained through oracle calls, into function values of a polynomial (with a priori unknown coefficients) in such a way, that the solution to $\PQE_{\rep}(Q,0)$ on the original input is hidden in the leading coefficient of this polynomial. 
Using a technique from polynomial interpolation, we can find these leading coefficients, and hence, solve $\PQE_{\rep}(Q,0)$.
\paragraph*{Related Work}\label{sec:relatedwork}

The most prominent result regarding probabilistic query evaluation is the Dichotomy theorem by Dalvi and Suciu \cite{DalviSuciu2012} that provides a separation between unions of conjunctive queries for which probabilistic evaluation is possible in polynomial time, and such where the problem becomes $\sharpP$-hard. They started their investigations with self-join free conjunctive queries \cite{DalviSuciu2007a} and later extended their results to general CQs \cite{DalviSuciu2007b} and then UCQs \cite{DalviSuciu2012}. Beyond the queries they investigate, there are a few similar results for fragments with negations or inequalities \cite{FinkOlteanu2016,OlteanuHuang2008,OlteanuHuang2009}, for homomorphism-closed queries \cite{AmarilliCeylan2020}%
, and on restricted classes of PDBs \cite{Amarilli+2016}. Good overviews over related results are given in \cite{VandenBroeckSuciu2017,Suciu2020}. In recent developments, the original dichotomies for self-join free CQs, and for general UCQs have been shown to hold even under severe restrictions to the fact probabilities that are allowed to appear \cite{AmarilliKimelfeld2021,KenigSuciu2021}.

The bag semantics for CQs we use here is introduced in \cite{ChaudhuriVardi1993}. A detailed analysis of the interplay of bag and set semantics is presented in \cite{Cohen2009}. Considering multiplicities as semi-ring annotations \cite{Green+2007,GraedelTannen2017}, embeds bag semantics into a broader mathematical framework. 

In two closely related papers \cite{ReSuciu2009,fink2012aggregation}, the authors study various aggregates (including, in particular, $\Count$ and $\Sum$) over select-project-join queries on semi-ring annotated tuple-independent PDBs. This has direct implications for query evaluation with bag semantics and, in particular, implies some of our results, at least for finite probability distributions. Specifically, for full count aggregations, this is equivalent to the semantics of Boolean CQs that are discussed in \cref{sec:prelim}. In this sense, these papers discuss a variant of our $\PQE(Q,k)$ problem.
Neither of the papers discusses the impact of the representation of probability distributions and restrictions thereof in greater detail. We explain the differences, and the contributions with respect to $\PQE(Q,k)$ below and have added some additional remarks within the main part of our paper.
In the work of Ré and Suciu \cite{ReSuciu2009}, the input is just a tuple-independent PDB, and the annotations for answering $\Count$ aggregate queries are fixed to be $1$ for all tuples, in the semi-ring of integers modulo $k+1$. The value $k$ itself is part of the input and assumed to be given in binary encoding (rather than constant, as in our later discussion). In general, they do not allow arbitrary semi-ring annotations, but have fixed semi-ring annotations for a given tuple-independent PDB $\D$, depending on $\D$ and the query $Q$. In this setting, they establish a dichotomy in \cite[Theorem 2]{ReSuciu2009}: If the Boolean CQ $Q$ is non-hierarchical, then computing answer count probabilities is $\sharpP$-hard, otherwise, it can be solved in polynomial time (the latter being extended by the results of \cite{fink2012aggregation}, see below). They do not give the details for hardness of $\Count$, but the argument can be assumed to work similar to our proof of \cref{pro:hardlambdas} for $k=0$, and does not easily extend to arbitrary constant $k$ in our more general setting (see \cref{thm:zero-to-k-reduction} and our surrounding discussion). In conclusion, concerning hardness, their techniques only suffice to directly infer $\sharpP$-hardness of our $\PQE(Q,k)$ problem for non-hierarchical Boolean CQs $Q$, $k=0$, and their special, restricted class of representations. They do cover more general annotations for $\Sum$ queries, but there is no direct way to transfer these results back to our $\PQE(Q,k)$ problem.

Fink, Han and Olteanu \cite{fink2012aggregation} use a highly more general representation system, called \emph{pvc-tables}, in which tuples in the input tuple-independent PDB may have values and annotations that are general semi-ring or semi-module expressions in independent random variables. In contrast, our representation system in \cref{sec:rep} would correspond to annotating each tuple with a distinct ($\NN$-valued) random variable. They establish tractability for a class of queries covering self-join free, hierarchical Boolean CQs by compiling queries into $d$-trees. Such a $d$-tree encodes the probability distribution of answer counts in terms of the probability distributions of random variables in the annotations. They suggest to support distributions with infinite, even uncountable support (pointing to \cite{kennedy2010pip}), but do not discuss their encoding and its impact 
on tractability.
In particular, their statements about evaluating compiled $d$-trees \cite[Remark 1 \&{}~Theorem 2]{fink2012aggregation} assume finite distribution supports. 
Moreover, to obtain their tractability result for $\Count$, they go back to more restricted annotations similar to \cite{ReSuciu2009}, see \cite[Proposition 3]{fink2012aggregation}. 
Still,
they essentially cover \cref{thm:PQEhiersjfCQeasy}, and the algorithm we present in \cref{app:pqe} can be seen as a special case of theirs. %
The same algorithm is underlying \cite[Theorem 1]{ReSuciu2009}. 
Fink et al.
do not discuss hardness beyond what is said in \cite{ReSuciu2009}.
Finally, in recent work (independent of ours), Feng~et~al.~\cite{Feng+2022} analyze the fine-grained complexity of computing expectations of queries in probabilistic bag databases, albeit assuming finite multiplicity supports and hence still in the realm of finite probabilistic databases.

\section{Preliminaries}
\label{sec:prelim}
We denote by $\NN$ and $\NN_+$ the sets of non-negative, and of positive integers, respectively. We denote open, closed and half open intervals of real numbers by $(a,b)$, $[a,b]$, $[a,b)$ and $(a,b]$, respectively, where $a\leq b$. %
By $\binom{n}{k}$ we denote the binomial coefficient and by $\binom{ n }{ n_1, \dots, n_k }$ the multinomial coefficient.

Let $\Omega$ be a non-empty finite or countably infinite set and let $P \from \Omega \to [0,1]$ be a function satisfying $\sum_{ \omega \in \Omega } P(\omega) = 1$. Then $(\Omega,P)$ is a \emph{(discrete) probability space}. Subsets $A \subseteq \Omega$ are called \emph{events}. We write $\Pr_{ \omega \sim \Omega }( \omega \in A )$ for the probability of a randomly drawn $\omega \in \Omega$ (distributed according to $P$) to be in $A$. More generally, we may write $\Pr_{ \omega \in \Omega }( \omega \text{ has property } \phi )$ for the probability of a randomly drawn element to satisfy some property $\phi$. \emph{All probability spaces appearing in this paper are discrete.}

Functions $X \from \Omega \to \RR$ on a probability space are called \emph{random variables}. The expected value and variance of $X$ are denoted by $\E(X)$ and $\Var(X)$, respectively. The values $\E(X^k)$ for integers $k \geq 2$ are called the higher-order \emph{moments} of $X$.

\subsection{Probabilistic Bag Databases}
We fix a countable, non-empty set $\Univ$ (the \emph{domain}). 
A \emph{database schema} $\tau$ is a finite, non-empty set of relation symbols. Every relation symbol $R$ has an arity $\ar(R) \in \NN_+$.

A \emph{fact} over $\tau$ and $\Univ$ is an expression $R( \tup a )$ where $\tup a \in \Univ^{\ar(R)}$. A \emph{(bag) database instance} $D$ is a bag (i.e. multiset) of facts. Formally, a bag (instance) is specified by a function $\#_D$ that maps every fact $f$ to its multiplicity $\#_D( f )$ in $D$. The \emph{active domain} $\adom(D)$ is the set of domain elements $a$ from $\Univ$ for which there exists a fact $f$ containing $a$ such that $\#_D( f ) > 0$.

A \emph{probabilistic (bag) database} (or, \emph{(bag) PDB}) $\D$ is a pair $(\DD, P)$ where $\DD$ is a set of bag instances and $P \from 2^{\DD} \to [0,1]$ is a probability distribution over $\DD$. Note that, even when the total number of different facts is finite, $\DD$ may be infinite, as facts may have arbitrarily large multiplicities. We let $\#_{\D}( f )$ denote the random variable $D \mapsto \#_D( f )$ for all facts $f$. If $\D = (\DD,P)$ is a PDB, then $\adom( \D ) \coloneqq \bigcup_{ D \in \DD } \adom( D )$. We call a PDB \emph{fact-finite} if the set $\set{ f \with \#_D(f) > 0 \text{ for some } D \in \DD }$ is finite. In this case, $\adom( \D )$ is finite, too.

A bag PDB $\D$ is called \emph{tuple-independent} if for all $k \in \NN$, all pairwise distinct facts $f_1, \dots, f_k$, and all $n_1, \dots, n_k \in \NN$, the events $\#_D( f_i ) = n_i$ are independent, i.\,e.,
\[
    \Pr_{ D \sim \D }\big( \#_D(f_i) = n_i \text{ for all } i = 1, \dots, k \big)
        =
    \prod_{ i = 1 }^{ k } \Pr_{ D \sim \D }\big( \#_D(f_i) = n_i \big)\text.
\]
\emph{Unless it is stated otherwise, all probabilistic databases we treat in this paper are assumed to be fact-finite and tuple-independent.}

\subsection{UCQs with Bag Semantics}
Let $\Vars$ be a countably infinite set of variables. An \emph{atom} is an expression of the shape $R(\tup t)$ where $R \in \tau$ and $\tup t \in ( \Univ \cup \Vars )^{\ar(R)}$. 
A \emph{conjunctive query (CQ)} is a formula $Q$ of first-order logic (over $\tau$ and $\Univ$) of the shape
\[
    Q 
        = 
    \exists x_1 \dots \exists x_m \with 
    R_1( \tup t_1 ) \wedge \dots \wedge R_n( \tup t_n )\text,
\]
in which we always assume that the $x_i$ are pairwise different, and that $x_i$ appears in at least one of $\tup t_1,\dots,\tup t_n$ for all $i = 1,\dots, m$. 
A CQ $Q$ is \emph{self-join free}, if every relation symbol occurs at most once within $Q$. In general, the \emph{self-join width} of a CQ $Q$ is the maximum number of repetitions of the same relation symbol in $Q$.
If $Q$ is a CQ of the above shape, we let $Q^*$ denote the quantifier-free part $R_1( \tup t_1) \wedge \dots R_n( \tup t_n )$ of $Q$, and we call $R_i( \tup t_i )$ an \emph{atom of $Q$} for all $i = 1, \dots, n $.
A \emph{union of conjunctive queries (UCQ)} is a formula of the shape
$
    Q
        =
    Q_1 \vee \dots \vee Q_N
$ 
where $Q_1,\dots,Q_N$ are CQs.
A query is called \emph{Boolean}, if it contains no free variables (that is, there are no occurrences of variables that are not bound by a quantifier). \emph{From now on, and throughout the remainder of the paper, we only discuss Boolean (U)CQs.} 

\medskip

Recall that $\#_D$ is the multiplicity function of the instance $D$. The bag semantics of (U)CQs extends $\#_D$ to queries. For Boolean CQs $Q = \exists x_1 \dots \exists x_m \with R_1( \tup t_1 ) \wedge \dots \wedge R_n( \tup t_n )$ we define
\begin{equation}\label{eq:CQsem}
    \#_D( Q ) 
        \coloneqq
    \sum_{ \tup a \in \adom(D)^m }
        \prod_{ i = 1 }^{ n }
        \#_D\big( R_i( \tup t_i[ \tup x / \tup a ] ) \big)\text,
\end{equation}
where $\tup x = (x_1,\dots,x_m)$ and $\tup a = (a_1,\dots,a_m)$, and $R_i(\tup t_i[ \tup x / \tup a ])$ denotes the fact obtained from $R_i(\tup t_i)$ by replacing, for all $j = 1, \dots, m$, every occurrence of $x_j$ by $a_j$. If $Q = Q_1 \vee \dots \vee Q_N$ is a Boolean UCQ, then each of the $Q_i$ is a Boolean CQ. We define
\begin{equation}\label{eq:UCQsem}
    \#_D( Q )
        \coloneqq
    \#_D( Q_1 ) + \dots + \#_D( Q_N )\text.
\end{equation}
Whenever convenient, we write $\#_D Q$ instead of $\#_D(Q)$. We emphasize once more, that the query being Boolean \emph{does not} mean that its answer is $0$ or $1$ under bag semantics, but could be any non-negative integer.

\begin{remark}
    We point out that in \eqref{eq:CQsem}, conjunctions should intuitively be understood as joins rather than intersections. Our definition \eqref{eq:CQsem} for the bag semantics of CQs matches the one that was given in \cite{ChaudhuriVardi1993}. This, and the extension \eqref{eq:UCQsem} for UCQs, are essentially special cases of how semiring annotations of formulae are introduced in the provenance semiring framework \cite{Green+2007,GraedelTannen2017}, the only difference being that we use the active domain semantics. For UCQs however, this is equivalent since the value of \eqref{eq:CQsem} stays the same when the quantifiers range over arbitrary supersets of $\adom(D)$.
\end{remark}

Note that the result $\#_D Q$ of a Boolean UCQ on a bag instance $D$ is a non-negative integer. Thus, evaluated over a PDB $\D = (\DD, P)$, this yields a $\NN$-valued random variable $\#_{\D} Q$ with
\[
    \Pr\big( \#_{\D} Q = k \big)
        =
    \Pr_{ D \sim \D }\big( \#_D Q = k \big)\text.
\]

\begin{example}
    Consider tuple-independent bag PDB over facts $R(a)$ and $S(a)$, where $R(a)$ has multiplicity $2$ or $3$, both with probability $\frac12$, and $S(a)$ has multiplicity $1$, $2$ or $3$, with probability $\frac13$ each. Then, the probability of the event $\#_{\D}( R(a) \wedge S(a) ) = 6$ is given by
    \begin{math}
        \Pr\big( \#_{\D}( R(a) ) = 2 \big)  \Pr\big( \#_{\D}( S(a) ) = 3 \big)
        + \Pr\big( \#_{\D}( R(a) ) = 3 \big)  \Pr\big( \#_{\D}( S(a) ) = 2 \big)
        = \tfrac13
        \text.\qedhere
    \end{math}
\end{example}

There are now two straight-forward ways to formulate the problem of answering a Boolean UCQ over a probabilistic database. We could either ask for the expectation $\E\big( \#_D Q \big)$, or compute the probability that $\#_{\D} Q$ is at most / at least / equal to $k$. These two options coincide for set semantics, as $\#_{\D} Q$ is $\set{0,1}$-valued in this setting.\footnote{In fact, in the literature both approaches have been used to introduce the problem of probabilistic query evaluation \cite{Suciu+2011,VandenBroeckSuciu2017}.} For bag PDBs, these are two separate problems to explore. Complexity-wise, we focus on \emph{data complexity} \cite{Vardi1982}. That is, the query (and for the second option, additionally the number $k$) is a parameter of the problem, so that the input is only the PDB. Before we can start working on these problems, we first need to discuss how bag PDBs are presented as an input to an algorithm. This is the purpose of the next section. %
\section{Representation Systems}
\label{sec:rep}
For the set version of probabilistic query evaluation, the default representation system represents tuple-independent PDBs by specifying all facts together with their marginal probability. The distinction between a PDB and its representation is then usually blurred in the literature. This does not easily extend to bag PDBs, as the distributions of $\#_D( f )$ for facts $f$ may have infinite support.

\begin{example}\label{exa:geometric}
Let $\D = ( \DD, P )$ be a bag PDB over a single fact $f$ with multiplicity distribution $\#_{\D}( f ) \sim \Geom\big( \frac12 \big)$, i.e., $\Pr_{ D \sim \D }\big( \#_D(f) = k \big) = 2^{-k}$. Then the instances of $\D$ with positive probability are $\bag{}, \bag{f}, \bag{f,f}, \dots$, so $\D$ is an infinite PDB.
\end{example}

To use such PDBs as inputs for algorithms, we introduce a suitable class of representation systems (RS) \cite{GreenTannen2006}. All computational problems are then stated with respect to an RS.

\begin{definition2}[cf.\ \cite{GreenTannen2006}]\label{def:repsys}
    A \emph{representation system (RS)} for bag PDBs is a pair $\big( \tabs, \sem{\?} \big)$ where $\tabs$ is a non-empty set (the elements of which we call \emph{tables}), and $\sem{\?}$ is a function that maps every $T \in \tabs$ to a probabilistic database $\sem{T}$.
\end{definition2}

Given an RS, we abuse notation and also use $T$ to refer to the PDB $\sem{T}$. Note that \cref{def:repsys} is not tailored to tuple-independence yet and requires no independence assumptions. For representing tuple-independent bag PDBs, we introduce a particular subclass of RS's where facts are labeled with the parameters of parameterized distributions over multiplicities. For example, a fact $f$ whose multiplicity is geometrically distributed with parameter $\frac{1}{2}$ could be annotated with $(\strGeom,\mathtt{1/2})$, representing $\frac{1}{2}$ using two integers.

\begin{definition2}\label{def:paramtirep}
    A \emph{parameterized TI representation system} (in short: TIRS) is a tuple $\rep = ( \Lambda, \probs, \Sigma, \tabs, \cod{\?}, \sem{\?} )$ where $\Lambda \neq \emptyset$ is a set (the \emph{parameter set}); $\probs$ is a family $\big( P_\lambda \big)_{\lambda \in \Lambda}$ of probability distributions $P_\lambda$ over $\NN$; 
    $\Sigma \neq \emptyset$ is a finite set of symbols (the \emph{encoding alphabet}); $\cod{ \? } \from \Lambda \to \Sigma^*$ is an injective function (the \emph{encoding function}); and $( \tabs, \sem{ \? } )$ is an RS where 
        \begin{itemize}
            \item $\tabs$ is the family of all finite sets $T$ of pairs $\big( f, \cod{ \lambda_f } \big)$ with pairwise different facts $f$ of a given schema and $\lambda_f \in \Lambda$ for all $f$; and
            \item $\sem{\?}$ maps every $T \in \tabs$ to the tuple-independent bag PDB $\D$ with multiplicity probabilities $\Pr\bigl( \#_{\D} f = k \bigr) = P_{\lambda_f}(k)$ for all $\bigl( f, \cod{\lambda_f} \bigr) \in T$.\qedhere
        \end{itemize}
\end{definition2}

Whenever a TIRS $\rep$ is given, we assume $\rep = ( \Lambda_{\rep}, \probs_{\rep}, \Sigma_{\rep}, \tabs_{\rep}, \cod{ \? }_{\rep},$ $\sem{ \? }_{\rep} )$ by default. %

\begin{figure}
    \hfill%
    \begin{tabular}[t]{ c l }\toprule
        Relation $R$ & Parameter\\\midrule
        $R(1,1)$ & $(\strBern,\mathtt{1/2})$\\
        $R(1,2)$ & $(\strBinom,\mathtt{10},\mathtt{1/3})$\\
        $R(2,2)$ & $(\mathtt{0} \mapsto \mathtt{1/4}; \mathtt{1}\mapsto \mathtt{1/4}; \mathtt{5}\mapsto \mathtt{1/2})$\\\bottomrule
    \end{tabular}%
    \hfill%
    \begin{tabular}[t]{ c l } \toprule
        Relation $S$ & Parameter\\\midrule
        $S(1)$ & $(\strGeom,\mathtt{1/3})$\\
        $S(2)$ & $(\strPoiss,\mathtt{3})$\\\bottomrule
    \end{tabular}%
    \hfill\mbox{}%
    \caption{Example of a parameterized TI representation.}
    \label{fig:repexample}
\end{figure}

\begin{example}\label{exa:dists}
    \Cref{fig:repexample} shows a table $T$ from a TIRS $\rep$, illustrating how the parameters can be used to encode several multiplicity distributions. Four of the distributions are standard parameterized distributions, presented using their symbolic name together with their parameters. The multiplicity distribution for $R(2,2)$ is a generic distribution with finite support $\set{0,1,5}$. The annotation $( \strBinom, \mathtt{10}, \mathtt{1/3} )$ of $R(1,2)$ in the table specifies that $\#_T R(1,2) \sim \Binom\big(10,\frac13\big)$. That is,
    \[
        \Pr\big( \#_T R(1,2) = k \big) 
        =
        \begin{cases}
            \binom{10}{k} (\frac13)^k (\frac23)^{10-k} & \text{if } 0 \leq k \leq 10\text{ and}\\
            0 & \text{if }k > 10\text.
        \end{cases}
    \]
    The multiplicity probabilities of the other facts are given analogously in terms of the Bernoulli, geometric, and Poisson distributions, respectively. 
    The supports of the multiplicity distributions are $\set{0,1}$ for the Bernoulli, $\set{0,\dots,n}$ for the Binomial, and $\NN$ for both the geometric and Poisson distributions (and finite sets for explicitly encoded distributions). For the first three parameterized distributions, multiplicity probabilities always stay rational if the parameters are rational. This is not the case for the Poisson distribution.
\end{example}

While \cref{def:repsys} seems abstract, this level of detail in the encoding of probability distributions allows us to rigorously discuss computational complexity without resorting to a very narrow framework that only supports some predefined distributions. Our model also comprises tuple-independent set PDBs: The traditional representation system can be recovered from \cref{def:paramtirep} using only the Bernoulli distribution. Moreover, we remark that we can always represent facts that are present with probability $0$, by just omitting them from the tables (for example, fact $R(2,1)$ in \cref{fig:repexample}).

\begin{remark}
    In this work, we focus on TIRS's where the values needed for computation (moments in \cref{sec:evar} and probabilities in \cref{sec:pqe}) are rational. An extension to support irrational values is possible through models of real complexity \cite{Ko1991,BravermanCook2006}. A principled treatment requires a substantial amount of introductory overhead that would go beyond the scope of this paper, and which we therefore leave for future work.
\end{remark}

\section{Expectations and Variances}
\label{sec:evar}

Before computing the probabilities of answer counts, we discuss the computation of the expectation and the variance of the answer count. Recall that in PDBs without multiplicities, the answer to a Boolean query (under set semantics) is either $0$ (i.\,e., $\false$) or $1$ (i.\,e., $\true$). That is, the answer count is a $\set{0,1}$-valued random variable there, meaning that its expectation coincides with the probability of the answer count being $1$. Because of this correspondence, the semantics of Boolean queries on (set) PDBs are sometimes also defined in terms of the expected value \cite{VandenBroeckSuciu2017}. For bag PDBs, the situation is different, and this equivalence no longer holds. Thus, computing expectations, and computing answer count probabilities have to receive a separate treatment. Formally, we discuss the following problems in this section:

\begin{figure}[H]\hfill%
    \begin{subfigure}{.495\textwidth}\centering%
        \paramproblem{$\EXPECTATION_{\rep}( Q )$}
        {A Boolean UCQ $Q$.}                            %
        {A table $T \in \tabs_{\rep}$.}                 %
        {The expectation $\E\big( \#_{T}Q \big)$.}      %
    \end{subfigure}
    \hfill%
    \begin{subfigure}{.495\textwidth}\centering%
        \paramproblem{$\VARIANCE_{\rep}( Q )$}
        {A Boolean UCQ $Q$.}
        {A table $T \in \tabs_{\rep}$.}
        {The variance $\Var\big( \#_T Q  \big)$.}
    \end{subfigure}
    \hfill\mbox{}%
\end{figure}

\subsection{Expected Answer Count}

We have pointed out above that computing expected answer counts for set PDBs and set semantics is equivalent to computing the probability that the query returns $\true$. There are conjunctive queries, for example, $Q = \exists x \exists y \with R(x) \wedge S(x,y) \wedge T(y)$, for which the latter problem is $\sharpP$-hard \cite{Graedel+1998,DalviSuciu2004}. Under a set semantics, disjunctions and existential quantifiers semantically correspond to taking maximums instead of adding multiplicities. Under a bag semantics, we are now able to exploit the linearity of expectation to easily compute expected values, which was not possible under a set semantics. 

\begin{lemma}\label{lem:expectation-of-CQ}
    Let $\D$ be a tuple-independent PDB and let $Q$ be a Boolean CQ, $Q = \exists x_1 \dots \exists x_m \with R_1(\tup t_1) \wedge \dots \wedge R_n( \tup t_n )$. For every $\tup a \in \adom( \D )^{m}$, we let $F( \tup a )$ denote the set of facts appearing in $Q^*[ \tup x / \tup a ]$, and for every $f \in F( \tup a )$, we let $\nu( f, \tup a )$ denote the number of times $f$ appears in $Q^*[ \tup x / \tup a ]$. Then
    \begin{equation}\label{eq:expectation-of-CQ}
        \E\big( \#_{ \D } Q \big) 
            =
        \sum_{ \tup a \in \adom( \D )^m } \prod_{ f \in F( \tup a ) } \E\Big( \big( \#_{\D} f \big)^{ \nu( f, \tup a ) } \Big)\text.
    \end{equation}
\end{lemma}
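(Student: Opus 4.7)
The plan is to start from the definition \eqref{eq:CQsem} of $\#_D(Q)$, take expectations, and push the expectation inside using linearity and tuple-independence. The two things to be careful about are (i) the outer summation in \eqref{eq:CQsem} is over $\adom(D)^m$, which \emph{depends} on the random instance $D$, so linearity of expectation cannot be invoked directly; and (ii) we need to correctly group repeated occurrences of the same fact before invoking independence.

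First, I would eliminate the dependence of the summation index on $D$ by replacing $\adom(D)^m$ with $\adom(\D)^m$. This is justified by the observation in the remark following \eqref{eq:UCQsem}: if some component $a_j$ of $\tup a$ lies outside $\adom(D)$ but $x_j$ occurs in some $\tup t_i$, then the fact $R_i(\tup t_i[\tup x / \tup a])$ contains $a_j$ and hence has multiplicity $0$ in $D$, killing the corresponding product term. Since $\adom(D) \subseteq \adom(\D)$ for every $D$ in the support, adding the remaining tuples from $\adom(\D)^m \setminus \adom(D)^m$ only contributes zero summands. Thus for every instance $D$,
\[
    \#_D(Q)
        =
    \sum_{\tup a \in \adom(\D)^m}
        \prod_{i=1}^n \#_D\big(R_i(\tup t_i[\tup x / \tup a])\big).
\]

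Next, since the outer index set no longer depends on $D$ (and the PDB is fact-finite so the sum is finite for each $D$), I apply linearity of expectation to move $\E$ inside the outer sum. For a fixed $\tup a \in \adom(\D)^m$, I then regroup the $n$ factors in the inner product according to which fact they refer to. By definition of $F(\tup a)$ and $\nu(f,\tup a)$, this yields
\[
    \prod_{i=1}^n \#_D\big(R_i(\tup t_i[\tup x / \tup a])\big)
        =
    \prod_{f \in F(\tup a)} \bigl(\#_D f\bigr)^{\nu(f,\tup a)}.
\]

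The main (and only nontrivial) remaining step is to split the expectation of the product into the product of expectations. This is where tuple-independence enters: since the facts in $F(\tup a)$ are pairwise distinct, the random variables $\#_\D f$ for $f \in F(\tup a)$ are mutually independent, and so are their powers. Therefore
\[
    \E\!\left[\prod_{f \in F(\tup a)} \bigl(\#_\D f\bigr)^{\nu(f,\tup a)}\right]
        =
    \prod_{f \in F(\tup a)} \E\!\Big(\bigl(\#_\D f\bigr)^{\nu(f,\tup a)}\Big),
\]
and substituting this into the previous display yields \eqref{eq:expectation-of-CQ}. I expect the trickiest point, in terms of care rather than depth, to be the justification of step (i): one has to make sure that enlarging the active domain indeed only introduces zero-valued summands deterministically (for every $D$ in the support), so that no subtle measurability or integrability issue arises when interchanging sum and expectation.
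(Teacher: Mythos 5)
Your proposal is correct and follows essentially the same route as the paper's proof: replace $\adom(D)^m$ by $\adom(\D)^m$ using that tuples outside the active domain contribute zero, apply linearity of expectation, regroup the factors by distinct facts, and split the expectation of the product via tuple-independence. The extra care you take in justifying the index-set replacement and the finiteness of the sum is sound but matches what the paper does implicitly.
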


\begin{proof}
    By definition, we have
    \begin{displaymath}%
        \#_D Q 
            = 
        \sum_{ \tup a \in \adom( D )^m } \#_D( Q^*[ \tup x / \tup a ] )
            =
        \sum_{ \tup a \in \adom( \D )^m } \#_D( Q^*[ \tup x / \tup a ] )
    \end{displaymath}
    for every individual instance $D$ of $\D$. The last equation above holds because, as $Q^*$ is assumed to contain every quantified variable, $\#_D( Q^*[ \tup x / \tup a ] ) = 0$ whenever the tuple $\tup a$ contains an element that is not in the active domain of $D$. By linearity of expectation, we have
    \begin{equation*}
        \E\big( \#_{ \D } Q \big)
            =
        \sum_{ \tup a \in \adom( \D )^m } \E\big( \#_{\D}( Q^*[ \tup x / \tup a ] ) \big)\text.
    \end{equation*}
    Recall, that $Q^*[ \tup x / \tup a ]$ is a conjunction of facts $R_i( \tup t_i[\tup x/\tup a] )$. Thus,
    \begin{math}
        \#_{\D}\big( \bigwedge_{ i = 1 }^{ n } R_i\big( \tup t_i[\tup x/\tup a] \big) \big)
            =
        \prod_{i=1}^{n} \#_{\D}\big( R_i\big( \tup t_i[ \tup x / \tup a ] \big) \big)\text.
    \end{math}
    Because $\D$ is tuple-independent, any two facts in $F( \tup a )$ are either equal, or independent. Therefore,
    \begin{equation*}
        \E\biggl( \prod_{i=1}^{n} \#_{\D} R_i( \tup t_i[ \tup x / \tup a ] ) \biggr)
            =
        \prod_{ f \in F( \tup a ) } \E\big( (\#_{\D}f)^{ \nu( f, \tup a ) } \big)\text,
    \end{equation*}
    as the expectation of a product of independent random variables is the product of their expectations. Together, this yields the expression from \eqref{eq:expectation-of-CQ}.
\end{proof}

By linearity, the expectation of a UCQ is the sum of the expectations of its CQs.

\begin{lemma}\label{lem:expectation-of-UCQ}
    Let $\D$ be a PDB and let $Q = \bigvee_{ i = 1 }^{ N } Q_i$ be a Boolean UCQ. Then we have
    \begin{math}
        \E\big( \#_{ \D } Q \big)
                =
        \sum_{ i = 1 }^{ N } \E\big( \#_{ \D } Q_i \big)\text.
    \end{math}
\end{lemma}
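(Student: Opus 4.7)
The plan is to reduce the statement directly to the linearity of expectation, using the pointwise definition of the UCQ semantics already given in \eqref{eq:UCQsem}. Concretely, for every instance $D$ of $\D$, the definition tells us that the integer-valued quantity $\#_D(Q)$ decomposes as $\#_D(Q) = \sum_{i=1}^N \#_D(Q_i)$. Viewing both sides as functions of the random instance $D \sim \D$, this is an identity of $\NN$-valued random variables, namely $\#_{\D}Q = \sum_{i=1}^N \#_{\D}Q_i$.

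From there, I would just invoke linearity of expectation, which holds without any independence assumption between the $\#_{\D}Q_i$ and does not require finiteness of $\D$ as long as the individual expectations are well-defined (and, in the problematic case that some $\E(\#_{\D}Q_i) = +\infty$, both sides are $+\infty$, so the equality still holds in $[0,\infty]$). This gives
\[
    \E\big(\#_{\D}Q\big)
        =
    \E\Big(\textstyle\sum_{i=1}^N \#_{\D}Q_i\Big)
        =
    \sum_{i=1}^N \E\big(\#_{\D}Q_i\big)\text,
\]
which is exactly the claim.

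There is essentially no obstacle here: the content of the lemma is entirely in the bag-semantics definition \eqref{eq:UCQsem}, which was set up precisely so that disjunction becomes sum rather than max. The only thing worth flagging in the write-up is the distinction between the pointwise equation (on each fixed $D$) and the corresponding equation of random variables on $\D$; once that is observed, linearity of expectation closes the argument in one line, without any appeal to \cref{lem:expectation-of-CQ} or to tuple-independence of $\D$.
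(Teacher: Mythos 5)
Your proposal is correct and matches the paper's own (one-line) justification: the paper also derives the identity $\#_D Q = \sum_{i=1}^N \#_D Q_i$ directly from the definition in \eqref{eq:UCQsem} and concludes by linearity of expectation, with no appeal to \cref{lem:expectation-of-CQ} or tuple-independence. Your extra remark about the case of infinite expectations is a harmless refinement of the same argument.
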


Given that we can compute the necessary moments of fact multiplicities efficiently, \cref{lem:expectation-of-CQ,lem:expectation-of-UCQ} yield a polynomial time procedure to compute the expectation of a UCQ. The order of moments we need is governed by the self-join width of the individual CQs. 
\begin{definition2}
    A TIRS $\rep$ has \emph{polynomially computable moments} up to order $k$, if for all $\lambda \in \Lambda_{\rep}$, we have $\sum_{ n = 0 }^{ \infty } n^{\ell} \cdot P_\lambda(n) < \infty$ and the function $\cod{ \lambda } \mapsto \sum_{ n = 0 }^{ \infty } n^{\ell} \cdot P_\lambda(n)$ can be computed in polynomial time in $\size{ \cod{ \lambda } }$ for all $\ell \leq k$.
\end{definition2}
Before giving the main statement, let us revisit \cref{exa:dists} for illustration.

\begin{example}
    Let $\rep$ be the TIRS from \cref{exa:dists}. The moments of $X \sim \Bern(p)$ are $\E( X^k ) = p$ for all $k \geq 1$. Direct calculation shows that for $X \sim \Binom(n,p)$, the moment $\E( X^k )$ is given by a polynomial in $n$ and $p$. In general, for most of the common distributions, one of the following cases applies. Either, as above, a closed form expression for $\E( X^k )$ is known, or, the moments of $X$ are characterized in terms of the moment generating function (mgf) $\E( e^{tX} )$ of $X$, where $t$ is a real-valued variable. In the latter case, $\E( X^k )$ is obtained by taking the $k$th derivative of the mgf and evaluating it at $t = 0$ \cite[p.~62]{CasellaBerger2002}. An inspection of the mgfs of the geometric, and the Poisson distributions \cite[p.~621f]{CasellaBerger2002} reveals that their $k$th moments are polynomials in their respective parameters as well. Together, $\rep$ has polynomially computable moments up to order $k$ for all $k \in \NN_+$.
\end{example}

\begin{proposition}\label{pro:expectation}
    Let $Q = \bigvee_{i=1}^N Q_i$ be a Boolean UCQ, and let $\rep$ be a TIRS with polynomially computable moments up to order $k$, where $k$ is the maximum self-join width among the $Q_i$. Then $\EXPECTATION_{\rep}\big( \#_T Q \big)$ is computable in polynomial time.
\end{proposition}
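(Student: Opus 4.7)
The plan is to combine \cref{lem:expectation-of-CQ,lem:expectation-of-UCQ} in the obvious way and to verify that every quantity appearing in the resulting formula can be computed in polynomial time in $\size{T}$. Since the query is fixed, $N$, the number of quantified variables $m_i$ of each $Q_i$, the number of atoms $n_i$ of each $Q_i$, and the self-join width bound $k$ are all constants. By \cref{lem:expectation-of-UCQ} it therefore suffices to compute $\E(\#_T Q_i)$ in polynomial time for each CQ $Q_i$ individually and then add the $N$ results.

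Fix one CQ $Q_i = \exists x_1 \dots \exists x_{m_i}\with R_1(\tup t_1)\wedge\dots\wedge R_{n_i}(\tup t_{n_i})$. First I would read $\adom(T)$ off the table $T$ in linear time. Then I would evaluate the formula from \cref{lem:expectation-of-CQ},
\[
    \E\bigl(\#_T Q_i\bigr)
        =
    \sum_{\tup a \in \adom(T)^{m_i}}
        \prod_{f\in F(\tup a)}\E\bigl((\#_T f)^{\nu(f,\tup a)}\bigr),
\]
term by term. The outer sum has $\card{\adom(T)}^{m_i}$ summands, which is polynomial in $\size{T}$ because $m_i$ is a constant; each inner product contains at most $n_i$ factors, again a constant.

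It remains to argue that each factor $\E\bigl((\#_T f)^{\nu(f,\tup a)}\bigr)$ is computable in polynomial time. I would split on whether $f$ occurs in $T$. If $f$ is not listed in $T$, then by the convention on omitted facts, $\#_T f = 0$ almost surely, so the factor is $0$ and the whole product for this $\tup a$ vanishes; this case is handled by a single table lookup. If $f$ is listed in $T$ with encoding $\cod{\lambda_f}_{\rep}$, then $\E\bigl((\#_T f)^{\nu(f,\tup a)}\bigr) = \sum_{n=0}^{\infty} n^{\nu(f,\tup a)}\,P_{\lambda_f}(n)$, and by the polynomial-moment assumption this is computable in polynomial time in $\size{\cod{\lambda_f}_{\rep}} \leq \size{T}$ provided $\nu(f,\tup a) \leq k$.

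The one step that requires a small argument, and is perhaps the only real content beyond bookkeeping, is the bound $\nu(f,\tup a) \leq k$. This follows because $\nu(f,\tup a)$ counts occurrences of $f$ among $R_1(\tup t_1[\tup x/\tup a]),\dots,R_{n_i}(\tup t_{n_i}[\tup x/\tup a])$, and all atoms contributing to this count must share the same relation symbol (namely that of $f$); hence $\nu(f,\tup a)$ is bounded by the number of occurrences of that relation symbol in $Q_i$, which is at most the self-join width of $Q_i$ and thus at most $k$. Summing polynomially many products of constantly many polynomial-time-computable factors yields the per-CQ expectation in polynomial time, and summing the $N$ per-CQ values finishes the proof.
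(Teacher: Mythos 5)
Your proof is correct and follows essentially the same route as the paper's: plug \cref{lem:expectation-of-CQ} into \cref{lem:expectation-of-UCQ}, observe that the resulting sum has polynomially many terms since the query is fixed, and note that every factor is a moment of order $\nu(f,\tup a)\leq k$. The extra details you supply (the table lookup for facts omitted from $T$ and the explicit argument bounding $\nu(f,\tup a)$ by the self-join width) are correct and merely make explicit what the paper leaves implicit.
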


\begin{proof}
    We plug \eqref{eq:expectation-of-CQ} into the formula from \cref{lem:expectation-of-UCQ}. This yields at most $\leq N \cdot \size{ \adom( T ) }^m \cdot n$ terms (where $m$ is the maximal number of variables, and $n$ the maximum number of atoms among the CQs $Q_1,\dots,Q_N$). These terms only contain moments of fact multiplicities of order at most $k$.
\end{proof}

We emphasize that the number $k$ from \cref{pro:expectation}, that dictates which moments we need to be able to compute efficiently, comes from the fixed query $Q$ and is therefore constant. More precisely, it is given through the number of self-joins in the query. In particular, if all CQs in $Q$ are self-join free, it suffices to have efficient access to the expectations of the multiplicities.

\subsection{Variance of the Answer Count}

With the ideas from the previous section, we can also compute the variance of query answers in polynomial time. Naturally, to be able to calculate the variance efficiently, we need moments of up to the double order in comparison to the computation of the expected value.

\begin{restatable}{proposition}{varianceproposition}\label{pro:variance}
    Let $Q = \bigvee_{i=1}^N Q_i$ be a Boolean UCQ, and let $\rep$ be a TIRS with polynomially computable moments up to order $2k$, where $k$ is the maximum self-join width among the $Q_i$. Then $\VARIANCE_{\rep}\big( \#_T Q \big)$ is computable in polynomial time.
\end{restatable}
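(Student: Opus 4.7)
The plan is to compute the variance via $\Var(X) = \E(X^2) - \E(X)^2$, where $X = \#_T Q$. The first term $\E(\#_T Q)$ is already computable in polynomial time by \cref{pro:expectation} (and the hypothesis on $\rep$ comfortably subsumes what is needed there, since $2k \geq k$). So the work lies in showing that $\E\bigl( (\#_T Q)^2 \bigr)$ is computable in polynomial time under the stronger assumption that moments up to order $2k$ are polynomially computable.

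First I would expand
\begin{equation*}
    \E\bigl( (\#_T Q)^2 \bigr)
        =
    \E\biggl( \Bigl( \sum_{ i = 1 }^{ N } \#_T Q_i \Bigr)^2 \biggr)
        =
    \sum_{ i = 1 }^{ N } \sum_{ j = 1 }^{ N } \E\bigl( \#_T Q_i \cdot \#_T Q_j \bigr)\text,
\end{equation*}
using linearity of expectation. Since $N$ is a constant (it depends only on $Q$), it suffices to show that each of the $N^2$ summands can be computed in polynomial time.

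The key observation is that for each pair $(i,j)$, the product $\#_T Q_i \cdot \#_T Q_j$ equals $\#_T Q_{ij}$ for a Boolean CQ $Q_{ij}$ obtained as follows: take a copy $Q_j'$ of $Q_j$ in which all quantified variables have been renamed to be fresh and disjoint from those of $Q_i$, and let $Q_{ij}$ be the CQ whose quantifier prefix is the concatenation of the prefixes of $Q_i$ and $Q_j'$, and whose quantifier-free part is $Q_i^* \wedge (Q_j')^*$. A direct inspection of the bag semantics \eqref{eq:CQsem} shows that $\#_D Q_{ij} = \#_D Q_i \cdot \#_D Q_j$ for every instance $D$, and hence this identity carries over to the random variables on $T$. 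Each relation symbol appears in $Q_{ij}$ at most twice as often as the maximum number of times it appears in any single $Q_r$, so the self-join width of $Q_{ij}$ is at most $2k$. We can therefore apply \cref{lem:expectation-of-CQ} to $Q_{ij}$; the resulting expression only involves moments of fact multiplicities of order at most $2k$, which by assumption on $\rep$ are polynomial-time computable. Summing \eqref{eq:expectation-of-CQ} applied to $Q_{ij}$ has at most $\size{\adom(T)}^{2m}$ terms, where $m$ bounds the number of variables of any $Q_i$, and is thus polynomial in the size of $T$.

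Combining the two computations, we obtain $\Var(\#_T Q) = \sum_{i,j} \E(\#_T Q_{ij}) - \bigl( \sum_i \E(\#_T Q_i) \bigr)^2$ in polynomial time. The only subtle point, and the one I expect to require the most care, is the construction of $Q_{ij}$ together with the verification that the product of answer counts equals the answer count of the conjoined CQ (which relies on the variables of the two copies being disjoint so that the sums in \eqref{eq:CQsem} decouple), and the correct bookkeeping that bounds the self-join width of $Q_{ij}$ by $2k$ so that the available moments suffice.
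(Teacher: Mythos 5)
Your proof is correct and follows essentially the same route as the paper: both reduce the variance to expectations of pairwise products $\E(\#_T Q_i \cdot \#_T Q_j)$ and expand these, via tuple-independence, into polynomially many fact-multiplicity moments of order at most $2k$. The only difference is packaging --- the paper proves a dedicated product-expectation lemma (\cref{lem:expectation-prod}) and works with the variance-of-a-sum/covariance decomposition, whereas you encode $\#_T Q_i \cdot \#_T Q_j$ as the answer count of a single variable-disjoint conjunction $Q_{ij}$ of self-join width at most $2k$ and reuse \cref{lem:expectation-of-CQ}; the resulting moment expansion is identical.
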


As before, the main idea is to rewrite the variance in terms of the moments of fact multiplicities. This can be achieved by exploiting tuple-independence and linearity of expectation. 
The proof of \cref{pro:variance} can be found in \cref{app:variance}.

Despite the fact that the variance of query answers may be of independent interest, it can be also used to obtain bounds for the probability that the true value of $\#_T Q$ is close to its expectation, using the Chebyshev inequality \cite[Theorem 5.11]{Klenke2014}. This can be used to derive bounds on $\Pr(\#_T Q \leq k)$, when the exact value is hard to compute. 

\begin{remark}
    Proposition \ref{pro:variance} extends naturally to higher-order moments:
    If $\rep$ is a TIRS with polynomially computable moments up to order $\ell \cdot k$ and $Q = \bigvee_{i=1}^N Q_i$ a Boolean UCQ where the maximum self-join width among the $Q_i$ is $k$, then all centralized and all raw moments of order up to $\ell$ of $\#_T Q$ are computable in polynomial time. 
    The proof can be found in \cref{app:higher}.
\end{remark}
\section{Answer Count Probabilities}
\label{sec:pqe}

In this section, we treat the alternative version of probabilistic query evaluation in bag PDBs using answer count probabilities rather than expected values. Formally, we discuss the following problem.

\begin{figure}[H]\centering%
    \begin{minipage}{.55\linewidth}
\paramproblem{$\PQE_{\rep}(Q,k)$}
    {A Boolean (U)CQ $Q$, and $k \in \NN$.}
    {A table $T \in \tabs_{\rep}$.}
    {The probability $\Pr(\#_T Q \leq k)$.}
    \end{minipage}
\end{figure}

This problem amounts to evaluating the cumulative distribution function of the random variable $\#_T Q$ at $k$. The properties of this problem bear a close resemblance to the set version of probabilistic query evaluation, and we hence name this problem \enquote{$\PQE$}.

\begin{remark}
    Instead of asking for $\Pr( \#_T Q \leq k )$, we could similarly define the problem of evaluating the probability that $\#_T Q$ is \emph{at least}, or exactly \emph{equal} to $k$. %
    For Boolean CQs, and the class of representation systems we discuss next, it will turn out that $\PQE_{\rep}(Q,k)$ (in the version with $\leq$) is in polynomial time or $\sharpP$-hard, \emph{independent of $k$}. This directly implies the corresponding statement for $\geq$.\footnote{Except for $k=0$, which is always trivial for $\geq$.} It then also follows that if $\PQE_{\rep}(Q,k)$ is in polynomial time, then the equality version can be solved in polynomial time. However, if $\PQE_{\rep}(Q,k)$ is $\sharpP$-hard for all $k$, we can only immediately infer that there exists $k'$ such that computing the probability of exact answer count $k'$ is $\sharpP$-hard. Indeed, consider a representation system $\rep$ in which every possible multiplicity is even. Then $\Pr( \#_T Q = k' ) = 0$ whenever $k$ is odd, even if $Q$ is hard.

    If $k$ were presented as a \emph{unary} encoded input, then each of the three versions of the problem (\enquote{$\leq k$}, \enquote{$\geq k$}, \enquote{$ = k$}) can be solved in polynomial time using oracle accesses to any of the other versions. In their restricted setting, a corresponding discussion for the case of \emph{binary} encoded $k$ can be found in \cite[Lemma 2]{ReSuciu2009}.
\end{remark}

Throughout this section, calculations involve the probabilities for the multiplicities of individual facts. However, we want to discuss the complexity of $\PQE_{\rep}(Q,k)$ independently of the complexity, in $k$, of evaluating the multiplicity distributions. This motivates the following definition, together with taking $k$ to be a parameter, instead of it being part of the input.

\begin{definition2}\label{def:polyprob}
    A TIRS $\rep$ is called a \emph{p-TIRS}, if for all $k \in \NN$ there exists a polynomial $p_k$ such that for all $\lambda \in \Lambda_{ \rep }$, the function $\cod{ \lambda } \mapsto P_{ \lambda }( k )$ can be evaluated in time $\mathcal O\big( p_k( \size{ \cod{ \lambda } } ) \big)$.
\end{definition2}

\Cref{def:polyprob} captures reasonable assumptions for \enquote{efficient} TIRS's with respect to the evaluation of probabilities: If the requirement from the definition is not given, then $\PQE_{\rep}( \exists x \with R(x), k)$ can not be solved in polynomial time, even on the class of tables that only contain a single annotated fact $R(a)$. This effect only arises due to the presence of unwieldy probability distributions in $\rep$.

As it turns out, solving $\PQE_{\rep}(Q,k)$ proves to be far more intricate compared to the problems of the previous section. For our investigation, we concentrate on self-join free conjunctive queries. While some simple results follow easily from the set semantics version of the problem, the complexity theoretic discussions quickly become quite involved and require the application of a set of interesting non-trivial techniques.

Our main result in this section is a dichotomy for Boolean CQs without self-joins. From now on, we employ nomenclature (like \emph{hierarchical}) that was introduced in \cite{DalviSuciu2007b,DalviSuciu2012}. If $Q$ is a Boolean self-join free CQ, then for every variable $x$, we let $\sg(x)$ denote the set of relation symbols $R$ such that $Q$ contains an $R$-atom that contains $x$. We call $Q$ \emph{hierarchical} if for all distinct $x$ and $y$, whenever $\sg(x) \cap \sg(y) \neq \emptyset$, then $\sg( x ) \subseteq \sg( y )$ or $\sg( y ) \subseteq \sg( x )$. This definition essentially provides the separation between easy and hard Boolean CQs without self-joins. In the bag semantics setting, however, there exists an edge case where the problem gets easy just due to the limited expressive power of the representation system. This edge case is governed only by the probabilities for multiplicity zero that appear in the representations. We denote this set by $\zeroPr(\rep)$, i.e., 
\[\zeroPr( \rep ) = \set{ p \in [0,1] \with P_{\lambda}(0) = p \text{ for some } \lambda \in \Lambda(\rep) }\text.\] 
If a p-TIRS satisfies $\zeroPr(\rep) \subseteq \set{0,1}$, then it can only represent bag PDBs whose deduplication is deterministic. In this case, as we will show in the next subsection, the problem becomes easy even for arbitrary UCQs.\footnote{Using the same definition in the set semantics version of the problem would come down to restricting the input tuple-independent PDB to only use $0$ and $1$ as marginal probabilities, so the problem would collapse to traditional (non-probabilistic) query evaluation. Under a bag semantics, there still exist interesting examples in this class, as the probability distribution over non-zero multiplicities is not restricted in any way.} 

\begin{restatable}{theorem}{ourdichotomy}\label{thm:ourdichotomy}
    Let $Q$ be a Boolean CQ without self-joins and let $\rep$ be a p-TIRS. 
    \begin{enumerate} 
    \item If $Q$ is hierarchical or $\zeroPr(\rep) \subseteq \set{0,1}$, then $\PQE_{\rep}(Q,k)$ is solvable in polynomial time for all $k \in \NN$. 
    \item Otherwise, $\PQE(Q,k)$ is $\sharpP$-hard for all $k \in \NN$.
    \end{enumerate}
\end{restatable}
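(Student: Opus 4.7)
The proof splits along the two implications, and I would handle the tractable side first and then the hardness side. For part~(1), I would treat the two sufficient conditions separately. If $Q$ is hierarchical, I would adapt the classical Dalvi--Suciu safe evaluation strategy to bag semantics, as also done in \cite{fink2012aggregation}: decompose $Q$ along its hierarchical structure into nested blocks of variables, and compute the probability mass function of $\#_T Q$ \emph{truncated} to $\set{0,1,\dots,k}$ by recursing on this decomposition. At each step the bag-semantics operations on multiplicities (convolution when splitting the atom set, and an existential-quantification operation that sums multiplicities across domain values) can be executed in polynomial time because only the first $k+1$ entries of each intermediate distribution are needed; the base case is provided by the p-TIRS assumption. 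For the second sufficient condition, observe that $\zeroPr(\rep)\subseteq\set{0,1}$ means that for every annotated fact $f$ in a table $T$ it is deterministic whether $f$ is present with positive multiplicity. The underlying support of $\#_T Q$ is then a fixed polynomial expression in finitely many independent $\NN$-valued multiplicity random variables, and $\Pr(\#_T Q\le k)$ reduces to summing probabilities over the (polynomially many, for fixed $k$) multiplicity tuples on which that expression takes a value at most $k$.

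For part~(2), I would first establish $\sharpP$-hardness of $\PQE_{\rep}(Q,0)$ for every non-hierarchical self-join free CQ $Q$ under the hypothesis $\zeroPr(\rep)\not\subseteq\set{0,1}$. The key observation is that $\Pr(\#_T Q=0)$ depends on the input table only through the zero-multiplicity probabilities of its facts, so $\Pr(\#_T Q=0)=\Pr_{T'}(Q=\false)$ for the set-semantics tuple-independent PDB $T'$ in which fact $f$ is present with probability $1-P_{\lambda_f}(0)$. Fixing a $\lambda\in\Lambda_{\rep}$ with $P_\lambda(0)\in(0,1)$ provides at least one non-trivial marginal probability, and this single non-trivial probability is exactly what the recent strengthening of the Dalvi--Suciu dichotomy by Amarilli and Kimelfeld~\cite{AmarilliKimelfeld2021} needs in order to conclude $\sharpP$-hardness of set-PQE for any non-hierarchical self-join free $Q$. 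Composing that reduction with the identity above yields hardness of $\PQE_{\rep}(Q,0)$.

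The main obstacle, and the technical heart of the proof, is to transfer hardness from $k=0$ to arbitrary fixed $k\ge 1$. The plan is to give a polynomial-time Turing reduction $\PQE_{\rep}(Q,0)\le_{\mathrm{T}}\PQE_{\rep}(Q,k)$ as follows. From the input table $T$, construct a family of modified tables $T_1,\dots,T_M$, obtained by \enquote{inflating} selected parts of $T$ in a parameterized way (for instance, by duplicating annotated atoms, or by tensoring in copies of a controlled gadget), chosen so that the oracle answers $\Pr(\#_{T_i}Q\le k)$ coincide with the values $p(i)$ of a univariate polynomial $p$ of $T$-independent degree, whose leading coefficient equals, up to an explicit nonzero factor, the desired quantity $\Pr(\#_T Q=0)$. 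Once $M=\deg(p)+1$ oracle calls have been made, Lagrange interpolation recovers all coefficients of $p$, and in particular the leading one. The delicate part is designing the inflation so that (i) increasing the parameter strictly dominates the large-multiplicity contributions in a way that isolates the $\#_T Q=0$ event in the leading coefficient, while (ii) the remaining coefficients of $p$ remain finite and the transformations stay within $\tabs_{\rep}$; this is where the novel techniques alluded to in the introduction are required, since the set-semantics dichotomy provides no analogue of such an interpolation argument.
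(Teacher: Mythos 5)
Your treatment of the tractable cases and of the $k=0$ hardness is essentially the paper's own argument: the hierarchical case is handled by a component/maximal-variable recursion on truncated multiplicity distributions (the paper's \cref{thm:PQEhiersjfCQeasy}; note only that splitting into connected components yields a \emph{product} of counts rather than a convolution, so the value $0$ needs the separate complement-probability treatment), the $\zeroPr(\rep)\subseteq\set{0,1}$ case is the same brute-force over the constantly many good valuations (\cref{lem:degenerate}), and the deduplication argument plus \cite{AmarilliKimelfeld2021} for $\PQE_{\rep}(Q,0)$ is exactly \cref{pro:hardlambdas}.

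The genuine gap is in the reduction from $\PQE_{\rep}(Q,0)$ to $\PQE_{\rep}(Q,k)$, which is precisely the part you defer to ``novel techniques.'' Your concrete claim --- that the inflation can be arranged so that the oracle answers are values $p(i)$ of a univariate polynomial whose leading coefficient is $\Pr(\#_T Q=0)$, recoverable by Lagrange interpolation from $\deg(p)+1$ calls --- fails on several counts. First, inflation cannot be applied to $T$ as a whole: \cref{exa:not-inflatable} shows that for a multi-component $Q$ the sum of independent copies of $\#_T Q$ is in general not realizable inside $\tabs_{\rep}$, so one must inflate one connected component $Q_i$ at a time and combine the per-component values of $\Pr(\#_{T_i}Q_i=0)$ afterwards. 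Second, the resulting function of the inflation order, $g(n)=\Pr(Y\cdot\sum_{i=1}^n X_i\le k)-q_0=\sum_{j=0}^k\binom{n}{j}p_0^{\,n-j}z_j$ (\cref{lem:sum_of_indep_copies_times_rv}), is \emph{not} a polynomial in $n$: it is an exponential-polynomial with unknown base $p_0$, so Lagrange interpolation in $n$ does not apply. The paper's way around this is the product trick $h_m(x)=g(m+x)\,g(m-x)$, which cancels the $x$-dependence in the exponent of $p_0$ and makes $h_m$ a genuine polynomial in $x$ of even degree $2j_{\max}$; since $j_{\max}$ is not known in advance, the degree must be detected by finite differences rather than fixed a priori. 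Third, even then the leading coefficient $\lc(h_m)=(-1)^{j_{\max}}(j_{\max}!)^{-2}p_0^{2m-2j_{\max}}z_{j_{\max}}^2$ still contains the unknowns $z_{j_{\max}}$ and a power of $p_0$; the target $p_0$ is only obtained as $\sqrt{\lc(h_{2k+1})/\lc(h_{2k})}$, i.e.\ from the \emph{ratio} of leading coefficients of two such polynomials, not from a single leading coefficient. Without these three ingredients the reduction does not go through, so as written the proposal establishes part (2) only for $k=0$.
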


\begin{remark}
    It is natural to reconsider what happens, if $k$ is treated as part of the input. With a reduction similar to the proof of Proposition 5 in \cite[Proposition 5]{ReSuciu2009}, it is easy to identify situations in which the corresponding problem is $\sharpP$-hard for \emph{binary} encoded $k$. For example, this is already the case for the simple query $\exists x \colon R(x)$,\footnote{The proof in \cite{ReSuciu2009} uses the same query but for a $\Sum$ aggregation over attribute values in their setting.} 
    if the p-TIRS supports all fair coin flips whose outcomes are either a positive integer, or zero.
    A full proof can be found in \cref{app:kinput}.
    Hardness for this simple query does not conflict with the tractability results of \cite{ReSuciu2009,fink2012aggregation}, because they come with strong restrictions to the annotations. These restrictions are violated by the above construction.
\end{remark}

The remainder of this section is dedicated to establishing \cref{thm:ourdichotomy}.

\subsection{Tractable Cases}

Let us first discuss the case of p-TIRS's $\rep$ with $\zeroPr(\rep) \subseteq \set{0,1}$. Here, $\PQE_{\rep}(Q,0)$ is trivial, because the problem essentially reduces to deterministic query evaluation. The following lemma generalizes this to all values of $k$.

\begin{lemma}\label{lem:degenerate}
    If $\rep$ is a p-TIRS with $\zeroPr( \rep ) \subseteq \set{0,1}$, then $\PQE_{\rep}(Q,k)$ is solvable in polynomial time for all Boolean UCQs $Q$, and all $k \in \NN$.
\end{lemma}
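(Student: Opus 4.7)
The plan is to exploit $\zeroPr(\rep) \subseteq \set{0,1}$ to reduce the computation to a probability over a constant-sized collection of facts. First, I would preprocess $T$ by evaluating $P_{\lambda_f}(0)$ for each annotated fact $\bigl(f, \cod{\lambda_f}\bigr) \in T$; by the p-TIRS property this takes polynomial time per fact, and by assumption each value lies in $\set{0,1}$. Let $T^{*}$ denote the facts with $P_{\lambda_f}(0) = 0$. By tuple-independence, a random sample $D \sim T$ almost surely has positive multiplicity on exactly the facts of $T^{*}$; everything else is either missing from $T$ or certainly absent.

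Next I would unfold the definition of $\#_D Q$ as a sum, over disjuncts and valuations, of products of fact multiplicities. Every valuation instantiating some atom outside $T^{*}$ contributes $0$ almost surely, while every \emph{active} valuation---one whose instantiated atoms all lie in $T^{*}$---contributes a product that is at least $1$ almost surely. Let $V$ collect all active (disjunct, valuation) pairs and set $M \coloneqq \size V$. For fixed $Q$, the number of candidate valuations per disjunct is $\size{\adom(T)}^{m}$ with $m$ constant, so $M$ is computable in polynomial time. I would then branch on $M$: if $M = 0$ return $1$, since $\#_T Q = 0$ a.s.; if $M > k$ return $0$, since $\#_T Q \geq M > k$ a.s.

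In the remaining case $1 \leq M \leq k$, let $F$ be the set of facts appearing in the atom instantiations of the active valuations. Then $\size F \leq M \cdot n$, where $n$ is the maximum number of atoms per disjunct of $Q$---a constant once $Q$ and $k$ are fixed. Moreover, whenever $\#_D Q \leq k$, each $\#_D(f)$ with $f \in F$ must satisfy $\#_D(f) \leq k$, for otherwise some monomial containing $f$ would already exceed $k$ (the remaining factors being at least $1$ almost surely). I would enumerate the constantly many tuples $(n_f)_{f \in F}$ with $1 \leq n_f \leq k$, evaluate the polynomial $v(\vec n)$ given by the unfolding of $\#_D Q$, and sum $\prod_{f \in F} P_{\lambda_f}(n_f)$ over those $\vec n$ with $v(\vec n) \leq k$. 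Marginalization over facts outside $F$ contributes factor $1$, since they are either absent a.s.\ or lie in $T^{*}$ with $\sum_{m \geq 1} P_{\lambda_f}(m) = 1$. Each probability $P_{\lambda_f}(n_f)$ is computable in polynomial time in $\size{\cod{\lambda_f}}$ by the p-TIRS property, since $n_f \leq k$ is constant.

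The main obstacle is justifying that this constant-size reduction captures the full probability. The key observation is that the threshold $k$ on $\#_T Q$ simultaneously bounds both the number of contributing valuations and the magnitudes of the relevant multiplicities, so the essential randomness is confined to a set of facts whose size depends only on $Q$ and $k$. Self-joins cause no difficulty, as they only produce higher exponents in the polynomial $v$, which is still evaluated at constantly many integer tuples.
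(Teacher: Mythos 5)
Your proposal is correct and follows essentially the same route as the paper's proof: identify the valuations whose instantiated atoms are all almost surely present, return $0$ or $1$ when their number exceeds $k$ or is zero, and otherwise reduce to a computation over a constant-size set of facts. Your explicit justification of the brute-force step (bounding each relevant multiplicity by $k$ and enumerating the tuples $(n_f)_{f\in F}$) spells out what the paper leaves implicit, and your per-disjunct bookkeeping of active valuations matches the intended reading of the paper's ``good $\alpha$'' count.
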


\begin{proof}
    Let $\rep$ be any p-TIRS with $\zeroPr(\rep) \subseteq \set{0,1}$ and let $Q$ be any Boolean UCQ. If $P_{\lambda}(0) = 1$ for all $\lambda \in \Lambda_{\rep}$, then $\rep$ can only represent the PDB where the empty instance has probability $1$. In this case, $\#_TQ = 0$ almost surely, so $\Pr( \#_T Q \leq k ) = 1$ for all $k \in \NN$. 
    
    In the general case, suppose $Q = \bigvee_{i=1}^{N} Q_i$ such that $Q_1, \dots, Q_N$ are CQs. Let $A$ be the set of functions $\alpha$ that map the variables of $Q$ into the active domain of the input $T$. We call $\alpha$ \emph{good}, if there exists $i \in \set{1,\dots,N}$ such that all the facts emerging from the atoms of $Q_i$ by replacing every variable $x$ with $\alpha(x)$ have positive multiplicity in $T$ (almost surely). If there are at least $k + 1$ good $\alpha$ in $A$, then $\#_T Q > k$ with probability $1$ and, hence, we return $0$ in this case. Otherwise, when there are at most $k$ good $\alpha$, we restrict $T$ to the set of all facts that can be obtained from atoms of $Q$ by replacing all variables $x$ with $\alpha(x)$ (and retaining the parameters $\lambda$). The resulting table $T'$ contains at most $k$ times the number of atoms in $Q$ many facts, which is independent of the number of facts in $T$. Hence, we can compute $\Pr( \#_T Q \leq k ) = \Pr( \#_{T'} Q \leq k )$ in time polynomial in $T$ by using brute-force.
\end{proof}

From now on, we focus on the structure of queries again. The polynomial time procedure for Boolean CQs without self-joins is reminiscent of the original algorithm for set semantics as described in \cite{DalviSuciu2007b}. Therefore, we need to introduce some more vocabulary from their work. A variable $x$ is called \emph{maximal}, if $\sg( y ) \subseteq \sg( x )$ for all $y$ with $\sg( x ) \cap \sg( y ) \neq \emptyset$. With every CQ $Q$ we associate an undirected graph $G_Q$ whose vertices are the variables appearing in $Q$, and where two variables $x$ and $y$ are adjacent if they appear in a common atom. Let $V_1, \ldots, V_m$ be the vertex sets of the connected components of $G_Q$. We can then write the quantifier-free part $Q^*$ of $Q$ as $Q^* = Q_0^* \wedge \bigwedge_{ i = 1 }^{ m } Q_i^*$ where $Q_0^*$ is the conjunction of the constant atoms of $Q$ and $Q_1^*,\dots,Q_m^*$ are the conjunctions of atoms corresponding to the connected components $V_1,\dots,V_m$. We call $Q_1^*,\dots,Q_m^*$ the \emph{connected components} (short: \emph{components}) of $Q$.

\begin{remark}\label{rem:hierarchicalInComponents}
    If $Q$ is hierarchical, then every component of $Q$ contains a maximal variable.\footnote{This is true, since the sets $\sg(x)$ for the variables of any component have a pairwise non-empty intersection, meaning that they are pairwise comparable with respect to $\subseteq$.} Moreover, if $x$ is maximal in a component $Q_i^*$, then $x$ appears in all atoms of $Q_i^*$.
\end{remark}

\begin{remark}\label{rem:altCQcomponents}
    For every CQ $Q$ with components $Q_1^*, \dots , Q_m^*$, and constant atoms $Q_0^*$, the answer on every instance $D$ is given by the product of the answers of the queries $Q_0,\dots,Q_m$, where $Q_i = \exists \tup x_i \with Q_i^*$ (and $Q_0 = Q_0^*$), and $\tup x_i$ are exactly the variables appearing in the component $Q_i^*$. That is,
    \begin{math}
        \#_D Q 
            = 
        \#_D Q_0^* 
            \cdot 
            \prod_{ i = 1 }^{ m } \#_D\big( \exists \tup x_i \with Q_i^* \big)
            \text.
    \end{math}
    This is shown in \cref{app:comp}.
    If convenient, we therefore use $Q_0 \wedge Q_1 \wedge \dots \wedge Q_m$ as an alternative representation of $Q$.
\end{remark}

The main result of this subsection is the following.

\begin{restatable}{theorem}{PQEhiersjfCQeasy}\label{thm:PQEhiersjfCQeasy}
    Let $\rep$ be a p-TIRS, and let $Q$ be a hierarchical Boolean CQ without self-joins. Then $\PQE_{\rep}(Q,k)$ is solvable in polynomial time for each $k \in \NN$.
\end{restatable}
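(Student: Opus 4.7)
The plan is to design a recursive algorithm that, for every sub-CQ $Q'$ arising from the decomposition of $Q$, returns the truncated distribution vector $\bigl(\Pr(\#_T Q' = j)\bigr)_{j=0}^{k}$; the desired answer $\Pr(\#_T Q \leq k)$ is then simply the sum of the $k+1$ entries of this vector for the full query. The two combining operations we will need are a truncated sum convolution (for independent sums $X = X_1 + \dots + X_n$) and a truncated product convolution (for independent products $X = X_1 \cdots X_n$): in both cases the values $\Pr(X = j)$ for $j \leq k$ are determined solely by the values $\Pr(X_i = \ell)$ for $\ell \leq k$, because neither the sum nor the product of non-negative integers can be $\leq k$ once any operand exceeds $k$.

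As the base case, we handle sub-CQs with no variables. By self-join freeness, $\#_T Q'$ is a product of the multiplicities of pairwise distinct facts, which are independent by tuple-independence; the p-TIRS assumption lets us compute each factor's truncated distribution in polynomial time, and truncated product convolution then yields the truncated distribution of $\#_T Q'$.

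For the inductive step we first apply \cref{rem:altCQcomponents} to write $\#_T Q' = \#_T Q_0 \cdot \prod_{i=1}^{m} \#_T Q_i$, with $Q_0$ gathering the constant atoms and $Q_1, \dots, Q_m$ the connected components. Since the components share no variables and $Q'$ is self-join free, the $\#_T Q_i$ depend on pairwise disjoint sets of facts and are independent; we combine their truncated distributions via a truncated product. For each component $Q_i$ containing at least one variable we pick a maximal variable $x$, which by \cref{rem:hierarchicalInComponents} appears in every atom of $Q_i$. Writing $Q_i^{(a)}$ for the CQ obtained by substituting $x := a$ and existentially quantifying the remaining variables, we have
\[
    \#_T Q_i = \sum_{a \in \adom(T)} \#_T Q_i^{(a)}\text,
\]
and for $a \neq a'$ the facts occurring in $Q_i^{(a)}$ and $Q_i^{(a')}$ are disjoint (the $x$-coordinate takes different values in every atom), so tuple-independence makes the summands independent. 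We then recurse on each $Q_i^{(a)}$ and combine via truncated sum convolution. Note that $Q_i^{(a)}$ remains hierarchical and self-join free: no new atoms are introduced, and the sets $\sg(y)$ for surviving variables $y$ are unchanged under the substitution. Moreover $Q_i^{(a)}$ has strictly fewer variables than $Q_i$, so the recursion terminates.

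Since $Q$ is fixed, the number of variables and atoms in $Q$ is constant, the recursion has constant depth, and each level multiplies the number of sub-problems by at most $\size{\adom(T)}$, yielding $\size{\adom(T)}^{O(1)}$ sub-problems overall; each is solvable in polynomial time using $O(k^2)$-time truncated convolutions and polynomial-time probability look-ups, which together are polynomial in $\size{T}$ since $k$ is a fixed constant. The main obstacle is a careful verification that the key invariants—hierarchicality, self-join freeness, and pairwise independence of the sub-query random variables after each decomposition and substitution step—are preserved throughout the recursion; once these are in place, the arithmetic on truncated distributions is routine.
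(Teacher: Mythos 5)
Your overall strategy---decompose $Q$ into its constant part and connected components, combine components by independence, ground a maximal variable inside each component and recurse---is exactly the paper's algorithm (\cref{app:comp}); organizing the computation around truncated distribution vectors rather than around the individual probabilities $\Pr(\#_T Q = k')$ is only a cosmetic difference. The independence claims, the preservation of hierarchicality and self-join freeness under substitution, and the complexity accounting are all correct and match the paper.

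However, there is one genuine gap, and it is precisely the one subtlety the paper flags as needing ``special treatment.'' Your justification for the truncated \emph{product} convolution---``the product of non-negative integers cannot be $\leq k$ once any operand exceeds $k$''---is false when an operand is zero: $0 \cdot N = 0 \leq k$ for arbitrarily large $N$. Consequently, the naive truncated product convolution, i.e.\ summing $\prod_i \Pr(X_i = j_i)$ over tuples with $\prod_i j_i = j$ and all $j_i \leq k$, computes the entry $j = 0$ incorrectly: it misses every event in which some $X_i = 0$ while another $X_{i'} > k$. For instance, with $k = 2$ and $X_1, X_2$ each uniform on $\set{0,5}$, the true value of $\Pr(X_1 X_2 = 0)$ is $\tfrac34$, while the truncated convolution yields $\tfrac14$. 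The conclusion you want (that the truncated vector of a product is determined by the truncated vectors of the factors) is still true, but only because the $j=0$ entry can be computed separately via the complement, $\Pr\bigl(\prod_i X_i = 0\bigr) = 1 - \prod_i \bigl(1 - \Pr(X_i = 0)\bigr)$; for $j \geq 1$ your argument is sound, since then every factor must be between $1$ and $j \leq k$. This is exactly how the paper resolves it (see the discussion around \eqref{eq:zero-prob-of-product}, where the case $k=0$ is handled by passing to the converse probability because the factorization sum would otherwise have infinitely many terms). Your proof is complete once you replace the stated justification with this two-case treatment of the product combination step, both in the base case (the conjunction of constant atoms) and in the component-combination step.
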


\begin{sketch}
    The theorem is established by giving a polynomial time algorithm that computes, and adds up the probabilities $\Pr\big( \#_T Q = k' )$ for all $k' \leq k$. The important observation is that (as under set semantics) the components $Q_i$ of the query (and the conjunction $Q_0$ of the constant atom) yield independent events, which follows since $Q$ is self-join free. In order to compute the probability of $\#_T Q = k'$, we can thus sum over all decompositions of $k'$ into a product $k' = k_0 \cdot k_1 \cdot \dots \cdot k_m$, and reduce the problem to the computations of $\Pr\big( \#_T Q_i = k_i )$. Although the cases $k = 0$, and the conjunction $Q_0$ have to be treated slightly different for technical reasons, we can proceed recursively: every component contains a maximal variable, and setting this variable to any constant, the component potentially breaks up into a smaller hierarchical, self-join free CQ. Investigating the expressions shows that the total number of operations on the probabilities of fact probabilities is polynomial in the size of $T$.
\end{sketch}

\begin{remark}
    The full proof of \cref{thm:PQEhiersjfCQeasy} is contained in \cref{app:comp}.
    As pointed out, the proof borrows main ideas from the algorithm for the probabilistic evaluation of hierarchical Boolean self-join free CQs on tuple-independent PDBs with set semantics, as presented in \cite[p.~30:15]{DalviSuciu2012} (originating in \cite{DalviSuciu2004,DalviSuciu2007a}). 
    The extension to multiplicities is essentially the same as in \cite[Lemma 1]{ReSuciu2009} or \cite[Theorem 3]{fink2012aggregation},
    see our discussion of \hyperref[sec:relatedwork]{related work}.
\end{remark}

\subsection{Intractable Cases}\label{ssec:intractable}

We now show that in the remaining case (non-hierarchical queries and p-TIRS's with $\zeroPr(\rep) \cap (0,1) \neq \emptyset$), the problems $\PQE_{\rep}(Q,k)$ are all hard to solve.

Let $Q$ be a fixed query and let $\PQEset(Q)$ denote the traditional set version of the probabilistic query evaluation problem. That is, $\PQEset(Q)$ is the problem to compute the probability that $Q$ evaluates to $\true$ under \emph{set} semantics, on input a tuple-independent \emph{set} PDB. We recall that the bag version $\PQE_{\rep}(Q,k)$ of the problem (introduced at the beginning of the section) takes the additional parameter $k$, and depends on the representation system $\rep$. Let us first discuss $\PQE_{\rep}(Q,k)$ for $k = 0$. In this case, subject to very mild requirements on $\rep$, we can lift $\sharpP$-hardness from the set version \cite{DalviSuciu2012}, even for the full class of UCQs.

\begin{proposition}\label{pro:hardlambdas}
    Let $S \subseteq [0,1]$ be finite and let $\rep$ be a p-TIRS such that $1-p \in \zeroPr(\rep)$ for all $p \in S$. Let $Q$ be a Boolean UCQ. If\/ $\PQEset(Q)$\/ is $\sharpP$-hard on tuple-independent (set) PDBs with marginal probabilities in $S$, then $\PQE_{\rep}(Q,0)$ is $\sharpP$-hard.
\end{proposition}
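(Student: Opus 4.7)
The plan is to reduce $\PQEset(Q)$, restricted to tuple-independent set PDBs whose marginal probabilities lie in $S$, to $\PQE_{\rep}(Q,0)$. Given such a set PDB $\D$ with fact set $F$ and marginal probabilities $\mathrm{pr}(f) \in S$, I will build a table $T \in \tabs_{\rep}$ as follows. By assumption, for every $p \in S$ there exists $\lambda_p \in \Lambda_{\rep}$ with $P_{\lambda_p}(0) = 1-p$; since $S$ is finite and fixed (it is part of the problem parameter, not the input), I may fix one such $\lambda_p$ for each $p \in S$ once and for all, so the encodings $\cod{\lambda_p}_{\rep}$ are constants of size $O(1)$. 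Then I set $T \coloneqq \set{ (f, \cod{\lambda_{\mathrm{pr}(f)}}_{\rep}) : f \in F }$. This construction is clearly polynomial in $\size{\D}$.

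The key observation is a zero/non-zero correspondence between bag semantics and set semantics: for any bag instance $D$ and Boolean UCQ $Q$, we have $\#_D Q = 0$ if and only if $Q$ evaluates to $\false$ under set semantics on the underlying set of facts $\mathrm{supp}(D) \coloneqq \set{ f : \#_D(f) > 0 }$. For a CQ this follows immediately from \eqref{eq:CQsem}, because the sum of products is a sum of non-negative integers and vanishes iff every summand vanishes, i.e.\ iff no variable assignment $\tup a$ makes every required fact $R_i(\tup t_i[\tup x / \tup a])$ have positive multiplicity. For a UCQ this extends via \eqref{eq:UCQsem}, since a sum of non-negative integers is zero iff each summand is zero.

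Combining these ingredients, the random set $\mathrm{supp}(D)$, where $D \sim \sem{T}$, has exactly the same distribution as the random set of present facts in $\D$: the events $\set{f \in \mathrm{supp}(D)}$ are independent by the tuple-independence of $\sem{T}$, and each occurs with probability $1 - P_{\lambda_{\mathrm{pr}(f)}}(0) = \mathrm{pr}(f)$. Therefore
\begin{equation*}
    \Pr_{D \sim \sem{T}}( \#_D Q = 0 )
        =
    \Pr_{D' \sim \D}\bigl( Q \text{ is } \false \text{ on } D' \bigr)
        =
    1 - \PQEset(Q)(\D)\text,
\end{equation*}
so an oracle for $\PQE_{\rep}(Q,0)$ on input $T$ yields $\PQEset(Q)(\D)$ in polynomial time, giving $\sharpP$-hardness.

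I do not expect any real obstacle here; the only point that requires care is the argument that the encodings $\cod{\lambda_p}_{\rep}$ may be treated as constants, which uses finiteness of $S$ in an essential way. If $S$ were not finite (or not fixed), one would also have to verify that the encodings of the $\lambda_p$'s remain polynomial in the input size, but under the stated hypothesis this is immediate.
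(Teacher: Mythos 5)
Your proof is correct and follows essentially the same route as the paper's: the same construction of $T$ via fixed parameters $\lambda_p$ with $P_{\lambda_p}(0)=1-p$, the same observation that $\#_D Q > 0$ iff $Q$ holds on the deduplication (your $\mathrm{supp}(D)$ is the paper's $\delta(D)$), and the same distributional identification of the deduplicated random instance with $\D$. You merely spell out two details the paper leaves implicit, namely the zero/non-zero correspondence from the semantics equations and the constant-size encodings guaranteed by finiteness of $S$.
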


\begin{proof}
    Let $\D$ be an input to $\PQEset(Q)$ with fact set $F$ where all marginal probabilities are in $S$, given by the list of all facts $f$ with their marginal probability $p_f$.
    For all $p \in S$, pick $\lambda_p \in \Lambda$ such that $P_{\lambda}(0) = 1 - p$.
    Let $T = \bigcup_{f \in F}\set[\big]{ (f,\cod{\lambda_{p_f}})}$, and let $\delta$ be the function that maps every instance $D$ of $T$ to its deduplication $D'$ (which is an instance of $\D$). Then, by the choice of the parameters, we have $\Pr_{ D \sim \sem{T} }\bigl(\delta(D) = D'\bigr) = \Pr_{\D} \bigl(\set{D'}\bigr)$ for all $D'$. 
    Moreover, $\#_D Q > 0$ if and only if $\delta(D)\models Q$. Thus,
    \[
        \Pr_{ D \sim \sem{T} }\bigl( \#_D Q > 0 \bigr)
            =
        \Pr_{ D \sim \sem{T} }\bigl( \delta(D) \models Q \bigr)
            =
        \Pr_{ D' \sim \D }\bigl( D' \models Q \bigr)\text.
    \]
    Therefore, $\PQEset(Q)$ over tuple-independent PDBs with marginal probabilities from $S$ can be solved by solving $\PQE_{\rep}(Q,0)$.
\end{proof}

\begin{remark}
    This is similar to the approach taken in \cite[Theorem 2]{ReSuciu2009} where the authors establish a dichotomy for $\Count$ aggregations over %
    block-independent disjoint set
    PDBs. They do not give the proof details for the hardness part of $\Count$, but it is to be assumed that the main idea is the same.
\end{remark}

Remarkably, \cite[Theorem 2.2]{KenigSuciu2021} show that Boolean UCQs for which $\PQEset(Q)$ is $\sharpP$-hard are already hard when the marginal probabilities are restricted to $S = \set{c, 1}$, for any rational $c\in(0,1)$. Hence, $\PQE_\rep(Q,0)$ is also $\sharpP$-hard on these queries, as soon as $\set{0, 1 - c} \subseteq \zeroPr(\rep)$.

 \smallskip

Our goal is now to show that if $\rep$ is a p-TIRS, then for any Boolean CQ $Q$ without self-joins, $\sharpP$-hardness of $\PQE_{\rep}(Q,0)$ transfers to $\PQE_{\rep}(Q,k)$ for all $k > 0$.

\begin{theorem}\label{thm:zero-to-k-reduction}
    Let\/ $\rep$ be a p-TIRS and let $Q$ be a Boolean self-join free CQ. Then, if\/ $\PQE_{\rep}(Q,0)$ is $\sharpP$-hard, $\PQE_{\rep}(Q,k)$ is $\sharpP$-hard for each $k \in \NN$.
\end{theorem}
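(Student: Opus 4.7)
My plan is to establish a polynomial-time Turing reduction from $\PQE_{\rep}(Q,0)$ to $\PQE_{\rep}(Q,k)$, from which $\sharpP$-hardness transfers. Given an input $T$ and oracle access to $\PQE_{\rep}(Q,k)$, the task is to compute $p_0 \coloneqq \Pr(\#_T Q = 0)$. Since $\PQE_{\rep}(Q,0)$ is assumed $\sharpP$-hard, the contrapositive of \cref{lem:degenerate} guarantees the existence of a $\lambda^* \in \Lambda_{\rep}$ with $\alpha \coloneqq P_{\lambda^*}(0) \in (0,1)$. Let $s$ be the number of atoms of $Q$.

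The approach is polynomial interpolation. For each $n \in \set{0, 1, \ldots, k}$, I construct a modified table $T^{(n)}$ by augmenting $T$ with $n$ independent \emph{witness copies} of $Q$. Each witness is obtained by substituting fresh constants (disjoint from $\adom(T)$ and from the other witnesses) for the variables of $Q$ into all of its atoms, yielding $s$ new facts that are annotated with $\lambda^*$. Because $Q$ is self-join free and the constants are fresh, any assignment that would mix constants from $T$ with constants from a witness, or from two different witnesses, hits an atom whose instantiated fact is absent from $T^{(n)}$ and hence contributes zero to the bag-semantics sum. This yields the key additive identity
\[
    \#_{T^{(n)}} Q = \#_T Q + Z_n\text{,}
\]
for connected $Q$, where $Z_n = V_1 + \cdots + V_n$ is a sum of independent, identically distributed random variables $V_i$, each equal to the product of $s$ independent multiplicities drawn from $P_{\lambda^*}$, and $Z_n$ is independent of $\#_T Q$. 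Disconnected self-join free $Q$ reduce to the connected case component-wise, since distinct components of a self-join free CQ necessarily use disjoint sets of relation symbols.

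Writing $\sigma_n \coloneqq \Pr(\#_{T^{(n)}} Q \leq k)$ and using independence of $\#_T Q$ and $Z_n$ gives $\sigma_n = \sum_{m=0}^{k} p_m \cdot \Pr(Z_n \leq k - m)$. Let $f(z) \coloneqq \E(z^{V_1})$ and $\alpha_V \coloneqq f(0) = \Pr(V_1 = 0) > 0$. Applying the binomial theorem to $(f(z)/\alpha_V)^n = (1 + (f(z)/\alpha_V - 1))^n$ and reading off the coefficient of $z^j$, I will show that $\Pr(Z_n \leq j) / \alpha_V^n$ is a polynomial in $n$ of degree at most $j$, whose coefficients can be computed in polynomial time from $\lambda^*$ and $Q$ using the p-TIRS assumption. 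Consequently, $\sigma_n / \alpha_V^n$ equals the value at $n$ of a polynomial $P(\cdot)$ of degree at most $k$, whose coefficients depend linearly on the unknowns $p_0, \ldots, p_k$.

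Querying the oracle at $n = 0, 1, \ldots, k$ and rescaling by $\alpha_V^{-n}$ yields $k+1$ evaluations of $P$, from which $P$ is fully recovered by standard polynomial interpolation. In the favorable case $P_{\lambda^*}(1) > 0$, a direct computation shows the coefficient of $n^k$ in $P$ to be an explicitly known, non-zero rational multiple of $p_0$; reading off the leading coefficient then completes the reduction. I expect the main obstacle to be the remaining case where $P_{\lambda^*}(1) = 0$ but $P_{\lambda^*}(m^*) > 0$ for some $m^* \geq 2$: the degree of $\Pr(Z_n \leq k-m)/\alpha_V^n$ is then $\lfloor (k-m)/m^* \rfloor$ rather than $k-m$, and the top-degree coefficient of $P$ mixes several of the $p_m$, so leading-coefficient extraction no longer isolates $p_0$. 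To sidestep this, I would enrich the construction by mixing in witnesses built from different annotations (available in $\Lambda_{\rep}$, or reused from $T$) so as to broaden the effective support of the $V_i$, thereby producing a larger family of modified tables whose oracle responses form a system of linear equations in $p_0, \ldots, p_k$ with invertible coefficient matrix; $p_0$ is then recovered by linear algebra.
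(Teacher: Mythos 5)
Your proposal takes a genuinely different route from the paper --- an additive shift of the answer count by independent canonical-database witnesses, followed by polynomial interpolation in the number $n$ of witnesses, rather than the paper's multiplicative inflation of i.i.d.\ copies of a component (\cref{alg:inflation}) combined with the leading-coefficient ratio of $h_m(x)=g(m+x)\cdot g(m-x)$. However, it has two gaps, one of which is serious. The smaller one: the additive identity $\#_{T^{(n)}}Q=\#_TQ+Z_n$ already fails when $Q$ has a constant atom or more than one connected component, since bag semantics \emph{multiplies} the component counts (cf.\ \cref{rem:altCQcomponents} and \cref{exa:not-inflatable}), and the oracle only answers questions about the whole query. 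Treating components \enquote{component-wise} therefore forces you to control the remaining components (the paper substitutes a canonical database with a fixed annotation for them), which reintroduces a multiplicative factor $Y$ and replaces your coefficient vectors $\bigl(\Pr(Z_n\leq k-m)\bigr)_{m}$ by mixtures over the distribution of $Y$. This is repairable, but it is real work that your sketch does not do.

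The serious gap is that the \enquote{unfavorable case} you defer to the end is exactly the crux of the theorem, and the fix you outline is not available in general. The only annotation you are guaranteed is some $\lambda^{*}$ with $P_{\lambda^{*}}(0)\in(0,1)$; nothing prevents \emph{every} $P_{\lambda}$, $\lambda\in\Lambda_{\rep}$, from having positive support contained in $\set{k+1,k+2,\dots}$. Then every constructible witness value, being a product of such multiplicities, is either $0$ or exceeds $k$, so every shift $Z$ you can realize satisfies $\Pr(Z\leq k-m)=\Pr(Z=0)$ for all $m\leq k$: all achievable coefficient vectors are scalar multiples of $(1,\dots,1)$, the linear system has rank one no matter how many witness families you mix in, and your oracle calls return nothing beyond $\Pr(\#_{T}Q\leq k)$. (That particular scenario happens to force $p_{1}=\dots=p_{k}=0$, so $p_0$ is still recoverable, but exploiting this requires a case analysis over the joint support structure of $\Lambda_{\rep}$ that you have not supplied; and intermediate patterns, e.g.\ all positive multiplicities even, genuinely entangle several $p_{m}$ in every achievable functional.) The paper's construction is built precisely to be immune to this issue: because the component is inflated multiplicatively, the unknown $p_{0}$ enters through the terms $\binom{n}{j}p_{0}^{\,n-j}z_{j}$ of \cref{lem:sum_of_indep_copies_times_rv}, the polynomial $h_{m}$ of \eqref{eq:hm} has leading coefficient proportional to $p_{0}^{\,2m-2j_{\max}}z_{j_{\max}}^{2}$ for \emph{whichever} index $j_{\max}$ survives, and the ratio \eqref{eq:compute_p0} recovers $p_{0}$ regardless of which $z_{j}$ vanish, with finite differences locating $j_{\max}$ at run time. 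Your plan needs an analogous mechanism that works for arbitrary multiplicity supports, and as stated it does not have one.
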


Proving \cref{thm:zero-to-k-reduction} is quite involved, and is split over various lemmas in the remainder of this subsection. Let $\rep$ be any fixed p-TIRS and let $Q$ be a Boolean CQ without self-joins. We demonstrate the theorem by presenting an algorithm that solves $\PQE_{\rep}(Q,0)$ in polynomial time, when given an oracle for $\PQE_{\rep}(Q,k)$ for \emph{any} positive $k$. In \cref{app:illustration}, we illustrate this procedure on a concrete example.

Clearly, we cannot simply infer $\Pr( \#_T Q = 0 )$ from $\Pr( \#_T Q \leq k )$. 
Naively, we would want to shift the answer count of $Q$ by $k$, so that the problem could be answered immediately. However, this is not possible in general. Our way out is to use the oracle several times, on manipulated inputs. Since the algorithms we describe are confined to the p-TIRS $\rep$, we are severely restricted in the flexibility of manipulating the probabilities of fact multiplicities: unless further assumptions are made, we can only work with the annotations that are already present in the input $T$ to the problem. We may, however, also drop entries from $T$ or introduce copies of facts using new domain elements.

For a given table $T$ and a fixed single-component query $Q$, we exploit this idea in \cref{alg:inflation} in order to construct a new table $T^{(m)}$, called the \emph{inflation} of $T$ of order $m$. It has the property that $\#_{T^{(m)}} Q$ is the sum of answer counts of $Q$ on $m$ independent copies of $T$. A small example for the result of running \cref{alg:inflation} for $m = 2$ is shown in \Cref{fig:inflatex}. We will later use oracle answers on several inflations in order to interpolate $\Pr(\#_T Q_i = 0)$ per component $Q_i$ individually, and then combine the results together.

\begin{algorithm}
    \small%
    \caption{$\inflate_{Q}(T,m)$}\label{alg:inflation}
    \begin{algorithmic}[1]
        \Parameter Boolean self-join free CQ $Q$ with a single component and no constant atoms
        \Input $T \in \tabs_{\rep}$, $m \in \NN$
        \Output Inflation of order $m$ of $T$: $T^{(m)} = \bigcup_{ i = 1 }^{ m } T_{m,i} \in \tabs_{\rep}$ such that
        \begin{enumerate}[(O1)]
            \item\label{itm:inflation1} for all $i \neq j$ we have $T_{m,i} \cap T_{m,j} = \emptyset$, 
            \item\label{itm:inflation2} for all $i = 1, \dots, m$ we have $\#_{ T_{m,i} } Q \sim \#_T Q$ i.i.d., and
            \item\label{itm:inflation3} $\#_{ T^{(m)} } Q = \sum_{ i = 1 }^{ m } \#_{ T_{m,i} } Q$.
        \end{enumerate}
        \algrule%
        \State Initialize $T_{m,1}, \dots, T_{m,m}$ to be empty.
        \State For each domain element $a$, introduce new pairwise distinct elements $a^{(1)},\dots,a^{(m)}$.%
        \ForAll{relation symbols $R$ appearing in $Q$}
            \State Let $R(t_1,\dots,t_r)$ be the unique atom in $Q$ with relation symbol $R$.
            \ForAll{pairs of the form $\big(R(a_1,\dots,a_r),\cod{\lambda}\big) \in T$}
                    \ForAll{$i = 1, \dots, m$}
                    \State \begin{varwidth}[t]{.9\linewidth}             
                    Add $\big( R(a_{i,1},\dots,a_{i,k}), \cod{\lambda} \big)$ to $T_{m,i}$ where $a_{i,j} = \begin{cases} a_j^{(i)}\text{, if } t_j \text{ is a variable;}\\
                    a_j \text{, if } t_j \text{ is a constant.}
                    \end{cases}$\strut%
                    \end{varwidth}
            \EndFor
        \EndFor
        \EndFor
        \State \Return $T^{(m)} \coloneqq \bigcup_{ i = 1 }^m T_{m,i}$
    \end{algorithmic}
\end{algorithm}\newcommand{\refo}[1]{\hyperref[itm:inflation#1]{(O#1)}}

\begin{restatable}{lemma}{inflateproperties}\label{lem:inflateproperties}
    For every fixed Boolean self-join free CQ $Q$ with a single component and no constant atoms, \cref{alg:inflation} runs in time $\mathcal O\big( \size{ T } \cdot m \big)$, and satisfies the output conditions \textnormal{\refo1}, \textnormal{\refo2} and \textnormal{\refo3}.
\end{restatable}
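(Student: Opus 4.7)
The plan is to verify the three output conditions separately, beginning with the runtime bound, which follows by inspection: the outer loop ranges over the relation symbols of $Q$ (a constant number, since $Q$ is fixed), the middle loop over at most $\size{T}$ annotated facts, and the inner loop performs $m$ constant-time insertions of facts whose arity is bounded by $Q$. Thus the total running time is $\mathcal{O}(\size{T}\cdot m)$. The disjointness condition (O1) follows from the hypothesis that $Q$ has no constant atoms: every fact inserted into $T_{m,i}$ therefore has at least one position occupied by a symbol of the form $a^{(i)}$, which by construction does not appear in any $T_{m,j}$ with $j\neq i$.

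For (O2), consider the renaming $\rho_i$ that sends $a^{(i)}\mapsto a$ and is the identity on all other symbols. Applied fact-wise, $\rho_i$ yields an annotation-preserving bijection between $T_{m,i}$ and the restriction of $T$ to the relation symbols appearing in $Q$. Since $Q$ mentions only those relation symbols, this implies that $\#_{T_{m,i}}Q$ and $\#_T Q$ have the same distribution. Mutual independence across $i$ follows because each $\#_{T_{m,i}}Q$ is a function of the multiplicities of the facts in $T_{m,i}$ alone, and by (O1) these fact sets are pairwise disjoint, so by tuple-independence of $\rep$ the corresponding multiplicity random variables are mutually independent.

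The main obstacle, as expected, is (O3). The key idea is a coherence argument. Set $A_i \coloneqq \set{b^{(i)} : b\in\adom(T)}$, and call an assignment $\tup a$ of the quantified variables of $Q$ into $\adom(T^{(m)})$ \emph{$i$-coherent} if every component of $\tup a$ lies in $A_i$. I will show that any $\tup a$ contributing non-trivially to the product defining $\#_{T^{(m)}}(Q^*[\tup x/\tup a])$ in \eqref{eq:CQsem} is $i$-coherent for exactly one $i$. Indeed, for each atom $R_\ell(\tup t_\ell)$ of $Q$, the instantiated fact must lie in some $T_{m,i_\ell}$; because the atom contains at least one variable, the superscript of $\tup a$ at that variable position pins down $i_\ell$. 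Since $Q$ forms a single connected component in $G_Q$, chains of shared variables propagate this superscript, forcing all $i_\ell$ to agree on a common $i$. For $i$-coherent $\tup a$, the relevant fact multiplicities in $T^{(m)}$ equal those in $T_{m,i}$ by (O1), and, transported via $\rho_i$, the contribution matches exactly the corresponding term in the sum defining $\#_{T_{m,i}}Q$. Partitioning the sum for $\#_{T^{(m)}}Q$ over the coherence classes then yields $\sum_{i=1}^m \#_{T_{m,i}}Q$, which is (O3).
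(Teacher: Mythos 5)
Your proof is correct and takes essentially the same route as the paper's: the runtime and (O1) arguments coincide, your renaming $\rho_i$ is the paper's one-to-one correspondence between instances of $T_{m,i}$ and (the relevant part of) $T$, and your $i$-coherence argument is a spelled-out version of the paper's observation that, because $Q$ has a single connected component, the relevant valuations partition according to the superscript $i$. The extra detail on how connectedness propagates the superscript along shared variables is exactly the justification the paper relegates to a footnote.
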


The proof of \cref{lem:inflateproperties} can be found in \cref{app:pqe}.
The assumption that the input to \cref{alg:inflation} is self-join free with just a single connected component and no constant atoms is essential to establish \refo{1} and \refo{3}, because it eliminates any potential co-dependencies among the individual tables $T_{m,1},\dots,T_{m,m}$ we create. 
The following example shows that this assumption is inevitable, as the conditions of \cref{lem:inflateproperties} can not be established in general.

\begin{example}\label{exa:not-inflatable}
    Let $\rep$ be a TIRS with $\Lambda_{\rep} = \set{ \lambda }$ such that $P_{ \lambda }( 2 ) = P_{ \lambda }( 3 ) = \frac12$ (and $P_{\lambda}( k ) = 0$ for all $k \notin \set{2,3}$). Consider $Q = \exists x \exists y \with R(x) \wedge S(y)$, and $T = \big( (R(1), \cod{\lambda}), (S(1),\cod{\lambda} ) \big) \in \tabs_{\rep}$. Note that $Q$ has two components and, hence, does not satisfy the assumptions of \cref{lem:inflateproperties}. Then $\#_T Q$ takes the values $4$, $6$ and $9$, with probabilities $\frac{1}{4},\frac{1}{2},\frac{1}{4}$. Thus, if $X,Y \sim \#_T Q$ i.i.d., then $X+Y$ is $13$ with probability $\frac{1}{16} + \frac{1}{16} = \frac{1}{8}$. However, for every $T' \in \tabs_{\rep}$, the random variable $\#_{T'} Q$ almost surely takes composite numbers, as it is equal to the sum of all multiplicities of $R$-facts, times the sum of all multiplicities of $S$-facts, both of these numbers being either $0$ or at least $2$. Thus, there exists no $T' \in \tabs_{\rep}$ such that $\#_{T'} Q = X+Y$. 
\end{example}

\begin{figure}%
    \begin{subfigure}[t]{.5\linewidth}\centering%
        {\small\textbf{Table} $T$}\\[1.5ex]%
        \begin{tabular}{ c l }\toprule
            Relation $R$ & Parameter\\\midrule
            $R(a,a,a)$ & $(\strBinom,\mathtt{10},\mathtt{1/3})$\\
            $R(a,b,c)$ & $(\strGeom,\mathtt{1/2})$\\\bottomrule
        \end{tabular}%
    \end{subfigure}%
    \begin{subfigure}[t]{.5\linewidth}\centering%
        {\small\textbf{Table} $T^{(2)}$}\\[1.5ex]%
        \begin{tabular}{ c l }\toprule
            Relation $R$ & Parameter\\\midrule
            $R(a^{(1)},a^{(1)},a)$ & $(\strBinom,\mathtt{10},\mathtt{1/3})$\\
            $R(a^{(1)},b^{(1)},c)$ & $(\strGeom,\mathtt{1/2})$\\
            $R(a^{(2)},a^{(2)},a)$ & $(\strBinom,\mathtt{10},\mathtt{1/3})$\\
            $R(a^{(2)},b^{(2)},c)$ & $(\strGeom,\mathtt{1/2})$\\\bottomrule
        \end{tabular}%
    \end{subfigure}%
    \caption{Example of a table $T$ and its inflation $T^{(2)}$ for the query $\exists x,y \colon R(x,y,a)$.}
    \label{fig:inflatex}
\end{figure}

For this reason, our main algorithm will call \cref{alg:inflation} independently, for each connected component $Q_i$ of $Q$. Then, \cref{alg:inflation} does not inflate the whole table $T$, but only the part $T_i$ corresponding to $Q_i$. 
If $Q = Q' \wedge Q_i$ and 
we denote 
$\#_{T_i } Q_i$ by $X$ and $\#_{T \setminus T_i } Q'$ by $Y$, then replacing $T_i$ in $T$ with its inflation of order $n$ yields a new table with answer count
$(\#_{T \setminus T_i } Q') \cdot (\#_{\smash{T_{\smash{i}}^{(n)}}}Q_i ) = Y \cdot \sum_{i=1}^n X_i$,
where $X_1,\dots,X_n \sim X$ i.i.d.
Before further describing the reduction, we first explore some algebraic properties of the above situation in general.

\begin{restatable}{lemma}{sumofindepcopiestimesrv}\label{lem:sum_of_indep_copies_times_rv}
    Let $X$ and $Y$ be independent random variables with values in $\NN$ and let $k \in \NN$. Suppose $X_1,X_2, \ldots $ are i.i.d.\ random variables with $X_1 \sim X$.  Let $p_0 \coloneqq \Pr(X = 0)$ and $q_0 \coloneqq \Pr(Y = 0)$. Then, there exist $z_1, \ldots, z_k \geq 0$ such that for all $n \in \NN$ we have
    \[
        \Pr\bigg( Y \cdot \sum_{i = 1}^n X_i \leq k\bigg) 
            = 
        q_0 + (1-q_0) \cdot p_0^n + \sum_{j = 1}^k \binom{n}{j} \cdot p_0^{n - j} \cdot z_j \text.
    \]
\end{restatable}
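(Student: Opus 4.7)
Write $S_n = X_1 + \dots + X_n$ and split the event according to whether $Y = 0$. By independence,
\[
    \Pr\bigl(Y \cdot S_n \leq k\bigr)
        =
    q_0 + \Pr(Y \geq 1, Y \cdot S_n \leq k)\text.
\]
Since $Y$ and $S_n$ are $\NN$-valued, the conjunction $Y \geq 1 \wedge Y \cdot S_n \leq k$ forces $S_n \leq k$. I would then decompose the second summand according to the value of $S_n \in \set{0, 1, \dots, k}$: the contribution of $S_n = 0$ is $(1 - q_0) \cdot \Pr(S_n = 0) = (1-q_0) p_0^n$, and for each $j \in \set{1, \dots, k}$ the contribution is $\Pr(1 \leq Y \leq \lfloor k/j \rfloor) \cdot \Pr(S_n = j)$, where the first factor is a constant depending only on $Y$ and $k$.

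The crucial step is to bring $\Pr(S_n = j)$ for $j \in \set{1, \dots, k}$ into the required form $\binom{n}{\ell} p_0^{n-\ell} \cdot (\text{constant})$. The plan is to condition on $N \coloneqq \card{\set{i \with X_i > 0}}$, which is $\Binom(n, 1 - p_0)$-distributed by the i.i.d.\ assumption. Letting $W, W_1, W_2, \dots$ be i.i.d.\ with the distribution of $X$ conditioned on $X \geq 1$, and using that $W \geq 1$ almost surely (so $W_1 + \dots + W_\ell = j$ requires $\ell \leq j$), I obtain
\[
    \Pr(S_n = j)
        =
    \sum_{ \ell = 1 }^{ j }
        \binom{n}{\ell} (1 - p_0)^{\ell} p_0^{n - \ell}
        \cdot \Pr(W_1 + \dots + W_\ell = j)
\]
for every $j \in \set{1, \dots, k}$. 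This is the point where the form $\binom{n}{\ell} p_0^{n - \ell}$ from the claim is made visible, and where the $n$-dependence is entirely captured.

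Substituting this expression back and swapping the two sums over $j$ and $\ell$ yields
\[
    \Pr\bigl(Y \cdot S_n \leq k\bigr)
        =
    q_0 + (1 - q_0) p_0^n + \sum_{\ell = 1}^{k} \binom{n}{\ell} p_0^{n - \ell} \cdot z_\ell\text,
\]
where, reading off the inner sum after swapping,
\[
    z_\ell
        \coloneqq
    (1 - p_0)^{\ell} \sum_{ j = \ell }^{ k } \Pr\bigl(1 \leq Y \leq \lfloor k/j \rfloor\bigr) \cdot \Pr(W_1 + \dots + W_\ell = j)\text.
\]
Each $z_\ell$ is a sum of products of probabilities, hence non-negative, and it depends on the distributions of $X$ and $Y$ and on $k$, but not on $n$. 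The main obstacle is really just the second step, i.e., producing the explicit binomial-times-$p_0^{n-\ell}$ decomposition of $\Pr(S_n = j)$; the rest is bookkeeping, and the boundary case $k = 0$ falls out immediately because the final sum is empty and $Y \cdot S_n = 0$ reduces to $Y = 0$ or $S_n = 0$.
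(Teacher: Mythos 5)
Your proof is correct, but it takes a genuinely different route from the paper. The paper first proves a helper lemma (Lemma~\ref{lem:sum_of_indep_copies}) giving $\Pr(S_n \leq k) = p_0^n + \sum_j \binom{n}{j}p_0^{n-j}y_j$ by enumerating all tuples $(x_1,\dots,x_n)$ with $\sum_i x_i \leq k$, grouping them by their value profile via multinomial coefficients, and observing that the constraint forces $n - n_0 \leq k$; it then conditions on the value $\ell$ of $Y$ and sums $\Pr(Y=\ell)\cdot\Pr(S_n \leq \lfloor k/\ell\rfloor)$ over $\ell$, collecting the resulting $y_{j,\lfloor k/\ell\rfloor}$ into the $z_j$. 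You invert the order of conditioning: you condition on the value $j$ of $S_n$ and absorb $Y$ into the constant $\Pr(1 \leq Y \leq \lfloor k/j\rfloor)$, and you extract the $\binom{n}{\ell}p_0^{n-\ell}$ structure not from a multinomial count of value profiles but by conditioning on the binomially distributed number of nonzero summands and introducing the conditional law $W$ of $X$ given $X \geq 1$. Both arguments are elementary and of comparable length once written out; yours buys a more transparent probabilistic interpretation of the coefficients ($z_\ell$ is an explicit mixture of probabilities involving $W$) and dispenses with the separate helper lemma, while the paper's factors the work so that the $Y\equiv 1$ case is available as a standalone statement. The only point to add in a full writeup is the degenerate case $p_0 = 1$, where $W$ is undefined; this is harmless since the corresponding terms carry the factor $(1-p_0)^\ell = 0$, but it deserves a sentence.
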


This is demonstrated in \cref{app:pqe}.
We now describe how $p_0 = \Pr( X = 0 )$ can be recovered from the values of $\Pr\big( Y \cdot \sum_{i = 1}^n X_i \leq k\big)$ and $q_0 = \Pr( Y = 0 )$ whenever $q_0 < 1$ and $p_0 > 0$. With the values $z_1, \dots, z_n$ from \cref{lem:sum_of_indep_copies_times_rv}, and $z_0 \coloneqq 1 - q_0$, we define a function
\begin{equation}\label{eq:g}
    g(n) 
        \coloneqq
    \Pr\bigg( Y \cdot \sum_{i = 1}^n X_i \leq k\bigg) - q_0 =
    \sum_{j = 0}^k \binom{n}{j} \cdot p_0^{n-j} \cdot z_j \text.
\end{equation}

Now, for $m \in \NN$ and $x = 0, 1 \ldots, m$, we define
\begin{equation}\label{eq:hm}
    h_m(x)
    \coloneqq g(m+x) \cdot g(m-x)
    = \sum_{j_1,j_2 = 0}^k \binom{m+x}{j_1} \cdot \binom{m-x}{j_2} \cdot p_0^{2m - j_1 - j_2} \cdot  z_{j_1} \cdot z_{j_2}\text.
\end{equation}

Then, for every fixed $m$, $h_m$ is a polynomial in $x$ with domain $\set{0,\ldots,m}$. As it will turn out, the leading coefficient $\lc(h_m)$ of $h_m$ can be used to recover the value of $p_0$ as follows:
Let $j_{\max}$ be the maximum $j$ such that $z_j \neq 0$. Since for fixed $m$, both $\binom{m+x}{j}$ and $\binom{m-x}{j}$ are polynomials of degree $j$ in $x$, the degree of $h_m$ is $2j_{\max}$ and its leading coefficient is 
\[
    \lc(h_m) = (-1)^{j_{\max}} \cdot (j_{\max}!)^{-2}\cdot p_0^{2m - 2j_{\max}} \cdot z_{j_{\max}}^2\text,
\]
which yields
\begin{equation}\label{eq:compute_p0}
    p_0 = \sqrt{\frac{\lc(h_{m+1})}{\lc(h_m)}}\text.
\end{equation}
Thus, it suffices to determine $\lc(h_m)$ and $\lc(h_{m+1})$. However, we neither know $j_{\max}$, nor $z_{j_{\max}}$, and we only have access to the values of $h_m$ and $h_{m+1}$. To find the leading coefficients anyway, we employ the method of finite differences, a standard tool from polynomial interpolation \cite[chapter 4]{Hildebrand1987}. For this, we use the difference operator $\Delta$ that is defined as $\Delta f(x) \coloneqq f(x+1) - f(x)$ for all functions $f$. When $f$ is a (non-zero) polynomial of degree $n$, the difference operator reduces its degree by one and its leading coefficient is multiplied by $n$. Therefore, after taking differences $n$ times, starting from subsequent values of a polynomial $f$, we are left with the constant function $\Delta^n f = n!\cdot \lc(f) \neq 0$. In particular, taking differences more than $n$ times yields the zero function. Hence, we can determine $\lc(f)$ by finding the largest $\ell$ for which $\Delta^\ell f(0) \neq 0$.

\begin{algorithm}
\small%
    \caption{$\solveComponent_{Q}(T, i)$}\label{alg:comp}
    \begin{algorithmic}[1]
        \Parameter Boolean self-join free CQ $Q = Q_0 \wedge \bigwedge_{i=1}^{r} Q_i$ with connected components $Q_1\ldots,Q_r$.
        \Oracle Oracle for $\PQE_{\rep}(Q,k)$ that, on input $\widetilde T$, returns $\Pr(\#_{\widetilde T} Q \leq k)$
        \Input $T \in \tabs_{\rep}$, $i \in \set{1,\ldots, r}$
        \Output $\Pr(\#_{T_i}Q_i = 0)$
        \algrule%
        \iIf{$P_{\lambda}(0) = 1$ for all $\lambda \in \Lambda$} \Return 0 \iEndIf\label{step:ensure-nontrivial-pdbs}
        \State Fix $\lambda$ with $P_{\lambda}(0) < 1$ and suppose $Q = Q' \wedge Q_i$ (cf.\ \cref{rem:altCQcomponents}).
        \If{$Q'$ is empty}
            \State Set $q_0 \coloneqq 0$ and $g(0) \coloneqq 1$.
        \Else
            \State Let $T' \in \tabs_{\rep}$ be the canonical database for $Q'$, with $\lambda_f \coloneqq \lambda$ for all facts $f$ in $T'$.
            \State Calculate $q_0 \coloneqq \Pr\big( \#_{T'}Q' = 0 \big)$ and set $g(0) \coloneqq 1-q_0$.
        \EndIf{}
        \For{$n = 1,2,\ldots, 4k + 1$}
            \State Set $T_i^{(n)} \coloneqq \inflate_{Q_i}(T_i, n)$. 
            \State Set $g(n) \coloneqq \smash{\Pr\big(\#_{T' \cup T_i^{(n)}}Q \leq k\big)} - q_0$, using the oracle.
        \EndFor
        \iIf{$g(k+1) = 0$} \Return 0 \iEndIf \label{step:check-p0-zero}
        \iFor{$x = 0,1,\ldots, 2k$ and $m = 2k, 2k + 1$}
            $h_m(x) \coloneqq g(m+x) \cdot g(m-x)$
        \iEndFor
        \State Initialize $\ell \coloneqq 2k$.\label{step:start-finite-diff}
        \iWhile{$\Delta^{\ell}h_{2k}(0) = 0$} $\ell \coloneqq \ell - 1$ \iEndWhile \label{step:end-finite-diff}
        \State \Return $\sqrt{ \Delta^{\ell}h_{2k+1}(0) / \Delta^{\ell}h_{2k}(0) }$
    \end{algorithmic}
\end{algorithm}

\medskip

The full procedure that uses the above steps to calculate $p_0$ yields \Cref{alg:comp}. Recall that it focuses on a single connected component. To ensure easy access to the value of $q_0$, we utilize a table that encodes the canonical database of the remainder of the query.\footnote{The canonical database belonging to a self-join free CQ is the instance containing the atoms appearing in the query, with all variables being treated as constants.}
Note that $k$ is always treated as a fixed constant, and our goal is to reduce $\PQE_{\rep}(Q,0)$ to $\PQE_{\rep}(Q,k)$.

\begin{lemma}\label{lem:algo2}
    Algorithm~\ref{alg:comp} runs in polynomial time and yields the correct result.
\end{lemma}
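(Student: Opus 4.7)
The plan is to verify runtime and correctness separately, both hinging on \cref{lem:inflateproperties}, \cref{lem:sum_of_indep_copies_times_rv}, and the finite-difference reduction to leading coefficients developed just before the algorithm. Throughout, $k$ is fixed, and so are $Q$ and $\rep$; in particular the fixed $\lambda$ picked in Step~2 and the check in Step~\ref{step:ensure-nontrivial-pdbs} are hard-coded into the algorithm.

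For the runtime, the for-loop performs $4k+1 = \mathcal O(1)$ iterations; each one builds $T_i^{(n)}$ of size $\mathcal O(|T|)$ in time $\mathcal O(|T|)$ by \cref{lem:inflateproperties} and makes a single oracle call. The canonical database $T'$ has constant size (bounded by the number of atoms of $Q$), and since $Q'$ is self-join free its components on $T'$ are mutually independent, yielding the closed form $q_0 = 1 - (1 - P_\lambda(0))^{|T'|}$; as $\rep$ is a p-TIRS and $\lambda$ is fixed, this is polynomial-time computable. Building $h_{2k}, h_{2k+1}$ at $\mathcal O(k)$ points and running the finite-difference loop involve $\mathcal O(k^2)$ rational arithmetic operations on numbers of polynomial bit length, so the total runtime is polynomial in $|T|$.

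For correctness, I would set $X \coloneqq \#_{T_i} Q_i$ and $Y \coloneqq \#_{T'} Q'$ (with $Y \equiv 1$ in the vacuous case where $Q'$ is empty). Self-join-freeness of $Q$ implies that $Q'$ and $Q_i$ share no relation symbol, so $X$ and $Y$ are independent. By \textnormal{\refo1}--\textnormal{\refo3} of \cref{lem:inflateproperties}, $\#_{T_i^{(n)}} Q_i = \sum_{j=1}^n X_j$ for i.i.d.\ copies $X_1,\dots,X_n$ of $X$, and \cref{rem:altCQcomponents} yields $\#_{T' \cup T_i^{(n)}} Q = Y \cdot \sum_{j=1}^n X_j$. \Cref{lem:sum_of_indep_copies_times_rv} then gives representation \eqref{eq:g} for $g(n)$, with $p_0 = \Pr(X = 0)$ and $z_0 = 1 - q_0$. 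The algorithm's choice of $\lambda$ with $P_\lambda(0) < 1$ guarantees $z_0 > 0$, which makes Step~\ref{step:check-p0-zero} correct: if $p_0 = 0$ then every summand of $g(k+1)$ contains a positive power of $p_0$ and hence vanishes, whereas if $p_0 > 0$ the $j = 0$ summand equals $p_0^{k+1} z_0 > 0$ while all other $z_j$ are nonnegative.

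In the remaining case $p_0 > 0$, I would argue that the while-loop in Steps~\ref{step:start-finite-diff}--\ref{step:end-finite-diff} exits with $\ell = 2 j_{\max}$, where $j_{\max} \coloneqq \max\{\,j \leq k : z_j \neq 0\,\}$ is well defined since $z_0 > 0$. This uses the standard finite-difference identities that $\Delta^{\ell}$ annihilates polynomials of degree strictly below $\ell$ and sends a polynomial of degree $\ell$ to the constant $\ell!\cdot\lc(\cdot)$, applied to $h_{2k}$, which is a polynomial in $x$ of degree $2 j_{\max} \leq 2k$ with nonzero leading coefficient (since $p_0, z_{j_{\max}} > 0$). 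All invoked values $h_m(x)$ for $m \in \{2k, 2k+1\}$ and $x \in \{0,\dots,2k\}$ only involve $g(0),\dots,g(4k+1)$, which have been precomputed in the for-loop. Finally, $\Delta^\ell h_{2k+1}(0) / \Delta^\ell h_{2k}(0) = \lc(h_{2k+1})/\lc(h_{2k}) = p_0^2$ by \eqref{eq:compute_p0}, so the nonnegative square root returned by the algorithm equals $p_0$. The main obstacle is exactly this last step: neither $j_{\max}$ nor $z_{j_{\max}}$ is accessible from oracle calls, but forming the ratio at consecutive offsets $m$ and $m+1$ cancels both unknowns, while the finite-difference sweep identifies the relevant leading coefficient without prior knowledge of the polynomial degree.
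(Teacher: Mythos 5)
Your proposal is correct and follows essentially the same route as the paper's proof: identify $X=\#_{T_i}Q_i$ and $Y=\#_{T'}Q'$, reduce to \cref{lem:sum_of_indep_copies_times_rv} via \cref{lem:inflateproperties}, justify the zero-test in line~\ref{step:check-p0-zero} through $z_0=1-q_0>0$, and recover $p_0$ from the ratio of leading coefficients located by finite differences as in \eqref{eq:compute_p0}. Your write-up is in places more explicit than the paper's (e.g.\ the two directions of the $g(k+1)=0$ test and the bookkeeping that all needed values $g(0),\dots,g(4k+1)$ are precomputed), but there is no substantive difference in the argument.
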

\begin{proof}
    With the notation introduced in the algorithm, we let $Y = \#_{T'} Q'$ (or $Y = 1$ if $Q'$ is empty) and $X = \#_T Q_i = \#_{T_i} Q_i$. Then, $q_0=\Pr(Y=0)$ as in \cref{lem:sum_of_indep_copies_times_rv} and the aim of the algorithm is to return $p_0$.
    
    First, line \ref{step:ensure-nontrivial-pdbs} covers the edge case that $\rep$ can only represent the empty database instance. In all other cases, we fix $\lambda$ with $P_{\lambda}(0) > 0$. As $q_0 = 1 - (1 - P_{\lambda}(0))^t$ where $t$ is the number of atoms of $Q'$, we have $q_0 < 1$.
    From \cref{lem:inflateproperties}, we see that $\#_{\smash{T' \cup T_{\smash{i}}^{(n)}}}Q = Y \cdot \sum_{i=1}^n X_i$, so we are in the situation of \cref{lem:sum_of_indep_copies_times_rv}.
    Hence, $g$ and $h_m$ are as in \eqref{eq:g} and \eqref{eq:hm}.
    Now, as $g(k+1) = p_0 \cdot \sum_{j = 0}^k \binom{n}{j} \cdot p_{\smash{0}}^{k-j} \cdot z_j$ with $z_0 = 1-q_0 > 0$, we find that $p_0$ is zero if and only if $g(k+1)$ is zero. This is checked in line \ref{step:check-p0-zero}.
    Finally, the paragraphs following \cref{lem:sum_of_indep_copies_times_rv} apply, and we determine the degree of $h_{2k}$ using the method of finite differences by setting $\ell$ to the maximum possible degree and decreasing it step-by-step as long as $\Delta^\ell h_{2k}(0) = 0$ in lines \ref{step:start-finite-diff} and \ref{step:end-finite-diff}.
    Then, we have $\ell = 2j_{\max}$ and return 
    \[\sqrt{ \frac{\Delta^{\ell}h_{2k+1}(0)}{ \Delta^{\ell}h_{2k}(0)} } = \sqrt{ \frac{ \ell ! \lc(h_{m+1}) }{ \ell! \lc(h_{m})} } \overset{\eqref{eq:compute_p0}}{=}  p_0\text.\]
    
    Concerning the runtime, since $\rep$ is a p-TIRS, all answers of the oracle calls are of polynomial size in the input. Since $k$ is fixed, the algorithm performs a constant number of computation steps and each term in the calculations is either independent of the input or of polynomial size, yielding a polynomial runtime.
\end{proof}

\begin{proof}[Proof of \cref{thm:zero-to-k-reduction}]
    Let $k > 0$ and suppose that we have an oracle for $\PQE_{\rep}(Q,k)$. Let $Q = Q_0 \wedge \bigwedge_{i=1}^{m} Q_i$ be the partition of $Q$ into components, with $Q_0$ being the conjunction of the constant atoms. Then the $\#_T Q_i$ are independent and
    $\#_T Q = \#_T Q_0 \cdot \prod_{i = 1}^m \#_T Q_i$. Therefore,
    \begin{equation*}
        \Pr\big( \#_T Q = 0 \big)  
    = 
    1 - \Pr( \#_T Q_0 \neq 0 ) \cdot \prod_{ i = 1 }^{ m }\big( 1 - \Pr\big( \#_T Q_i = 0 \big) \big)\text.
    \end{equation*}
    As $\Pr( \#_T Q_0 \neq 0 )$ is easy to compute and \cref{alg:comp} computes $\Pr\big( \#_T Q_i = 0 \big)$ for $i=1,\ldots,k$ with oracle calls for $\PQE_{\rep}(Q,k)$, this yields a polynomial time Turing-reduction from $\PQE_{\rep}(Q,0)$ to $\PQE_{\rep}(Q,k)$.
\end{proof}

With the results from the previous subsections, this completes the proof of \cref{thm:ourdichotomy}.

\begin{proof}[Proof of \cref{thm:ourdichotomy}]
For p-TIRS's with $\zeroPr(\rep) \subseteq \set{0,1}$, the statement is given by \cref{lem:degenerate}. By \cref{thm:PQEhiersjfCQeasy}, $\PQE_{\rep}(Q,k)$ is solvable in polynomial time for hierarchical Boolean CQs without self-joins. For the case of $Q$ being non-hierarchical (and $\zeroPr(\rep) \cap (0,1) \neq \emptyset$), let $p \in (0,1)$ such that $p \in \zeroPr(\rep)$. By \cite[Theorem 3.4]{AmarilliKimelfeld2021}, the set version $\PQEset(Q)$ is already hard on the class of tuple-independent set PDBs where all probabilities are equal to $1-p$. It follows from \cref{pro:hardlambdas} that $\PQE_{\rep}(Q,0)$ is $\sharpP$-hard. By \cref{thm:zero-to-k-reduction}, so is $\PQE_{\rep}(Q,k)$ for all $k \in \NN_+$.
\end{proof}

\section{Conclusion}
\label{sec:conclusion}
    We extend the understanding of probabilistic query evaluation towards a model of tuple-independent bag PDBs with potentially infinite multiplicity supports. Our investigations cover the two key manifestations of the problem: computing expectations, and computing the probability of answer counts. While these problems are equivalent for set semantics, their behavior under bag semantics is disparate. On the one hand, we show that expectations, and more generally, moments are easy to compute, even for Boolean UCQs. Computing the probability of answer counts not exceeding some constant $k$ is a more difficult problem, and we obtain a dichotomy between polynomial time and $\sharpP$-hardness that aligns with prior work \cite{DalviSuciu2012,ReSuciu2009,fink2012aggregation,AmarilliKimelfeld2021}.

    There are several open questions of interest, like the complexity of computing answer count probabilities beyond self-join free CQs; the properties of the problem on bag versions of other well-representable classes of PDBs; or a refined analysis of the complexity in terms of $k$ as part of the input or when accessing certain classes of multiplicity distributions. Moreover, some practically relevant multiplicity distributions like the Poisson distribution naturally yield irrational probabilities, which are not covered by our results on answer count probabilities.
\subsection*{Acknowledgments}
The work of Martin Grohe and Christoph Standke has been funded by the German Research Foundation (DFG) under grants GR 1492/16-1 and GRK 2236 (UnRAVeL).

\bibliographystyle{plainurl}

\appendix

\newpage

\section{Proofs Omitted from Section \ref{sec:evar}}

\subsection{Computing the Variance}\label{app:variance}

In this part of the appendix, we prove \cref{pro:variance}.

\varianceproposition*

Recall for (possibly correlated) random variables $X_1,\dots,X_n$, the variance of their sum is equal to
\begin{equation}\label{eq:var-sum}
    \begin{multlined}[b]
        \Var\bigg( \sum_{ i = 1 }^{ n } X_i \bigg )
            =
        \sum_{ i = 1 }^{ n } \Var( X_i ) + \sum_{ i_1 \neq i_2 } \Cov( X_{i_1}, X_{i_2} )\\
            = 
        \sum_{ i = 1 }^{ n } \Big( \E( X_i^2 ) - \E( X_i )^2 \Big)  + \sum_{ i_1 \neq i_2 } \Big(\E\big( X_{i_1} X_{i_2} \big) - \E\big( X_{i_1} \big) \E\big( X_{i_2} \big)\Big)
        \text,
    \end{multlined}
\end{equation}
where the covariance part vanishes if the random variables are independent \cite[Theorem 5.7]{Klenke2014}. Later on, we will need to apply the above for the case where the random variables $X_i$ are products of other random variables. We note that the well-known formula $\Var( X ) = \E( X^2 ) - \E( X )^2$ generalizes to products of independent random variables as follows:
\begin{equation}\label{eq:var-of-prod}
    \begin{aligned}[b]
        \Var\bigg( \prod_{ i = 1 }^{ n } X_i \bigg)
            =
        \E\Bigg( \bigg( \prod_{ i = 1 }^{ n } X_i \bigg)^2 \Bigg) - \E\bigg( \prod_{ i = 1 }^{ n } X_i \bigg)^2
            =
        \prod_{ i = 1 }^{ n } \E\big( X_i^2 \big) - \prod_{ i = 1 }^{ n } \E\big( X_i \big)^2\text.\qedhere
    \end{aligned}
\end{equation}

We continue with an observation concerning the expected value of a product of CQ answers.

\begin{lemma}\label{lem:expectation-prod}
    Let $\D$ be a tuple-independent PDB, and suppose 
    \begin{align*}
        Q_1 &= \exists x_1 \dots \exists x_{m_1} \with R_1( \tup t_1 ) \wedge \dots \wedge R_{n_1}( \tup t_{n_1} )
        \quad\text{and}\\
        Q_2 &= \exists y_1 \dots \exists y_{m_2} \with S_1( \tup u_1 ) \wedge \dots \wedge S_{n_2}( \tup u_{ n_2 } )
    \end{align*}
     are two Boolean CQs. For all\/ $\tup a \in \adom( \D )^{m_1}$ and\/ $\tup b \in \adom( \D )^{ m_2 }$, we let $F_1( \tup a )$ and $F_2( \tup b )$ be the sets of facts appearing in $Q_1^*[ \tup x / \tup a ]$ and $Q_2^*[ \tup y / \tup b ]$, respectively. Moreover, for every $f \in F_1( \tup a ) \cup F_2( \tup b )$, we let $\nu_1( f, \tup a )$ and $\nu_2( f, \tup b )$ denote the number of times $f$ appears in $Q_1^*[\tup x / \tup a ]$, and in $Q_2^*[\tup y/ \tup b]$. Then
    \[
        \E\big( \#_{\D} Q_1 \cdot \#_{\D} Q_2 \big)
            =
        \sum_{ \tup a,\tup b } 
            \prod_{ f \in F_1(\tup a ) \cup F_2( \tup b) } 
            \E\Big( \big(\#_{\D} f\big)^{ \nu_1( f, \tup a ) + \nu_2( f, \tup b ) }\Big)\text,
    \]
    where\/ $\tup a \in \adom( \D )^{ m_1 }$ and\/ $\tup b \in \adom( \D )^{ m_2 }$.
\end{lemma}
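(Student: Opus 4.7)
The plan is to mirror the proof of \cref{lem:expectation-of-CQ}, but now tracking contributions from two conjunctive queries simultaneously. First, I would apply the bag semantics~\eqref{eq:CQsem} to both factors of $\#_D Q_1 \cdot \#_D Q_2$, obtaining
\[
    \#_D Q_1 \cdot \#_D Q_2
    = \sum_{\tup a, \tup b} \#_D\big( Q_1^*[\tup x/\tup a] \big) \cdot \#_D\big( Q_2^*[\tup y/\tup b] \big),
\]
where the sum ranges over $\adom(D)^{m_1} \times \adom(D)^{m_2}$. As in \cref{lem:expectation-of-CQ}, I would then replace $\adom(D)$ by $\adom(\D)$ without changing the value: every quantified variable occurs in some atom, so any tuple $\tup a$ or $\tup b$ containing an element outside $\adom(D)$ forces one of the factors $\#_D(\cdot)$ of a fact to be zero, killing the summand.

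Next, I would apply linearity of expectation to pull the (now instance-independent) outer sum out of $\E(\cdot)$. For each fixed pair $(\tup a, \tup b)$, the summand becomes the expectation of a product of fact-multiplicity random variables: $n_1$ factors coming from $Q_1^*[\tup x/\tup a]$ and $n_2$ from $Q_2^*[\tup y/\tup b]$. Grouping these factors by the underlying fact, each $f \in F_1(\tup a) \cup F_2(\tup b)$ contributes the random variable $\#_{\D} f$ with total exponent $\nu_1(f, \tup a) + \nu_2(f, \tup b)$ (one of the two counts being zero if $f$ appears only in one of the two queries).

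Finally, tuple-independence of $\D$ gives that distinct facts yield independent random variables, so the expectation of the product over $f \in F_1(\tup a) \cup F_2(\tup b)$ factorises into $\prod_f \E\big((\#_{\D} f)^{\nu_1(f, \tup a) + \nu_2(f, \tup b)}\big)$. Substituting back yields the claim. I do not expect any real obstacle here: the only subtlety is the bookkeeping that guarantees the exponents of a shared fact add correctly across the two queries, which is precisely the purpose of defining $\nu_1, \nu_2$ on the union $F_1(\tup a) \cup F_2(\tup b)$. This lemma will then plug into \eqref{eq:var-sum} and \eqref{eq:var-of-prod} to express $\Var(\#_{\D} Q)$ for a UCQ $Q$ in terms of moments of order at most $2k$, as required for \cref{pro:variance}.
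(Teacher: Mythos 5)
Your proposal is correct and follows essentially the same route as the paper's proof: unfold the bag semantics of both factors (passing from $\adom(D)$ to $\adom(\D)$ exactly as in \cref{lem:expectation-of-CQ}), expand the product of the two sums into a double sum over $(\tup a,\tup b)$, group the fact multiplicities so that exponents of shared facts add to $\nu_1(f,\tup a)+\nu_2(f,\tup b)$, and finish with linearity of expectation plus tuple-independence of distinct facts. The only difference is cosmetic ordering of the expansion and grouping steps.
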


\begin{proof}
    Recall the observation from the equation at the beginning of the proof of \cref{lem:expectation-of-CQ}. Unfolding the definition of $\#_{\D}$, we get
    \[
    \begin{multlined}[t]
        \E\big( \#_{\D} Q_1 \cdot \#_{\D} Q_2 \big) 
        \\
            =
        \E
            \bigg(
            \Big( \sum_{ \tup a \in \adom( \D )^{ m_1 } } \prod_{ f \in F_1( \tup a ) } \big( \#_{\D} f \big)^{ \nu_1( f, \tup a) } \Big)
            \cdot 
            \Big( \sum_{ \tup b \in \adom( \D )^{ m_2 } } \prod_{ f \in F_2( \tup b ) } \big( \#_{\D} f \big)^{ \nu_2( f, \tup b) } \Big) 
            \bigg)\text.
            \end{multlined}
    \]
    Expanding the product, this is equal to 
    \[
        \E\bigg( \sum_{ \tup a, \tup b } \prod_{ f \in F_1( \tup a ) \cup F_2( \tup b ) } \big( \#_{\D} f \big)^{ \nu_1(f,\tup a)+ \nu_2(f, \tup b) }\bigg)\text.
    \]
    The claim then follows from the linearity of expectation, and using that the terms $\#_{\D} f$ are independent for different $f$.
\end{proof}

For obtaining a polynomial time algorithm for the variance, we proceed similar to the previous section. In particular, we start with expressing the variance of a CQ in terms of moments of fact multiplicities.

\begin{lemma}\label{lem:variance-of-CQ}
    Let $\D$ be a tuple-independent PDB and let $Q$ be a Boolean CQ, $Q = \exists x_1 \dots \exists x_m \with R_1( \tup t_1 ) \wedge \dots \wedge R_n( \tup t_n )$.
    For all facts $f$, and all\/ $\tup a \in \adom( \D )^m$, we let $X_{f,\tup a}$ denote the random variable $(\#_{\D} f)^{\nu( f, \tup a ) }$. Then
    \begin{equation}\label{eq:variance-of-CQ}
        \Var\big( \#_{\D} Q \big)
            = 
        \begin{multlined}[t]
            \sum_{ \tup a }  
                \prod_{ f \in F( \tup a ) }  \E\big( X_{f,\tup a}^2 \big) 
                - \prod_{ f \in F( \tup a ) } \E\big( X_{f,\tup a} \big)^2\\
                + \sum_{ \tup a_1 \neq \tup a_2 } \bigg( \prod_{ f \in F(\tup a_1)\cup F( \tup a_2 ) } \E\big(X_{f,\tup a_1} X_{f,\tup a_2} \big)
                    - \smashoperator{\prod_{ f \in F( \tup a_1 ) } } \E\big( X_{f,\tup a_1} \big) 
                    \smashoperator{ \prod_{ f \in F( \tup a_2 ) } } \E\big( X_{f,\tup a_2} \big) \bigg)\text.
        \end{multlined}
    \end{equation}
\end{lemma}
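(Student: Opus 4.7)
The plan is to expand $\#_\D Q$ as a sum indexed by assignments of its quantified variables and then apply the variance-of-sum formula \eqref{eq:var-sum}. Concretely, by the active-domain semantics \eqref{eq:CQsem}, we can write
\[
    \#_\D Q \;=\; \sum_{\tup a \in \adom(\D)^m} Y_{\tup a},
    \qquad \text{where } Y_{\tup a} \coloneqq \prod_{f \in F(\tup a)} X_{f,\tup a}.
\]
Then \eqref{eq:var-sum} gives $\Var(\#_\D Q) = \sum_{\tup a} \Var(Y_{\tup a}) + \sum_{\tup a_1 \neq \tup a_2} \Cov(Y_{\tup a_1}, Y_{\tup a_2})$, so the lemma reduces to identifying each of these two contributions with the two sums in \eqref{eq:variance-of-CQ}.

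For the diagonal part, note that $F(\tup a)$ consists of pairwise distinct facts; by tuple-independence, the random variables $\#_\D f$ for $f \in F(\tup a)$ are independent, hence so are the $X_{f,\tup a}$ (each being a function of a single $\#_\D f$). Applying \eqref{eq:var-of-prod} to the product defining $Y_{\tup a}$ then yields $\Var(Y_{\tup a}) = \prod_{f \in F(\tup a)} \E(X_{f,\tup a}^2) - \prod_{f \in F(\tup a)} \E(X_{f,\tup a})^2$, which matches the first sum of \eqref{eq:variance-of-CQ}.

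For the off-diagonal part, write $\Cov(Y_{\tup a_1}, Y_{\tup a_2}) = \E(Y_{\tup a_1} Y_{\tup a_2}) - \E(Y_{\tup a_1})\E(Y_{\tup a_2})$. The factor $\E(Y_{\tup a_i})$ is handled by \cref{lem:expectation-of-CQ} applied to the single CQ $Q^*[\tup x/\tup a_i]$ (a conjunction of facts), giving $\prod_{f \in F(\tup a_i)} \E(X_{f,\tup a_i})$. For the joint expectation $\E(Y_{\tup a_1} Y_{\tup a_2})$, we invoke \cref{lem:expectation-prod}: taking $Q_1$ and $Q_2$ to be two renamed copies of $Q$ and instantiating their variables with $\tup a_1$ and $\tup a_2$ respectively, the lemma produces exactly $\prod_{f \in F(\tup a_1) \cup F(\tup a_2)} \E(X_{f,\tup a_1} X_{f,\tup a_2})$, under the convention that $X_{f,\tup a_i}$ equals the constant $1$ (i.e.\ contributes exponent $0$) when $f \notin F(\tup a_i)$. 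Substituting both into the covariance yields the second sum of \eqref{eq:variance-of-CQ}, completing the identification.

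The only subtlety is notational: \cref{lem:expectation-prod} is stated for two \emph{different} queries $Q_1, Q_2$, while here both assignments range over variables of the same $Q$. This is immediately resolved by renaming the bound variables of one copy, so that $\tup a_1$ and $\tup a_2$ instantiate disjoint variable tuples with the same resulting facts. No further obstacle arises, and combining the two computations proves the lemma.
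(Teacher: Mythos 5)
Your proposal is correct and follows essentially the same route as the paper's proof: decompose $\#_{\D}Q$ into the sum of the per-assignment products $Y_{\tup a}$, apply the variance-of-sum formula \eqref{eq:var-sum}, handle the diagonal terms via independence and \eqref{eq:var-of-prod}, and handle the covariances via \cref{lem:expectation-prod} and \cref{lem:expectation-of-CQ}. The paper only sketches these steps, whereas you spell them out (including the harmless convention $\nu(f,\tup a)=0$ for $f\notin F(\tup a)$ when instantiating \cref{lem:expectation-prod} with two copies of the same query), so nothing is missing.
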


\begin{proof}
    This is obtained by the direct calculation, starting from \eqref{eq:var-sum}.
    For the left part of \eqref{eq:var-sum} (the individual variances), we use \eqref{eq:var-of-prod} and the observations from the proof of \cref{lem:expectation-of-CQ}.
    For the right part of \eqref{eq:var-sum} (the covariances), we use \cref{lem:expectation-prod} and \cref{lem:expectation-of-CQ}.
\end{proof}

Another direct application of \eqref{eq:var-sum} yields the following.

\begin{lemma}\label{lem:variance-of-UCQ}
    Let $\D$ be a PDB and let $Q = \bigvee_{ i = 1 }^{ N } Q_i$ be a Boolean UCQ. Then
    \begin{equation}\label{eq:variance-of-UCQ}
        \Var\big( \#_{\D} Q \big)
            =
        \sum_{ i = 1 }^{ N } 
            \Var( \#_{\D} Q_i ) +
        \sum_{ i_1 \neq i_2 } \Big(
            \E\big( \#_{\D} Q_{i_1} \cdot \#_{\D} Q_{i_2} \big)
            - \E\big( \#_{\D} Q_{i_1} \big) \E\big( \#_{\D} Q_{i_2} \big) \Big)
    \end{equation}
\end{lemma}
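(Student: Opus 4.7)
The plan is to prove Lemma \ref{lem:variance-of-UCQ} by a direct application of the standard variance-of-a-sum decomposition to the bag semantics of UCQs. The key observation is that, for every instance $D$, the bag semantics definition \eqref{eq:UCQsem} gives $\#_D Q = \sum_{i=1}^N \#_D Q_i$. Reading both sides as random variables on $\D$, this yields the pointwise identity $\#_\D Q = \sum_{i=1}^N \#_\D Q_i$ as $\NN$-valued random variables on the probability space underlying $\D$.

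Given this representation, the lemma is an instance of the general formula already recalled in \eqref{eq:var-sum}: for any (not necessarily independent) square-integrable random variables $X_1,\dots,X_N$,
\begin{equation*}
    \Var\bigg( \sum_{i=1}^N X_i \bigg)
        =
    \sum_{i=1}^N \Var(X_i) + \sum_{i_1 \neq i_2} \Cov(X_{i_1}, X_{i_2})\text.
\end{equation*}
Setting $X_i = \#_\D Q_i$ and expanding each covariance via $\Cov(X_{i_1}, X_{i_2}) = \E(X_{i_1} X_{i_2}) - \E(X_{i_1})\E(X_{i_2})$ immediately yields the claimed identity \eqref{eq:variance-of-UCQ}.

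The only (very mild) technical point to address is integrability: for \eqref{eq:var-sum} to apply, each $\#_\D Q_i$ must have finite second moment, so that the variances and covariances are well-defined. In the setting of this paper $\D$ is fact-finite, and for applications within Proposition \ref{pro:variance} the representation has polynomially computable moments up to the required order, so this is guaranteed. In the general statement of the lemma, the identity can be read as an equality in the extended reals, or simply stated under the implicit assumption that the relevant moments exist; either reading is a routine remark rather than a real obstacle. Hence there is no genuine hard step — the proof is essentially a two-line derivation combining \eqref{eq:UCQsem} with the classical variance decomposition and the definition of covariance.
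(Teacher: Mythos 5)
Your proof is correct and matches the paper's own argument, which likewise obtains \eqref{eq:variance-of-UCQ} as a direct application of the variance-of-a-sum formula \eqref{eq:var-sum} to the decomposition $\#_\D Q = \sum_{i=1}^N \#_\D Q_i$ given by \eqref{eq:UCQsem}. Your added remark on integrability is a sensible (if routine) supplement that the paper leaves implicit.
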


Again, if we can compute the necessary moments of fact multiplicities efficiently, then \cref{lem:variance-of-UCQ} and \cref{lem:expectation-of-CQ} yield a polynomial time algorithm to compute the variance of a UCQ.

\begin{proof}[Proof of \cref{pro:variance}]
    Plugging \eqref{eq:variance-of-CQ} into \eqref{eq:variance-of-UCQ}, we obtain a formula with a polynomial number of terms in the input size. The individual terms are moments of fact multiplicities, possibly up to order at most $2k$.
\end{proof}

\subsection{Higher-Order Raw and Central Moments}\label{app:higher}

The ideas for computing the expectation and variance of query answer counts can be extended to higher raw and central moments as follows.

\begin{lemma}\label{lem:expectation-prod-new}
    Let $\D$ be a tuple-independent PDB and let $Q_1,\dots,Q_{\ell}$ be Boolean CQs of the shape 
    \[
        Q_i = \exists x_1^i, \dots, x_{m_i}^i \with R_1^i( \tup t_1^i ) \wedge \dots \wedge R_{n_i}^i( \tup t_{n_i}^i )\text.
    \]
    For all $\tup a^i \in \adom( \D )^{ m_i }$, $i = 1,\dots,\ell$, let $F^i(\tup a^i)$ be the set of facts appearing in $Q_i^*[ \tup x^i / \tup a^i ]$ where $\tup x^i = (x_1^i,\dots,x_{m_i}^i)$, and for all $f \in F^i( \tup a^i )$ let $\nu_i( f, \tup a^i )$ denote the number of times $f$ appears in $Q_i^*[ \tup x^i / \tup a^i ]$. Then
    \[
        \E\big( \#_{\D} Q_1 \cdot \dots \cdot \#_{\D} Q_{\ell} \big) 
        = \sum_{ \tup a^1 \in \adom( \D )^{m_1} } \dots \sum_{ \tup a^{\ell} \in \adom( \D )^{ m_{\ell} } } 
            \prod_{ f \in F^1( \tup a^1 ) \cup \dots \cup F^{\ell}( \tup a^\ell ) }
            \E\Big( (\#_{\D}f)^{ \nu_1( f, \tup a^1 ) + \dots + \nu_{\ell}( f, \tup a^{\ell} ) } \Big)
            \text.
    \]
    The number of expected value terms appearing on the right-hand side above is polynomial in the size of the active domain of $\D$.
\end{lemma}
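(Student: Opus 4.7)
The plan is to proceed by a direct expansion that mirrors the proof of \cref{lem:expectation-prod}, generalised from two factors to $\ell$ factors (one could equivalently proceed by induction on $\ell$, with \cref{lem:expectation-prod} providing the base case for $\ell = 2$ and \cref{lem:expectation-of-CQ} for $\ell = 1$). The only inputs needed beyond \eqref{eq:CQsem} are linearity of expectation and the tuple-independence of $\D$.

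The first step is to unfold the definition of $\#_{\D} Q_i$ via \eqref{eq:CQsem} for each $i = 1, \dots, \ell$, writing
\[
    \#_{\D} Q_i
        =
    \sum_{\tup a^i \in \adom(\D)^{m_i}} \prod_{f \in F^i(\tup a^i)} \big(\#_{\D} f\big)^{\nu_i(f, \tup a^i)}\text.
\]
Expanding the product $\#_{\D} Q_1 \cdots \#_{\D} Q_\ell$ by distributivity collapses the $\ell$ outer sums into a single multi-sum indexed by tuples $(\tup a^1, \dots, \tup a^\ell) \in \adom(\D)^{m_1} \times \cdots \times \adom(\D)^{m_\ell}$, whose summand is the product $\prod_{i=1}^{\ell} \prod_{f \in F^i(\tup a^i)} (\#_{\D} f)^{\nu_i(f, \tup a^i)}$.

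Next, I would apply linearity of expectation to push $\E$ inside the multi-sum. For each fixed tuple $(\tup a^1, \dots, \tup a^\ell)$, the inner product can be regrouped by distinct facts in $F^1(\tup a^1) \cup \cdots \cup F^\ell(\tup a^\ell)$: by tuple-independence of $\D$, the random variables $\#_{\D} f$ and $\#_{\D} f'$ for distinct facts $f \neq f'$ are independent, so the expectation factorises. For each fact $f$ in the union, the combined exponent is $\sum_{i=1}^{\ell} \nu_i(f, \tup a^i)$, where we adopt the convention $\nu_i(f, \tup a^i) = 0$ whenever $f \notin F^i(\tup a^i)$. This gives exactly the expression in the statement. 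Finally, for the polynomial bound, the outer multi-sum has $\lvert \adom(\D)\rvert^{m_1 + \cdots + m_\ell}$ terms, which is polynomial in $\lvert \adom(\D)\rvert$ because $\ell$ and $m_1, \dots, m_\ell$ are determined by the fixed queries $Q_1, \dots, Q_\ell$.

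There is no genuine probabilistic obstacle here beyond what is already present in \cref{lem:expectation-prod}; the only potential difficulty is notational, since multi-index bookkeeping for the exponents $\nu_i(f, \tup a^i)$ and for the union of the fact sets $F^i(\tup a^i)$ must be handled carefully (in particular, when the same fact arises from instantiations in more than one $Q_i$, its multiplicity contributes to the product only once, but with the summed exponent). Using the convention $\nu_i(f, \tup a^i) = 0$ for $f \notin F^i(\tup a^i)$ makes this uniform and avoids case distinctions.
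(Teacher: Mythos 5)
Your proposal is correct and follows essentially the same route as the paper's own proof: unfold the semantics \eqref{eq:CQsem} for each $\#_{\D}Q_i$, distribute the product of sums into a single multi-sum over $(\tup a^1,\dots,\tup a^{\ell})$, apply linearity of expectation, and factor each summand over distinct facts using tuple-independence, with the exponents $\nu_i(f,\tup a^i)$ accumulating additively. The extra care you take with the convention $\nu_i(f,\tup a^i)=0$ for $f\notin F^i(\tup a^i)$ and the explicit count $\card{\adom(\D)}^{m_1+\dots+m_{\ell}}$ for the polynomial bound are both consistent with (and slightly more explicit than) what the paper writes.
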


\begin{proof}
    By definition of $\#_{\D}Q$ for Boolean CQs $Q$, we get
    \begin{align*}
        \E\big( \#_{\D} Q_1 \cdot \dots \cdot \#_{\D} Q_{\ell} \big)
        &= \E\bigg( \prod_{ i = 1 }^{ \ell } \Big(
            \sum_{ \tup a^i \in \adom( \D )^{ m_i } } 
            \prod_{ f \in F^i( \tup a^i ) } (\#_{\D}f)^{ \nu_i(f, \tup a^i) } 
        \Big) \bigg)\\
        &= \E\bigg( \sum_{ \tup a^1 \in \adom( \D )^{m_1} } \dots \sum_{ \tup a^\ell \in \adom( \D )^{ m_{\ell} } } 
            \prod_{ f \in F^1( \tup a^1 ) \cup \dots \cup F^{\ell}( \tup a^{\ell} ) } 
            ( \#_{\D}f )^{ \sum_{ i = 1 }^\ell \nu_i( f, \tup a^i ) } 
            \bigg)\text.
    \end{align*}
    The claim then follows from the linearity of expectation and using that the random variables $\#_{\D}f$ are independent for different $f$.
\end{proof}

\begin{remark}
    We treat queries as functions from the possible world of a probabilistic database $\D$ to $\NN$. If $Q = Q'$, then $Q$ and $Q'$ are the same function, defined on the same probability space $\D$, i.e., $\#_{\D}Q$ and $\#_{\D}Q'$ are the \emph{same random variable}. In our proofs, we manipulate the functional definitions of the query semantics directly. In particular, \cref{lem:expectation-prod-new} applies to the case $Q_1 = \dotsc = Q_\ell = Q$, where it provides an expression for $\E\big( (\#_{\D}Q)^{\ell} \big)$.
\end{remark}

Recall that by linearity
\begin{align}
    \E\Bigl( \bigl( X_1 + \dots + X_n \bigr)^{\ell} \Bigr) 
    &= \sum_{ i_1,\dots,i_{\ell} \in \set{1,\dots,n} } \E\bigl(  X_{i_1}\cdot \dots \cdot X_{i_{\ell}} \bigr)\label{eq:sumpower}\text,\\
    \E\Bigl( \bigl( X - \E(X) \bigr)^{\ell} \Bigr)
    &= \sum_{ i = 0 }^{ \ell } \binom{\ell}{i} \E\Bigl( X^{i} \cdot \bigl( -\E(X) \bigr)^{\ell-i} \Bigr)
    = \sum_{ i = 0 }^{ \ell } \binom{\ell}{i} \cdot \bigl( -\E(X) \bigr)^{\ell-i} \cdot \E( X^{i} )
    \label{eq:centralpower}\text.
\end{align}
for random variables $X, X_1,\dots,X_n$ (defined on the same probability space) and $\ell \in \NN$.

\begin{corollary}
    Let $Q = Q_1 \vee Q_2 \vee \dots \vee Q_N$ be a Boolean UCQ such that $k$ is the maximum self-join width among $Q_1,\dots,Q_N$. Let $\ell \in \NN$. If\/ $\rep$ is a TIRS with polynomially computable moments up to order $\ell \cdot k$, then $\E( (\#_{\D}Q)^{\ell} )$ and $\E\bigl( \bigl(\#_{\D}Q - \E( \#_{\D}Q ) \bigr)^{\ell} \bigr)$ is computable in polynomial time.
\end{corollary}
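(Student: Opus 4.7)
\medskip

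\noindent\textbf{Proof plan.}
The strategy is to reduce both quantities to a polynomial-size collection of expectations of the form $\E\big( (\#_\D f)^{j} \big)$ with $j \le \ell \cdot k$, which are assumed to be computable in polynomial time by hypothesis on $\rep$. The plan is first to handle the raw moment $\E\big( (\#_{\D}Q)^{\ell} \big)$ by unfolding $Q = Q_1 \vee \dots \vee Q_N$ via the UCQ semantics of \eqref{eq:UCQsem}, and then to derive the central moment from raw moments using \eqref{eq:centralpower}.

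For the raw moment, I would apply \eqref{eq:sumpower} with $X_j = \#_\D Q_j$, obtaining
\[
    \E\big( (\#_\D Q)^{\ell} \big)
        = \sum_{(i_1,\dots,i_\ell)\in\{1,\dots,N\}^{\ell}} \E\big( \#_\D Q_{i_1} \cdot \ldots \cdot \#_\D Q_{i_\ell} \big).
\]
To each of the $N^{\ell}$ summands I would apply \cref{lem:expectation-prod-new}, which rewrites it as a sum over assignments $\tup a^1,\dots,\tup a^\ell$ into $\adom(\D)$ of products of expectations $\E\big( (\#_\D f)^{\nu_1(f,\tup a^1) + \dots + \nu_\ell(f,\tup a^\ell)} \big)$. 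Since $N$, $\ell$, and the arities $m_i$ are constants (they depend only on the fixed $Q$), the total number of such summands is bounded by $N^{\ell}\cdot |\adom(\D)|^{m_1+\dots+m_\ell}$, which is polynomial in the input. The main point I would have to verify explicitly is that the exponents $\nu_1(f,\tup a^1) + \dots + \nu_\ell(f,\tup a^\ell)$ never exceed $\ell \cdot k$: this follows because, by definition of self-join width, each relation symbol appears at most $k$ times in each $Q_{i_j}$, and so any single fact $f$ can appear at most $\ell \cdot k$ times in $Q_{i_1}^*[\tup x^1/\tup a^1] \wedge \dots \wedge Q_{i_\ell}^*[\tup x^\ell/\tup a^\ell]$. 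Hence every moment we need is of order at most $\ell \cdot k$, and by the assumption on $\rep$ each one is computable in polynomial time.

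For the central moment, I would set $X \coloneqq \#_\D Q$ and apply \eqref{eq:centralpower}, giving
\[
    \E\Big( \big( X - \E(X) \big)^{\ell} \Big)
        = \sum_{i=0}^{\ell} \binom{\ell}{i} \cdot \big(-\E(X)\big)^{\ell-i} \cdot \E(X^{i}).
\]
The value $\E(X)$ is computed in polynomial time by \cref{pro:expectation} (which needs moments up to order $k \le \ell \cdot k$), and each $\E(X^{i})$ for $i = 0,\dots,\ell$ is computed by the preceding argument applied with $i$ in place of $\ell$; the required moment orders remain bounded by $\ell \cdot k$. Summing $O(\ell)$ polynomially computable terms then yields the central moment in polynomial time.

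The only non-routine point is the bookkeeping for the exponent bound $\ell \cdot k$ and the observation that all outer sums (over tuples of indices $(i_1,\dots,i_\ell)$ and over tuples of assignments $(\tup a^1,\dots,\tup a^\ell)$) have polynomial size because $\ell$ and the structural parameters of $Q$ are constants in data complexity; everything else is a direct combination of \cref{lem:expectation-prod-new}, \cref{pro:expectation}, and the identities \eqref{eq:sumpower} and \eqref{eq:centralpower}.
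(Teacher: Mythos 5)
Your proposal is correct and follows essentially the same route as the paper's own proof: the raw moment is expanded via \eqref{eq:sumpower} and \cref{lem:expectation-prod-new} into a polynomial-size sum of fact-multiplicity moments of order at most $\ell \cdot k$, and the central moment is then reduced to raw moments via \eqref{eq:centralpower}. Your explicit justification of the exponent bound $\nu_1(f,\tup a^1) + \dots + \nu_\ell(f,\tup a^\ell) \le \ell \cdot k$ via the self-join width is a welcome detail that the paper states only implicitly.
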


\begin{proof}
    Using \eqref{eq:sumpower}, we have 
    \[
        \E\bigl( (\#_{\D}Q)^\ell \bigr)
        = \sum_{ i_1,\dots,i_{\ell} \in \set{1,\dots,N} } \E\bigl( \#_{\D}Q_{i_1} \cdot \dots \cdot \#_{\D}Q_{i_{\ell}} \bigr)
    \]
    which we can further rewrite using \cref{lem:expectation-prod-new}. This results in a polynomial length sum of products of terms
    \[ \E( \#_{\D}f )^{ \nu_{i_1}(f,a^{i_1}) + \dots + \nu_{i_{\ell}}(f,a^{i_\ell}) } \text,\]
    in which each of the $\ell$ terms in the exponent has value at most $k$. Hence, $\E\bigl( (\#_{\D}Q)^{\ell} \bigr)$ can be computed in polynomial time if $\rep$ has polynomially computable moments up to order $\ell\cdot k$.

    For higher-order central moments, by \eqref{eq:centralpower} we have
    \[
        \E\Bigl( \bigl(\#_{\D}Q - \E(\#_{\D}Q) \bigr)^{\ell} \Bigr) 
        = \sum_{ i = 0 }^{ \ell } \binom{ \ell }{ i } \cdot \bigl( - \E(\#_{\D}Q) \bigr)^{ \ell - i } \cdot \E\bigl( (\#_{\D}Q)^i \bigr) \text.
    \]
    Using the first part of this proof, this expression can be evaluated in polynomial time, if $\rep$ has polynomially computable moments up to order $\ell \cdot k$.
\end{proof}

\section{Proofs Omitted from Section \ref{sec:pqe}}

\subsection{\texorpdfstring{$k$}{k} as Part of the Input}\label{app:kinput}

If $\rep$ is a p-TIRS, then by $\PQE_{\rep}(Q)$ we denote the probabilistic query evaluation problem for $\rep$ and a fixed Boolean UCQ $Q$, where $k$ is treated as a (binary encoded) \emph{input} instead of a parameter. We show that if $\rep$ supports all distributions of the shape $(0 \mapsto 1/2; x \mapsto 1/2)$ for arbitrary positive integers $x$ (see \cref{fig:repexample}), then $\PQE_{\rep}(Q)$ is $\sharpP$-hard for $Q = \exists x \with R(x)$. We do so by using the idea from the proof of \cite[Proposition 5]{ReSuciu2009}, and reduce from $\sharpSUBSETSUM$.

Let $(x_1,\dots,x_n,B)$ be an instance of $\sharpSUBSETSUM$. That is, $x_1,\dots,x_n$ and $B$ are positive integers (given in binary encoding), and we are asked to count the number of subsets $S \subseteq \set{1,\dots,n}$ such that $\sum_{ i \in S } x_i = B$. We construct a tuple-independent PDB $\D$ over a unary relation $R$ with possible (independent) facts $R(1),\dots,R(n)$, where 
\[
    \Pr\bigl( \#_{\D}( R(i) ) \bigr) = 
    \begin{cases}
        x_i & \text{with probability }\frac12\text,\\
        0   & \text{with probability }\frac12\text.
    \end{cases}
\]
Let $D$ be a randomly drawn instance from $\D$. Such $D$ is drawn with probability $\frac1{2^n}$, and we have $\#_D(Q) = B$ if and only if $\sum x_i = B$, where the sum ranges over all $i$ such that $R(i)$ is present in $D$ with positive multiplicity. Hence,
\[
    \Pr\bigl( \#_{D} Q = B \bigr) \cdot 2^n = \size[\Big]{ \set[\Big]{ S \subseteq \set{1,\dots,n} \colon \sum_{ i \in S } x_i = B } }\text.
\]

\subsection{Hierarchical Self-Join Free CQs}\label{app:comp}

The following lemma states how the answer count of a Boolean CQ is given in terms of the answer counts of its individual connected components. Note that this does not require the query to be self-join free.

\begin{lemma}
    Let $Q = \exists \tup x \with Q^*$ be a Boolean CQ with constant atoms $Q_0^*$ and connected components $Q_1^*, \dots, Q_m^*$. Then for all instances $D$ we have
    \[
        \#_D Q
            =
        \#_D Q_0^* \cdot
            \prod_{ i = 1 }^{ m } \#_D\big( \exists \tup x_i \with Q_i^* \big)\text,
    \]
    where for all $i = 1, \dots, m$, the tuple $\tup x_i$ only contains those variables from $\tup x$ that appear in $Q_i^*$.
\end{lemma}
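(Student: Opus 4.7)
The plan is to unfold the definition \eqref{eq:CQsem} of $\#_D$ and exploit the fact that the atoms of different components share no variables, so the big product-sum factorizes into independent parts.

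First, I would write $\vec{x} = (\vec{x}_1, \dots, \vec{x}_m)$, observing that every quantified variable appears in exactly one component and that $Q_0^*$ contains no variables at all. For an assignment $\vec{a} \in \adom(D)^{|\vec{x}|}$, write correspondingly $\vec{a} = (\vec{a}_1, \dots, \vec{a}_m)$. Since every atom $R_j(\vec{t}_j)$ of $Q$ belongs either to $Q_0^*$ or to exactly one component $Q_i^*$, and atoms of $Q_i^*$ mention only variables from $\vec{x}_i$, we can rewrite
\[
\prod_{j=1}^{n} \#_D\bigl(R_j(\vec{t}_j[\vec{x}/\vec{a}])\bigr)
= \#_D(Q_0^*) \cdot \prod_{i=1}^{m} \#_D\bigl(Q_i^*[\vec{x}_i/\vec{a}_i]\bigr),
\]
since $Q_0^*$ is unaffected by the substitution and the atoms of $Q_i^*$ only depend on $\vec{a}_i$.

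Second, I would plug this factorization into the definition of $\#_D Q$ and distribute the sum over disjoint index sets. Because each factor $\#_D(Q_i^*[\vec{x}_i/\vec{a}_i])$ depends only on $\vec{a}_i$, we obtain
\[
\#_D Q
= \sum_{\vec{a}_1,\dots,\vec{a}_m} \#_D(Q_0^*) \prod_{i=1}^{m} \#_D(Q_i^*[\vec{x}_i/\vec{a}_i])
= \#_D(Q_0^*) \prod_{i=1}^{m} \Bigl( \sum_{\vec{a}_i} \#_D(Q_i^*[\vec{x}_i/\vec{a}_i]) \Bigr).
\]
Applying \eqref{eq:CQsem} again, the inner sum is precisely $\#_D(\exists \vec{x}_i \with Q_i^*)$, which yields the claim.

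The argument is entirely bookkeeping; there is no real obstacle. The only point requiring mild care is to justify that summing each $\vec{a}_i$ over $\adom(D)^{|\vec{x}_i|}$ (rather than over a joint domain $\adom(D)^{|\vec{x}|}$) is consistent on both sides: on the left-hand side, any assignment that uses elements outside $\adom(D)$ contributes zero to $\#_D(Q^*[\vec{x}/\vec{a}])$ because $Q^*$ contains every quantified variable in some atom, and the same holds component-wise on the right-hand side. This is exactly the observation already used in the proof of \cref{lem:expectation-of-CQ}, so the active-domain restriction causes no issue.
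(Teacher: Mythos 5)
Your proof is correct and follows essentially the same route as the paper's: factor the product of atom multiplicities along the components (which is valid because each atom lies in exactly one component and components share no variables), then distribute the sum over the disjoint variable blocks into a product of sums. Your additional remark on the active-domain ranges is harmless and correct, though not strictly needed here since $\adom(D)$ is a property of $D$ alone and is therefore the same for the full query and for each component query.
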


\begin{proof}
    First, note that every variable from $\tup x$ is contained in precisely one of the tuples $\tup x_i$. Thus,
    \begin{align*}
        \#_D Q
            &= 
        \sum_{ \tup a } \#_D Q_0^* \cdot \prod_{ i = 1 }^m \#_D Q_i^*[\tup x/\tup a]
            = 
        \#_D Q_0^* \cdot \sum_{ \tup a_1 } \cdots \sum_{ \tup a_m } \prod_{ i = 1 }^{ m } \#_D Q_i^*[ \tup x_i / \tup a_i ]\\
            &= 
        \#_D Q_0^* \cdot \prod_{ i = 1 }^{ m } \sum_{ \tup a_i } \#_D Q_i^*[ \tup x_i / \tup a_i ]
            =
        \#_D Q_0^* \cdot \prod_{ i = 1 }^{ m } \#_D \big( \exists \tup x_i \with Q_i^* \big)\text,
    \end{align*}
    as claimed.
\end{proof}

In the remainder of this subsection, we develop an algorithm that establishes \cref{thm:PQEhiersjfCQeasy}.

\PQEhiersjfCQeasy*

So let $Q = \exists \tup x \with Q^*$ be a hierarchical, Boolean self-join free CQ, and let $Q_1^*,\dots, Q_m^*$ be its connected components, and let $Q_0^*$ be the conjunction of the constant atoms. 

Let $k \in \NN$. Since 
\[
    \Pr\big( \#_T Q \leq k \big) = \sum_{ k' \leq k } \Pr\big( \#_T Q = k' \big)\text,
\]
it suffices to show that $\Pr\big( \#_T Q = k \big)$ is computable in polynomial time. First, consider the case $k \geq 1$. We have 
\begin{align*}
    \Pr\big( \#_T Q = k \big) 
        &= 
    \Pr\bigg( \#_T Q_0^* \cdot \prod_{ i = 1 }^{ m } \#_T \big( \exists \tup x_i \with Q_i^* \big) = k \bigg)\\
        &= 
    \sum_{ \substack{ k_0, \dots, k_m \\ \prod_i k_i = k } }\mkern-4mu
        \Pr\bigg( \#_T Q_0^* = k_1 \text{ and } \#_T\big( \exists \tup x_i \with Q_i^*\big) = k_i \text{ f.\,a. } i = 1, \dots, m \bigg)
\end{align*}
Since $Q$ is self-join free, the random variables appearing in the above products are independent. Thus,
\begin{equation}\label{eq:PQEsjfCQ-decompose-k}
    \Pr\big( \#_T Q = k \big)
        =
    \sum_{ \substack{ k_0, \dots, k_m \\ \prod_i k_i = k } }
        \Pr\big( \#_T Q_0^* = k_1 \big) \cdot
        \prod_{ i = 1 }^{ m } \Pr\big( \#_T( \exists \tup x_i \with Q_i^* ) = k_i \big)\text.
\end{equation}
Recall, that $Q_0^*$ is the conjunction of the constant atoms of $Q$, say, such that the constant atoms correspond to distinct facts $f_1,\dots,f_{\ell}$ where $f_i$ appears $n_i$ times in $Q_0^*$. Then
\begin{align*}
    \Pr\big( \#_T Q_0^* = k_1 \big)
        &=
    \Pr\bigg( \prod_{ i = 1 }^{ \ell } \#_T f_i = k_1 \bigg)\\
        &=
    \sum_{ \substack{ k_{1,1},\dots,k_{1,\ell} \\ \prod_i k_{1,i} = k_1 } }
        \Pr\big( (\#_T f_i)^{n_i} = k_{1,i} \text{ for all } i = 1,\dots,\ell \big)
\end{align*}
Note that since $k > 0$, it follows that $k_1 > 0$, so the sum appearing above has finitely many terms, and the concrete number only depends on $Q$. Since the random variables $(\#_T f_i)^{n_i}$ are independent, the last probability turns into a product, and the whole expression can be evaluated in polynomial time.

For the other probabilities in \eqref{eq:PQEsjfCQ-decompose-k}, let $\hat{x}_i$ be a maximal variable of $Q_i^*$ and let $\check{\tup x}_i$ denote the tuple $\tup x_i$ with $\hat{x}_i$ omitted. Then, we get
\begin{equation}\label{eq:PQEsjfCQ-decompose-k_i}
    \begin{aligned}[b]
    &\Pr\big( \#_T( \exists \tup x_i \with Q_i^* ) = k_i \big)
        =
    \Pr\bigg( \sum_{ a } \#_T ( \exists \check{\tup x}_i \with Q_i^*[\hat{x}_i/a] ) = k_i \bigg)\\
        &=
    \Pr\Bigg( \bigcup_{ \substack{ (k_a)_{ a \in \adom(T) } \\ \sum_a k_a = k_i} }
        \bigcap_{ a } \mkern8mu
            \#_T ( \exists \check{\tup x}_i \with Q_i^*[\hat{x}_i/a]) = k_a 
        \Bigg)\\
        &=
    \sum_{ \substack{ (k_a)_{ a \in \adom(T) } \\ \sum_a k_a = k_i} }
        \Pr\bigg( 
            \bigcap_{ a } \mkern8mu
                \#_T \big( \exists \check{\tup x}_i \with Q_i^*[\hat{x}_i/a] \big) = k_a
        \bigg)\text.
    \end{aligned}
\end{equation}
Recall that $\hat{x}_i$, as a maximal variable in the connected component $Q_i^*$, appears in every atom of $Q_i^*$. Thus, the facts appearing in $Q_i^*[ \tup x_i / \tup a_i ]$ are disjoint from the ones appearing in $Q_i^*[ \tup x_i / \tup a_i' ]$ whenever $\tup a_i$ and $\tup a_i'$ disagree on the value for $\hat{x}_i$. But this means that the intersection appearing in \eqref{eq:PQEsjfCQ-decompose-k_i} ranges over independent events. Therefore,
\begin{equation}\label{eq:PQEsjfCQ-recursion}
    \Pr\big( \#_T( \exists \tup x_i \with Q_i^* ) = k_i \big)
        =\mkern-10mu
    \sum_{ \substack{ (k_a)_a \\ \sum_a k_a = k_i} }
        \prod_{ a }
        \Pr\big( 
            \#_T ( \exists \check{\tup x}_i \with Q_i^*[\hat{x}_i/a]) = k_a
        \big)\text.
\end{equation}
This can be evaluated in polynomial time, if the probabilities on the right-hand side can be evaluated in polynomial time. For all $a \in \adom(T)$, the Boolean query $\exists \check{x}_i \with Q_i^*[\hat{x}_i/a]$ again decomposes into connected components and possibly some leftover constant atoms. If there are only constant atoms left, then we are done, using the procedure for $Q_0^*$ from above. Otherwise, 
we proceed recursively on the connected components, grounding out a maximal variable in each step. Note that the number of summation terms in \eqref{eq:PQEsjfCQ-recursion} is at most 
\[
    \binom{ \size{ \adom(T) } }{ k_i } \cdot \size{ \adom(T) }^{k_i}\text,
\]
and in each of the summation terms, the product has at most $\size{ \adom(T) }$ factors. That is, the total number of subproblems we created from \eqref{eq:PQEsjfCQ-decompose-k} is polynomial in $\adom(T)$. When turned into a recursive procedure, we remove one quantifier per recursive call. As the query is fixed, this happens a constant number of times. Thus, $\Pr( \#_T Q = k )$ can be computed in polynomial time for $k \geq 1$.

The case $k = 0$ receives a special treatment, because in this case, the sum in \eqref{eq:PQEsjfCQ-decompose-k} would have infinitely many terms. However, $\#_D Q \neq 0$ in any individual instance $D$ if and only if all of $\#_D Q_0^*$ and $\#_D( \exists \tup x_i \with Q_i^* )$ for $i=1,\dots,m$ are different from $0$. Thus, going over the converse probability,
\begin{align}
    \Pr\big( \#_T Q = 0 \big) 
        &= 
    1 - \Pr\big(\#_T Q_0^* \neq 0 \text{ and }\#_T( \exists \tup x_i\with Q_i^*) \neq 0 \text{ f.\,a. }i \big) \notag\\
        &= 
    1 - \Pr( \#_T Q_0 \neq 0 ) \cdot \prod_{ i = 1 }^{ m }\Big( 1 - \Pr\big( \#_T Q_i = 0 \big)\label{eq:zero-prob-of-product} \Big)\text.
\end{align}
As $\Pr( \#_T Q_0^* \neq 0 ) = \prod_{f \in Q_0}(1 - P_{\lambda_f}(0))$, it is computable in polynomial time. For the rest, note that
\[
    \Pr\big( \#_T( \exists \tup x_i \with Q_i^* ) = 0 \big)
        =
    \prod_{ a } \Pr\big( \#_T( \exists \check{\tup x}_i \with Q_i^*[ \hat{x}_i / a ]) = 0 \big)\text,
\]
using again that the random variables $\#_T( \exists \check{\tup x}_i \with Q_i^*[ \hat{x}_i / a ])$ are independent for different $a$. This can be turned into a recursive procedure as before, which again runs in polynomial time. This concludes the proof of \cref{thm:PQEhiersjfCQeasy}.

\subsection{Beyond Hierarchical Queries}\label{app:pqe}

We first analyze the correctness and runtime of \cref{alg:inflation}.

\inflateproperties*

\begin{proof}
    The algorithm iterates over all entries of $T$ and constructs $m$ facts per entry, hence the runtime. Every fact $f$ that results in the construction of facts $f_1,\dots,f_m$ uses a relation symbol that appears in an atom of $Q$. As $Q$ contains no constant atoms, and is self-join free, this atom is unique and contains a variable. The position of this variable has different values in $f_1,\dots,f_m$ according to the construction. Hence, $T_{m,i}$ and $T_{m,j}$ are disjoint in the end for $i \neq j$, showing \refo1. Note that the instances of $T_{m,i}$ are in one-to-one correspondence with the instances from $T$ (after removing from the latter all facts whose relation symbol does not appear in the query) and that $Q$ returns the same answer on two such corresponding instances. Thus, \refo2{} follows. Finally, we obtain \refo3{} by unraveling the definition of $\#_{T^{(m)}}$, and noting that the relevant valuations of the variables of $Q^*$ partition exactly into the sets of domain elements $a_{i,j}$ with $i = 1, \dots, m$. This latter assertion is true because $Q$ has a single connected component.\footnote{Otherwise, valuations in the definition of $\smash[b]{\#_{T^{(m)}}}$ would also include such that use domain elements $a_{i,j}$ for the variables in one connected component, and elements $a_{i',j}$ with $i' \neq i$ for the variables of another component. In this case, such valuations are not counted by $\sum_{ i = 1 }^{ m } \#_{T_{m,i}} Q$.}
\end{proof}

Next, we establish the algebraic properties of the output of the algorithm.

\begin{restatable}{lemma}{sumofindepcopies}\label{lem:sum_of_indep_copies}
    Let $X$ be a random variable with values in $\NN$ and let $k \in \NN$. Suppose $X_1,X_2, \ldots $ are i.i.d random variables with $X_1 \sim X$.  Let $p_0 \coloneqq \Pr(X = 0)$. Then, there exist  $y_1, \ldots, y_k \geq 0$ such that for all $n \in \NN$ we have
    \[
        \Pr\bigg( \sum_{i = 1}^n X_i \leq k\bigg) = p_0^n + \sum_{j = 1}^k \binom{n}{j} \cdot p_0^{n - j} \cdot y_j \text.
    \]
\end{restatable}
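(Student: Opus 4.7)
The plan is to condition on the number of indices $i \in \{1, \ldots, n\}$ for which $X_i > 0$. Since each positive summand contributes at least $1$ to $\sum_{i=1}^n X_i$, the event $\{\sum_{i=1}^n X_i \leq k\}$ forces at most $k$ of the $X_i$ to be positive; all remaining ones must equal $0$. This immediately explains why the sum in the claimed identity only runs up to $j = k$ and why a power of $p_0$ factors out of each term.

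Concretely, I would define, for $j \geq 1$, the nonnegative constant
\[
    y_j \coloneqq \Pr\bigl(X_1, \ldots, X_j \geq 1 \text{ and } X_1 + \cdots + X_j \leq k\bigr),
\]
and set $y_0 \coloneqq 1$ by convention. Note that $y_j$ depends only on $X$ and $k$, not on $n$, and that $y_j = 0$ for $j > k$ because $j$ strictly positive integer-valued summands cannot sum to at most $k$. By the i.i.d.\ assumption, for any choice of a $j$-element subset $I \subseteq \{1,\ldots,n\}$, the event \enquote{the $X_i$ with $i \in I$ are all $\geq 1$, the others are all $0$, and $\sum_i X_i \leq k$} has probability $p_0^{n-j} \cdot y_j$; summing over the $\binom{n}{j}$ choices of $I$ and noting that these events are disjoint for different $I$ (since the set of zero-indices is determined) yields a probability of $\binom{n}{j} p_0^{n-j} y_j$ for the event \enquote{exactly $j$ of the $X_i$ are positive and $\sum_i X_i \leq k$}.

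The events just described, for $j = 0, 1, \ldots, n$, form a partition of the event $\{\sum_{i=1}^n X_i \leq k\}$, so
\[
    \Pr\Bigl(\sum_{i=1}^n X_i \leq k\Bigr) \;=\; \sum_{j=0}^{n} \binom{n}{j} p_0^{n-j}\, y_j \;=\; p_0^n + \sum_{j=1}^{k} \binom{n}{j} p_0^{n-j}\, y_j,
\]
where the second equality uses $y_0 = 1$ and $y_j = 0$ for $j > k$. Since each $y_j$ is a probability, $y_j \geq 0$, completing the proof.

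I do not expect a genuine obstacle: the argument is a clean application of the law of total probability together with the i.i.d.\ structure. The only minor care points are the bookkeeping at the boundaries ($j = 0$ producing the \enquote{pure $p_0^n$} term, and the observation that $y_j$ vanishes for $j > k$, which is what truncates the sum at $k$ independently of $n$) and the convention that allows the $j = 0$ term to be written uniformly within the binomial sum.
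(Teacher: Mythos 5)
Your proof is correct and follows essentially the same approach as the paper: both arguments hinge on the observation that at most $k$ of the $X_i$ can be positive, and both extract the factor $\binom{n}{j} p_0^{n-j}$ by counting which $j$ indices are nonzero. The only difference is presentational — the paper expands the probability as a sum over all tuples and regroups via multinomial coefficients to obtain an explicit formula for $y_j$, whereas you define $y_j$ directly as a probability of an event, which gives $y_j \geq 0$ (and the vanishing for $j > k$) for free.
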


\begin{proof}
    To simplify notation, we let $p_\ell \coloneqq \Pr(X = \ell)$ for all $\ell = 1, \dots, k$. Then
    \begin{align*}
        \Pr\bigg( \sum_{i = 1}^n X_i \leq k \bigg) 
        &=  \sum_{\substack{(x_1, \ldots, x_n) \in \NN^n \\ \sum_i x_i \leq k}}
            \Pr\big(X_1 = x_1, \ldots, X_n = x_n\big)\\
       &=  \sum_{\substack{(x_1, \ldots, x_n) \in \NN^n \\ \sum_i x_i \leq k}} 
            \prod_{i = 1}^n p_{x_i}
        =  \sum_{\substack{(x_1, \ldots, x_n) \in \NN^n \\ \sum_i x_i \leq k}} 
            \prod_{\ell = 0}^k p_{\ell}^{\size{\set{i\with x_i = \ell}}}
        \text.
    \end{align*}
    
    Note that the terms of the sum in the last line do not depend on the order of the $x_i$. Instead, it only matters how often every possible value $\ell$ is attained among the $x_1,\dots,x_n$. Denoting this multiplicity by $n_{\ell}$, this information is captured by a tuple $(n_0,\dots,n_k)$ which turns the condition $\sum_{ i = 1 }^{ n } x_i \leq k$ into $\sum_{ \ell = 0 }^{ k } \ell \cdot n_{\ell} \leq k$. As the multinomial coefficient $\binom{n}{n_0,\dots,n_k}$ gives exactly the number of tuples $(x_1,\dots,x_n)$ that correspond to a tuple $(n_0,\dots,n_k)$ as above, we find
    \begin{align*}
        \sum_{\substack{(x_1,\dots,x_n) \in \NN^n \\ \sum_i x_i \leq k}} \prod_{\ell = 0}^k p_\ell^{\size{\set{i\with x_i = \ell}}} 
        &= \sum_{\substack{n_0,\dots, n_k \in \NN \\
                \sum_{\ell} n_\ell = n\\
                \sum_{\ell} \ell \cdot n_\ell \leq k}} 
            \binom{n}{n_0,\dots,n_k} \cdot \prod_{\ell = 0}^k p_\ell^{n_\ell}\\
        &= \sum_{\substack{n_0,\dots, n_k \in \NN \\
            \sum_{\ell} n_\ell = n\\
            \sum_{\ell} \ell \cdot n_\ell \leq k}}
            \binom{n}{n_0} \cdot \binom{n - n_0}{n_0,\dots,n_k} \cdot p_0^{n_0} \cdot \prod_{\ell = 1}^k p_\ell^{n_\ell}
        \text.
    \end{align*} %
    The condition $\sum_{ \ell = 0 }^{ k } \ell \cdot n_\ell \leq k$ enforces that $n_0 \geq n - k$, or $n - n_0 \leq k$. Thus, setting $j \coloneqq n - n_0$, we get
    \begin{equation*}
        \smash[b]{\sum_{\substack{n_0,\ldots, n_k \in \NN \\
            \sum_{\ell} n_\ell = n\\
            \sum_{\ell} \ell n_\ell \leq k}}}
            \binom{n}{n_0} \binom{n - n_0}{n_1,\dots,n_k} p_0^{n_0} \prod_{\ell = 1}^k p_\ell ^{n_\ell}
        = \sum_{j=0}^k p_0^{n-j} \binom{n}{j}  
            \underbrace{\sum_{\substack{n_1,\ldots, n_k \in \NN \\
                \sum_{\ell} n_\ell = j\\
                \sum_{\ell} \ell n_\ell \leq k}} 
                \binom{j}{n_1,\ldots,n_k} \prod_{\ell = 1}^k p_\ell ^{n_\ell}
        }_{ \eqqcolon y_j }\text.
    \end{equation*}
    As $y_0 = \binom{0}{0,\dots,0}\cdot \prod_{\ell=1}^{k} p_{\ell}^0 = 1$, the claim follows.
\end{proof}

Note that in the previous lemma, if $X$ does not take all values from $0$ to $k$ with positive probability, some (or even all) of the $y_j$ may be zero.

As seen in \cref{exa:not-inflatable}, we usually cannot inflate the whole table $T$. Instead, we will only inflate the part $T_i$ corresponding to a single connected component $Q_i$ of $Q = Q' \wedge Q_i$.
The following lemma shows that the structure of this answer count is still quite similar to \cref{lem:sum_of_indep_copies}.

\sumofindepcopiestimesrv*

\begin{proof}
To simplify notation, we let $S \coloneqq \sum_{i = 1}^n X_i$. First, we observe
\[
    \Pr\Big( Y \cdot S \leq k\Big) 
    = 
    \Pr(Y = 0) 
    + \sum_{\ell = 1}^\infty \Pr(Y = \ell) \cdot \Pr\Big(S \leq \big\lfloor \frac{k}{\ell} \big\rfloor\Big)\text.
\] 
From \cref{lem:sum_of_indep_copies}, for every $k \in \NN$ exist non-negative numbers $y_{1,k}, \ldots, y_{k,k}$ such that for all $n \in \NN$ holds
    \[
        \Pr( S \leq k) = p_0^n + \sum_{j = 1}^k \binom{n}{j} p_0^{n - j} y_{j,k} \text.
    \]
Together, this yields
\begin{align*}
&\Pr\Big( Y \cdot S \leq k\Big)
 = 
    q_0 + \sum_{\ell = 1}^\infty q_{\ell} \cdot \Bigg(p_0^n + \sum_{j = 1}^{\lfloor \frac{k}{\ell} \rfloor} \binom{n}{j} p_0^{n - j} y_{j, \lfloor \frac{k}{\ell} \rfloor} \Bigg) \\
    &= q_0 + (1 - q_o) \cdot p_0^n + \sum_{\ell = 1}^k  \sum_{j = 1}^{\lfloor \frac{k}{\ell} \rfloor} q_{\ell} \cdot \binom{n}{j} p_0^{n - j} y_{j, \lfloor \frac{k}{\ell} \rfloor} \\
    &= q_0 + (1 - q_o) \cdot p_0^n + \sum_{j = 1}^{k} \binom{n}{j} p_0^{n - j} \underbrace{
    \sum_{\ell = 1}^{\lfloor \frac{k}{j} \rfloor} 
    q_{\ell} \cdot  y_{j, \lfloor \frac{k}{\ell} \rfloor}}_{\eqqcolon z_j}\text,
\end{align*}
which completes the proof.
\end{proof}

\section{Example for the Reduction from Section~\ref{ssec:intractable}}\label{app:illustration}

In this section, we illustrate the inner workings of the reduction described in \cref{ssec:intractable} with an example. 

\subsection{Setup}
Let $\rep$ be the p-TIRS we use, where we assume that $\zeroPr( \rep ) \cap (0,1) \neq \emptyset$. Consider the following query
\[
    Q = \exists x,y,z \with R(a) \wedge S(z,a) \wedge U(x) \wedge V(x,y) \wedge W(y)\text.
\] 
This query has two components $Q_1 = \exists z\with S(z,a)$ and $Q_2 = \exists x,y \with U(x)\wedge V(x,y) \wedge W(y)$, and a constant part $Q_0 = R(a)$. \emph{Our goal in this section is to solve $\PQE_{\rep}(Q,0)$ in polynomial time by only using an oracle for the problem $\PQE_{\rep}(Q,2)$ (i.e. for $k=2$).} As an example, we work with the following input table $T$ (leaving the concrete parameters unspecified):

\begin{center}\hfill%
    \begin{tabular}[t]{cc}\toprule
         \multicolumn{2}{c}{$R$}\\\midrule
         $a$ & $\lambda_{R(a)}$\\\bottomrule
    \end{tabular}
    \hfill
    \begin{tabular}[t]{cc}\toprule
         \multicolumn{2}{c}{$S$}\\\midrule
         $(a,a)$ & $\lambda_{S(a,a)}$\\
         $(b,a)$ & $\lambda_{S(b,a)}$\\\bottomrule
    \end{tabular}
    \hfill
    \begin{tabular}[t]{cc}\toprule
         \multicolumn{2}{c}{$U$}\\\midrule
         $a$ & $\lambda_{U(a)}$\\
         $b$ & $\lambda_{U(b)}$\\\bottomrule
    \end{tabular}
    \hfill
    \begin{tabular}[t]{cc}\toprule
         \multicolumn{2}{c}{$V$}\\\midrule
         $a,b$ & $\lambda_{V(a,b)}$\\
         $b,b$ & $\lambda_{V(b,b)}$\\\bottomrule
    \end{tabular}
    \hfill
    \begin{tabular}[t]{cc}\toprule
         \multicolumn{2}{c}{$W$}\\\midrule
         $a$ & $\lambda_{W(a)}$\\\bottomrule
    \end{tabular}
    \hfill\mbox{}
\end{center}

Now according to the structure of $Q$, we know that
\[
    \#_T Q = \#_T Q_0 \cdot \#_T Q_1 \cdot \#_T Q_2\text,
\]
and, hence, with using independence, that
\begin{align}
    \Pr( \#_T Q = 0 ) 
    &= \Pr( \#_T Q_0 = 0 \text{ or } \#_T Q_1 = 0 \text{ or } \#_T Q_2 = 0 )\notag{}\\
    &= 1 - \big( 1 - \Pr( \#_T Q_0 = 0 ) \big)\big( 1 - \Pr( \#_T Q_1 = 0 ) \big)\big( 1 - \Pr( \#_T Q_2 = 0 ) \big)\text.\label{eq:apxalgo2ex}
\end{align}

The value $\Pr( \#_T Q_0 = 0 )$ is easy to to determine, as it is just $P_{\lambda_{R(a)}}(0)$. For both of the other two factors, we will have to use \cref{alg:comp}.

\subsection{Execution of Algorithm~\ref{alg:comp}}
We now follow the steps of \cref{alg:comp} for $Q_1 = \exists z \with S(z,a)$. The procedure works analogously for $Q_2$.

Suppose that $\lambda$ is an arbitrary parameter in $\Lambda_{\rep}$ with $P_{\lambda}(0) > 0$. The auxiliary table belonging to the canonical database for the remainder of the query is given by
    \begin{center}\hfill%
        \begin{tabular}[t]{cc}\toprule
             \multicolumn{2}{c}{$R$}\\\midrule
             $a$ & $\lambda$\\\bottomrule
        \end{tabular}
        \hfill
        \begin{tabular}[t]{cc}\toprule
             \multicolumn{2}{c}{$U$}\\\midrule
             $x$ & $\lambda$\\\bottomrule
        \end{tabular}
        \hfill
        \begin{tabular}[t]{cc}\toprule
             \multicolumn{2}{c}{$V$}\\\midrule
             $x,y$ & $\lambda$\\\bottomrule
        \end{tabular}
        \hfill
        \begin{tabular}[t]{cc}\toprule
             \multicolumn{2}{c}{$W$}\\\midrule
             $y$ & $\lambda$\\\bottomrule
        \end{tabular}
        \hfill\mbox{}
    \end{center}

Next, we calculate $q_0 \coloneqq \Pr( \#_{T'} Q' = 0 )$. This is easy, since $\#_{T'} Q' \neq 0$ if and only if all four facts depicted above are present with positive multiplicity. Hence, $q_0 = 1 - \big( 1 - P_{\lambda}(0) \big)^4$. The algorithm then defines $g(0) \coloneqq 1 - q_0$.

In the central for loop, we let $n$ take values from $1$ up to $2k+1 = 5$. For every such $n$, we compute the inflation $T_1^{(n)} \coloneqq \inflate_{Q_1}( T_1, n )$, recalling that $T_1$ is the restriction of $T$ to the facts relevant for $Q_1$ (that is, the $S$-facts). For example, the inflation $T_1^{(3)}$ is given as $T_1^{(3)} = T_{3,1} \uplus T_{3,2} \uplus T_{3,3}$ where $T_{3,i}$ is the following table:
    \begin{center}
        \hfill%
        \begin{tabular}[h]{cc}\toprule
            \multicolumn{2}{c}{$S$}\\\midrule
            $a_i,a$ & $\lambda_{S(a,a)}$\\
            $b_i,a$ & $\lambda_{S(b,a)}$\\\bottomrule
        \end{tabular}
        \hfill\mbox{}
    \end{center}
    Note how the values of the first attributes in the tuples are replaced with the new constants, while the values of the second attribute remain unchanged, depending on which positions in the $S$-atom of $Q_1$ are occupied by variables.
    
    In this situation, the oracle is used to obtain $\Pr\big( \#_{T' \cup T_1^{(n)}}  Q \leq k \big)$, and we set $g( n ) \coloneqq \Pr\big( \#_{T' \cup T_1^{(n)}}  Q \leq k \big) - q_0$ (for all values of $n$ in the aforementioned range $1,2, \dots, 5$).

We know that $\Pr\big( \#_T Q_1 = 0 \big) = 0$ if and only if $g( k+1 ) = g( 3 ) = 0$ (cf. the proof of \cref{lem:algo2}). Thus, we check this special case separately. Before we continue, let us inspect $g(n)$ in more detail. We have
    \begin{align*}
        g(n) & = \Pr\big( \#_{T' \cup T_1^{(n)} } Q \leq 2 \big) - q_0\\
            & = \Pr\big( \#_{T'} Q' \cdot \#_{T_1^{(n)}} Q_1 \leq 2 \big) - q_0\\
            & = -q_0 + \sum_{\substack{j_1,j_2 \in \NN\\j_1\cdot j_2 \leq 2}} \Pr\big( \#_{T'} Q' = j_1 \text{ and } \#_{T_1^{(n)}} Q_1 = j_2 \big)\text.
    \end{align*}
    Writing $q_j \coloneqq \Pr\big( \#_{T'} Q' = j \big)$, we continue
    \begin{align*}
        g(n) &= \begin{multlined}[t]
            -q_0 + \sum_{j_2 \in \NN } q_0 \cdot \Pr\big( \#_{T_1^{(n)}} Q_1 = j_2 \big) + \sum_{j_1 \in \NN_+ } q_{j_1} \cdot \Pr\big( \#_{T_1^{(n)}} Q_1 = 0 \big)\\
            + q_1 \cdot \Pr \big( \#_{T_1^{(n)}} Q_1 = 1 \big)
            + q_1 \cdot \Pr \big( \#_{T_1^{(n)}} Q_1 = 2 \big)
            + q_2 \cdot \Pr \big( \#_{T_1^{(n)}} Q_1 = 1 \big)
            \end{multlined}\\
            &= \begin{multlined}[t]\big(1-q_0\big) \cdot \Pr\big( \#_{T_1^{(n)}} Q_1 = 0 \big)\\
            + q_1 \cdot \Pr \big( \#_{T_1^{(n)}} Q_1 = 1 \big)
            + q_1 \cdot \Pr \big( \#_{T_1^{(n)}} Q_1 = 2 \big)
            + q_2 \cdot \Pr \big( \#_{T_1^{(n)}} Q_1 = 1 \big)
            \end{multlined}
    \end{align*}
    Now, writing $p_j \coloneqq \Pr\big( \#_{T_1} Q_1 = j \big)$, we have
    \begin{align*}
       \Pr\big( \#_{T_1^{(n)}} Q_1 = 0 \big) &= p_0^n\text,\\ 
       \Pr\big( \#_{T_1^{(n)}} Q_1 = 1 \big) &= {\textstyle\binom{n}{1}} p_1 p_0^{(n-1)}\text{, and}\\
       \Pr\big( \#_{T_1^{(n)}} Q_1 = 2 \big) &= {\textstyle\binom{n}{2}} p_1^2 p_0^{(n-2)}\text.
    \end{align*}
    Thus, 
    \begin{align*}
        g(n) &= \big( 1 - q_0 \big) \cdot p_0^n + q_1 \cdot {\textstyle\binom{n}1} \cdot p_1 \cdot p_0^{n-1} + q_1 \cdot \Big( {\textstyle\binom{n}2 \cdot p_1^2 \cdot p_0^{n-1} } \Big) + q_2 \cdot {\textstyle\binom{n}1} \cdot p_1 \cdot p_0^{n-1}\\
            &= p_0^n \cdot {\textstyle\binom{n}0}  \big(1-q_0\big) 
            + p_0^{(n-1)} \cdot {\textstyle\binom{n}1} \big( q_1 p_1 + q_1 p_2 + q_2 p_1 \big) + p_0^{n-2} \cdot {\textstyle\binom{n}2} q_1 p_1^2\text,
    \end{align*}
    which is the shape of $g(n)$ that we know from \eqref{eq:g} (below \cref{lem:sum_of_indep_copies_times_rv}).
    
    We continue by defining 
        \begin{align*}
            h_4(x) &\coloneqq g(4+x) \cdot g(4-x) \quad\text{and}\\
            h_5(x) &\coloneqq g(5+x) \cdot g(5-x)
        \end{align*}
        for all $x = 0,1,2,3,4$ (since $2k = 4$). Using our expression for $g(n)$, we get
        \begin{align*}
            h_4(x) = {}&
                \Big( p_0^{4+x} {\textstyle\binom{4+x}{0}} \big( 1 - q_ 0 \big)
                + p_0^{3+x} {\textstyle\binom{4+x}{1}} \big( q_1p_1 + q_1p_2 + q_2p_1 \big)
                + p_0^{2+x} {\textstyle\binom{4+x}{2}} \big( q_1p_1^2 \big) \Big)\\
                &{} \cdot \Big( p_0^{4-x} {\textstyle\binom{4-x}{0}} \big( 1 - q_ 0 \big)
                + p_0^{3-x} {\textstyle\binom{4-x}{1}} \big( q_1p_1 + q_1p_2 + q_2p_1 \big)
                + p_0^{2-x} {\textstyle\binom{4-x}{2}} \big( q_1p_1^2 \big) \Big)\text.
        \end{align*}
        The function $h_5$ has the same shape, but with $i \pm x$ replaced by $i+1 \pm x$. When multiplying this out, the occurrences of $x$ in exponents vanish. Thus, both $h_4$ and $h_5$ are polynomials in $x$. Moreover, they have the same, even degree.
        
        Let us have a look at two examples. If $p_1$ and $q_1$ are both non-zero, then $h_4$ and $h_5$ have the maximum possible degree $4$. In this case, the leading coefficient $\lc(h_4)$ of $h_4$ is $p_0^4 \cdot (-1)^2 \cdot \big( \tfrac12 \big){}^2 \cdot (q_1p_1^2)^2$ (since $\binom{4+x}{2} = \frac{(4+x)(3+x)}{2}$ and $\binom{4-x}{2} = \frac{(4-x)(3-x)}{2}$) and, similarly, $\lc(h_5) = p_0^6 \cdot (-1)^2 \cdot \big( \frac12 \big){}^2 \cdot ( q_1p_1 )^2$. Here we see already that $\sqrt{ \lc(h_5) / \lc(h_4) } = p_0$, as desired.
    
        If instead we have, say, $q_1 = 0$ but $p_1,q_2 \neq 0$, then both $h_4$ and $h_5$ become polynomials of degree $2$, and their leading coefficients contain $p_0^6$ and $p_0^8$, respectively.
        
        In the next subsection, we illustrate how we computationally find the leading coefficients, and thus, $p_0$, using the finite differences method (cf.\ \cite[Chapter 4]{Hildebrand1987}). After this has been concluded, the value $\Pr\big( \#_T Q_2 = 0 \big)$ is calculated in the exact same fashion, and we can use \eqref{eq:apxalgo2ex} to calculate $\Pr\big( \#_T Q = 0 \big)$.

\subsection{Finding \texorpdfstring{$p_0$}{p\textunderscore0} Using Finite Differences}

Let $h_4$ and $h_5$ be as described above (for component $Q_1$). Recall that $k = 2$. We start with $h_4(0), \dots, h_4(4)$ (because $2k = 4$), and iteratively compute differences, yielding $\Delta h_4$, $\Delta^2 h_4$, $\Delta^3 h_4$ and $\Delta^4 h_4$. Our goal is to find the largest $\ell$ for which $\Delta^{\ell} h_4$ is constant but non-zero.

\begin{figure}[H]
    \centering%
    \begin{subfigure}{.5\textwidth}\raggedright%
    \begin{tikzpicture}
        \matrix (small) [matrix of math nodes,nodes={minimum width=.65cm,minimum height=.5cm,font=\small},nodes in empty cells,column sep=.3cm,row sep=-.1cm]
        {
                                            &[-.3cm] \scriptstyle\color{gray} \Delta^0 h_4
                                                   & \scriptstyle\color{gray} \Delta^1 h_4
                                                        & \scriptstyle\color{gray} \Delta^2 h_4
                                                            & \scriptstyle\color{gray} \Delta^3 h_4
                                                                & \scriptstyle\color{gray} \Delta^4 h_4\\
            \scriptstyle\color{gray} h_4(0) & 144   &     &     &     &     \\
                                            &       & -4  &     &     &     \\
            \scriptstyle\color{gray} h_4(1) & 140   &     & -8  &     &     \\
                                            &       & -12 &     & 0   &     \\
            \scriptstyle\color{gray} h_4(2) & 128   &     & -8  &     & 0   \\
                                            &       & -20 &     & 0   &     \\
            \scriptstyle\color{gray} h_4(3) & 108   &     & -8  &     &     \\
                                            &       & -28 &     &     &     \\
            \scriptstyle\color{gray} h_4(4) & 80    &     &     &     &     \\
        };
        \begin{scope}[on background layer]
            \node[fit=(small-4-4)(small-8-4),rounded corners,fill=black!8!white] {};
            \node[fit=(small-5-5)(small-7-5),rounded corners,fill=black!5!white] {};
            \foreach \i/\j/\k/\l in {2/2/3/3,4/2/3/3,4/2/5/3,6/2/5/3,6/2/7/3,8/2/7/3,8/2/9/3,10/2/9/3} {
                \draw[lightgray,thick,dotted] (small-\i-\j) to (small-\k-\l);
            }
            \foreach \i/\j/\k/\l in {3/3/4/4,5/3/4/4,5/3/6/4,7/3/6/4,7/3/8/4,9/3/8/4} {
                \draw[lightgray,thick,dotted] (small-\i-\j) to (small-\k-\l);
            }
            \foreach \i/\j/\k/\l in {4/4/5/5,6/4/5/5,6/4/7/5,8/4/7/5} {
                \draw[lightgray,thick,dotted] (small-\i-\j) to (small-\k-\l);
            }
            \foreach \i/\j/\k/\l in {5/5/6/6,7/5/6/6} {
                \draw[lightgray,thick,dotted] (small-\i-\j) to (small-\k-\l);
            }
            \node[fit=(small-4-4),draw,thick,inner sep=0pt] {};
        \end{scope}
    \end{tikzpicture}
    \end{subfigure}%
    \begin{subfigure}{.5\textwidth}\raggedleft%
    \begin{tikzpicture}
        \matrix (small) [matrix of math nodes,nodes={minimum width=.65cm,minimum height=.5cm,font=\small},nodes in empty cells,column sep=.3cm,row sep=-.1cm]
        {
                                            &[-.3cm] \scriptstyle\color{gray} \Delta^0 h_5
                                                   & \scriptstyle\color{gray} \Delta^1 h_5
                                                        & \scriptstyle\color{gray} \Delta^2 h_5
                                                            & \scriptstyle\color{gray} \Delta^3 h_5
                                                                & \scriptstyle\color{gray} \Delta^4 h_5\\
            \scriptstyle\color{gray} h_5(0) & 49    &     &     &     &     \\
                                            &       & -1  &     &     &     \\
            \scriptstyle\color{gray} h_5(1) & 48    &     & -2  &     &     \\
                                            &       & -3  &     & 0   &     \\
            \scriptstyle\color{gray} h_5(2) & 45    &     & -2  &     & 0   \\
                                            &       & -5  &     & 0   &     \\
            \scriptstyle\color{gray} h_5(3) & 40    &     & -2  &     &     \\
                                            &       & -7  &     &     &     \\
            \scriptstyle\color{gray} h_5(4) & 33    &     &     &     &     \\
        };
        \begin{scope}[on background layer]
            \node[fit=(small-4-4)(small-8-4),rounded corners,fill=black!8!white] {};
            \node[fit=(small-5-5)(small-7-5),rounded corners,fill=black!5!white] {};
            \foreach \i/\j/\k/\l in {2/2/3/3,4/2/3/3,4/2/5/3,6/2/5/3,6/2/7/3,8/2/7/3,8/2/9/3,10/2/9/3} {
                \draw[lightgray,thick,dotted] (small-\i-\j) to (small-\k-\l);
            }
            \foreach \i/\j/\k/\l in {3/3/4/4,5/3/4/4,5/3/6/4,7/3/6/4,7/3/8/4,9/3/8/4} {
                \draw[lightgray,thick,dotted] (small-\i-\j) to (small-\k-\l);
            }
            \foreach \i/\j/\k/\l in {4/4/5/5,6/4/5/5,6/4/7/5,8/4/7/5} {
                \draw[lightgray,thick,dotted] (small-\i-\j) to (small-\k-\l);
            }
            \foreach \i/\j/\k/\l in {5/5/6/6,7/5/6/6} {
                \draw[lightgray,thick,dotted] (small-\i-\j) to (small-\k-\l);
            }
            \node[fit=(small-4-4),draw,thick,inner sep=0pt] {};
        \end{scope}
    \end{tikzpicture}
    \end{subfigure}
    \caption{Differences tables for $h_4$ (left) and $h_5$ (right) for the case $(q_0,q_1,q_2) = \big( \tfrac12, 0, \tfrac12 \big)$ and $(p_0,p_1,p_2) = \big( \tfrac12, \tfrac14, \tfrac14\big)$. For easier reading, all entries for are shown as multiples of $16384^{-1} = 2^{-14}$, for example, $h_4(0) = \tfrac{144}{16384}$. The shaded columns indicate the columns in which the values become constant for the first time (darker shade), and $0$ for the first time (lighter shade). The framed cell is the target cell we find in the algorithm. The algorithm returns $\sqrt{(-2)/(-8)} = \tfrac12 = p_0$.}
    \label{fig:findiffsmall}
\end{figure}

Again, consider two examples. First, suppose that $(q_0,q_1,q_2) = \big( \tfrac12, 0, \tfrac12 \big)$ and $(p_0,p_1,p_2) = \big( \tfrac12, \tfrac14, \tfrac14\big)$ and recall that $p_0$ is actually the unknown value that we want to find. The algorithm only has access to $p_0$ through $h_4$. We have already argued that, $h_4$ has degree $2$ then. Thus, when calculating differences, $\Delta^2 h_4$ will be constant (and non-zero) and $\Delta^3 h_4$ and $\Delta^4 h_4$ will be zero. The constant non-zero value we find for $\Delta^2 h_4$ will be equal to the leading coefficient of $h_4$ times $2!$ by the properties of $\Delta$ (cf.\ \cref{ssec:intractable}). This is illustrated in \cref{fig:findiffsmall}.\footnote{The even degree justifies skipping every second column when looking for the constant non-zero one.}

\begin{figure}[H]
    \centering%
    \begin{subfigure}{.5\textwidth}\raggedright%
    \begin{tikzpicture}
        \matrix (small) [matrix of math nodes,nodes={minimum width=.65cm,minimum height=.5cm,font=\small},nodes in empty cells,column sep=.3cm,row sep=-.1cm]
        {
                                            &[-.3cm] \scriptstyle\color{gray} \Delta^0 h_4
                                                   & \scriptstyle\color{gray} \Delta^1 h_4
                                                        & \scriptstyle\color{gray} \Delta^2 h_4
                                                            & \scriptstyle\color{gray} \Delta^3 h_4
                                                                & \scriptstyle\color{gray} \Delta^4 h_4\\
            \scriptstyle\color{gray} h_4(0) & 2704  &      &       &     &     \\
                                            &       & -120 &       &     &     \\
            \scriptstyle\color{gray} h_4(1) & 2584  &      & -228  &     &     \\
                                            &       & -348 &       & 36  &     \\
            \scriptstyle\color{gray} h_4(2) & 2236  &      & -192  &     & 24  \\
                                            &       & -540 &       & 60  &     \\
            \scriptstyle\color{gray} h_4(3) & 1696  &      & -132  &     &     \\
                                            &       & -672 &       &     &     \\
            \scriptstyle\color{gray} h_4(4) & 1024  &      &       &     &     \\
        };

        \begin{scope}[on background layer]
            \node[fit=(small-6-6),rounded corners,fill=black!8!white] {};
            \foreach \i/\j/\k/\l in {2/2/3/3,4/2/3/3,4/2/5/3,6/2/5/3,6/2/7/3,8/2/7/3,8/2/9/3,10/2/9/3} {
                \draw[lightgray,thick,dotted] (small-\i-\j) to (small-\k-\l);
            }
            \foreach \i/\j/\k/\l in {3/3/4/4,5/3/4/4,5/3/6/4,7/3/6/4,7/3/8/4,9/3/8/4} {
                \draw[lightgray,thick,dotted] (small-\i-\j) to (small-\k-\l);
            }
            \foreach \i/\j/\k/\l in {4/4/5/5,6/4/5/5,6/4/7/5,8/4/7/5} {
                \draw[lightgray,thick,dotted] (small-\i-\j) to (small-\k-\l);
            }
            \foreach \i/\j/\k/\l in {5/5/6/6,7/5/6/6} {
                \draw[lightgray,thick,dotted] (small-\i-\j) to (small-\k-\l);
            }
            \node[fit=(small-6-6),draw,thick,inner sep=0pt] {};
        \end{scope}
    \end{tikzpicture}
    \end{subfigure}%
    \begin{subfigure}{.5\textwidth}\raggedleft%
    \begin{tikzpicture}
        \matrix (small) [matrix of math nodes,nodes={minimum width=.65cm,minimum height=.5cm,font=\small},nodes in empty cells,column sep=.3cm,row sep=-.1cm]
        {
                                            &[-.3cm] \scriptstyle\color{gray} \Delta^0 h_5
                                                   & \scriptstyle\color{gray} \Delta^1 h_5
                                                        & \scriptstyle\color{gray} \Delta^2 h_5
                                                            & \scriptstyle\color{gray} \Delta^3 h_5
                                                                & \scriptstyle\color{gray} \Delta^4 h_5\\
            \scriptstyle\color{gray} h_5(0) & 1156   &      &      &     &     \\
                                            &        & -38  &      &     &     \\
            \scriptstyle\color{gray} h_5(1) & 1118   &      & -73  &     &     \\
                                            &        & -111 &      & 9   &     \\
            \scriptstyle\color{gray} h_5(2) & 1107   &      & -64  &     & 6   \\
                                            &        & -175 &      & 15  &     \\
            \scriptstyle\color{gray} h_5(3) & 832    &      & -49  &     &     \\
                                            &        & -224 &      &     &     \\
            \scriptstyle\color{gray} h_5(4) & 608    &      &      &     &     \\
        };
        \begin{scope}[on background layer]
            \node[fit=(small-6-6),rounded corners,fill=black!8!white] {};
            \foreach \i/\j/\k/\l in {2/2/3/3,4/2/3/3,4/2/5/3,6/2/5/3,6/2/7/3,8/2/7/3,8/2/9/3,10/2/9/3} {
                \draw[lightgray,thick,dotted] (small-\i-\j) to (small-\k-\l);
            }
            \foreach \i/\j/\k/\l in {3/3/4/4,5/3/4/4,5/3/6/4,7/3/6/4,7/3/8/4,9/3/8/4} {
                \draw[lightgray,thick,dotted] (small-\i-\j) to (small-\k-\l);
            }
            \foreach \i/\j/\k/\l in {4/4/5/5,6/4/5/5,6/4/7/5,8/4/7/5} {
                \draw[lightgray,thick,dotted] (small-\i-\j) to (small-\k-\l);
            }
            \foreach \i/\j/\k/\l in {5/5/6/6,7/5/6/6} {
                \draw[lightgray,thick,dotted] (small-\i-\j) to (small-\k-\l);
            }
            \node[fit=(small-6-6),draw,thick,inner sep=0pt] {};
        \end{scope}
    \end{tikzpicture}
    \end{subfigure}
    \caption{Differences tables for $h_4$ (left) and $h_5$ (right) for the case $(q_0,q_1,q_2) = \big( \tfrac12, \tfrac12, 0 \big)$ and $(p_0,p_1,p_2) = \big( \tfrac12, \tfrac14, \tfrac14\big)$. For easier reading, all entries for are shown as multiples of $65536^{-1} = 2^{-16}$, for example, $h_4(0) = \tfrac{2704}{65536}$. The algorithm returns $\sqrt{6/24} = \tfrac12 = p_0$.}
    \label{fig:findiffbig}
\end{figure}

Our second example (see \cref{fig:findiffbig}) illustrates the setup $(q_0,q_1,q_2) = \big( \tfrac12, \tfrac12, 0 \big)$ and $(p_0,p_1,p_2) = \big( \tfrac12, \tfrac14, \tfrac14\big)$. This is one of the cases where $h_4$ and $h_5$ have the maximum possible degree $4$. Here (when seen from the end), there are no zero columns, and the value we are looking for is contained in the very last cell.

\clearpage

\footnotesize

\end{document}